\newif\ifanonymous
\newif\ifdraft

\anonymousfalse
\draftfalse
\InputIfFileExists{localflags}{}

\documentclass[ % \ifdraft{draft}\fi,
    % compsoc,  % not sure about these two
    conference 
% a4paper, 10pt, times
]{IEEEtran}

\usepackage{versions}

\excludeversion{conf}
\includeversion{full}
% \markversion{full}
\includeversion{notextended}

%This is necessary since amsthm defines the proof environment and
% the sig-alternate style does not like that very much.
% delete if not needed anymore
% \let\proof\relax 
% \let\endproof\relax
\usepackage{amsthm} 
\usepackage{amsmath}
\usepackage{amssymb}
\usepackage{stmaryrd}
\usepackage{mathabx} %bigtimes
\usepackage[utf8]{inputenc}
\usepackage{xspace}
\usepackage{proof}
\usepackage{color}
\usepackage{paralist}
\usepackage[inline]{enumitem}
\usepackage{thm-restate}
\usepackage{booktabs} %nice tables

\usepackage{hyperref}
\usepackage{msc}
\usepackage{xparse} %commands with two optional arguments

\usepackage{listings} %applied pi listings
\usepackage{syntax} %for grammar specification
% \usepackage{enumitem} % to interrupt and resume an itemize
% \usepackage{subfig} %For subfigs, which saves us space
% \usepackage{graphicx} %to be able to crop the pdf in latex
% \usepackage[para]{threeparttable} % for footnotes in the table
% \usepackage{crypto/protocol} % Protocol descriptions
% Found this here 
% https://github.com/amaloz/crypto-latex

% \usepackage{xr} %referencing lemmas in full version
% \externaldocument[F-]{uavfor-long}

\usepackage[
    backend=biber,
    % style=authoryear-icomp,
    sortlocale=en_GB,
    natbib=true,
    url=true, 
    doi=false,
    eprint=false
]{biblatex} 
\addbibresource{references.bib}

\usepackage{tikz}
\usetikzlibrary{tikzmark}
\usetikzlibrary{decorations.pathreplacing,calc}

\AtEveryBibitem{% Clean up the bibtex rather than editing it
 \clearlist{address}
 \clearfield{date}
 \clearfield{eprint}
 \clearfield{isbn}
 \clearfield{issn}
 \clearlist{location}
 \clearfield{month}
 \ifentrytype{book}{
  \clearname{editor}
  }{}
\ifentrytype{inproceedings}{
  \clearname{editor}
  \clearfield{volume}
  \clearfield{url}
  \clearfield{pages}
  \clearlist{publisher}
  \clearfield{series}
  \clearlist{organization}
}{}
\ifentrytype{incollection}{
  \clearname{editor}
  \clearfield{volume}
  \clearfield{url}
  \clearlist{publisher}
  \clearlist{organization}
}{}
\ifentrytype{proceedings}{
 \clearname{editor}
 \clearfield{volume}
 \clearfield{url}
}{}
\ifentrytype{article}{
 \clearfield{url}
 \clearfield{pages}
 % \clearlist{publisher}
 % \clearfield{series}
 % \clearfield{volume}
}{}
}

\lstdefinelanguage{apip}{
  morekeywords=[1]{
    %%% KEYWORDS
	out, in, if, then, else, event, insert, delete, lookup, as, in, lock, unlock, let
    },
  sensitive=true,
  morecomment=[l]{\#},
  morecomment=[l]{//},
  morestring=[b]',
  morestring=[s]{`}{'},
  % morestring=[s]{[}{]},
}

\lstdefinestyle{apip}{
    language={apip}, 
	stringstyle=\rm\textup,
	keywordstyle=[1]\sf,
 	keywordstyle=[2]\rm\upshape,
	literate={||}{{$\mid$}}1
	{new}{{$\nu$}}1
	{<}{{$\langle$}}1
	{>}{{$\rangle$}}1
	{(}{{$($}}1
	{)}{{$)$}}1
	{:=}{{$\defeq$}}2
}

\lstset{style=apip,
	mathescape=true, 
	commentstyle=\rm\itshape,
	showstringspaces=false,
    frameround=ftff,
    % frame=single,
	captionpos=b,
	columns=flexible,
escapeinside={(*}{*)},
basicstyle=\it\small,	
% escapechar=!,
tabsize=2,
breaklines=true,
showstringspaces=true,
emph={label}
} 

% vim: ft=tex

% "is defined as", http://tex.stackexchange.com/questions/4216/how-to-typeset-correctly
\makeatletter
\DeclareRobustCommand{\defeq}{\mathrel{\rlap{%
  \raisebox{0.3ex}{$\m@th\cdot$}}%
  \raisebox{-0.3ex}{$\m@th\cdot$}}%
  =}
\DeclareRobustCommand{\eqdef}{=\mathrel{\rlap{%
  \raisebox{0.3ex}{$\m@th\cdot$}}%
  \raisebox{-0.3ex}{$\m@th\cdot$}}%
  }
\makeatother

\newcommand{\set}[1]{\{#1\}}

\newtheorem{definition}{Definition}

\newtheorem{lemma}{Lemma}

\newtheorem{theorem}{Theorem}
\newtheorem{example}{Example}

\newtheoremstyle{mycase}{}{}{}{}{\bf}{.}{.5em}{\thmnote{#1:}~\normalfont
  #3}
\theoremstyle{mycase}

\numberwithin{subcase}{mycase}

\newtheoremstyle{component}{}{}{}{}{\itshape}{.}{.5em}{\thmnote{#3}#1}
\theoremstyle{component}

\DeclareMathOperator*{\dom}{\Dom}

\newcommand{\ledot}{\mathrel{\ooalign{\hss\raise.200ex\hbox{$\cdot$}\hss\cr$\le$}}}
\newcommand{\gedot}{\mathrel{\ooalign{\hss\raise.200ex\hbox{$\cdot$}\hss\cr$\ge$}}}

\newcommand\setN{\mathbb N}

%multiset operators

\newcommand{\mcup}{\ensuremath{\cup^{\#}\xspace}}

\newcommand{\calF}{\ensuremath{\mathcal{F}}\xspace}

\newcommand{\calL}{\ensuremath{\mathcal{L}}\xspace}
\newcommand{\calM}{\ensuremath{\mathcal{M}}\xspace}
\newcommand\calP{\ensuremath{\mathcal{P}}\xspace}

\newcommand\calS{\ensuremath{\mathcal{S}}\xspace}

\newcommand\calX{\ensuremath{\mathcal{X}}\xspace}

%*******************************************************
% Specific stuff
%*******************************************************

% frequent functions and constants

\newcommand{\sig}{\ensuremath{\mathit{sig}}\xspace}

\newcommand{\pk}{\ensuremath{\mathit{pk}}\xspace}
\newcommand{\sk}{\ensuremath{\mathit{sk}}\xspace}

\newcommand{\pair}[2]{\ensuremath{\langle #1, #2 \rangle}}
\newcommand{\fst}{\ensuremath{\mathsf{fst}}}
\newcommand{\snd}{\ensuremath{\mathsf{snd}}}
\newcommand{\etal}{\textit{et al.}}

%%%% Meta-commands for font categories
\newcommand{\mathcmd}[1]{{\normalfont\ensuremath{#1}}\xspace}

\newcommand{\mathname}[1]{\mathcmd{\text{\textrm{#1}}}}

\newcommand{\mathfun}[1]{\mathcmd{\mathit{#1}}}

\newcommand{\mathvalue}[1]{\mathcmd{\mathit{#1}}}

\newcommand{\mathset}[1]{\mathname{#1}}

\newcommand{\textop}[1]{\relax\ifmmode\mathop{\text{#1}}\else\text{#1}\fi}
 
%*******************************************************
% Hyphenation
%*******************************************************
\hyphenation{In-co-er-ci-bi-li-ty in-co-er-ci-ble un-e-co-no-mic -eff-ec-ti-ve po-ly-no-mial-ti-me sche-me}
 
\newcommand{\trans}[1]{\ensuremath{{\xrightarrow{#1}}}\xspace}
\newcommand{\pitrans}[1]{\ensuremath{\stackrel{#1}{\Longrightarrow}}\xspace}

%* sorts
\newcommand{\msgsort}{\ensuremath{\mathit{msg}}\xspace}
\newcommand{\freshsort}{\ensuremath{\mathit{fresh}}\xspace}
\newcommand{\pubsort}{\ensuremath{\mathit{pub}}\xspace}
\newcommand{\tempsort}{\ensuremath{\mathit{temp}}\xspace}

% sets of names and variables
\newcommand{\PN}{\ensuremath{\mathit{PN}}\xspace}
\newcommand{\FN}{\ensuremath{\mathit{FN}}\xspace}
\newcommand{\Vars}{\ensuremath{\mathcal{V}}\xspace}

% signature and eq theory
\newcommand{\Sign}{\ensuremath{\Sigma}\xspace}
\newcommand{\ET}{\ensuremath{E}\xspace}
\newcommand{\Terms}{\ensuremath{\mathset{Terms}}\xspace}
\newcommand{\Mess}{\ensuremath{\mathcal{M}}\xspace}
\newcommand{\FSign}{\ensuremath{\Sigma_\mathit{fact}}\xspace}

\newcommand{\GroundFacts}{\ensuremath{\mathcal{G}}\xspace}

\newcommand{\Subst}{\ensuremath{\sigma}\xspace}

%APIP semantics
\newcommand{\Names}{\ensuremath{\calX}\xspace}
\newcommand{\StoreA}{\ensuremath{\calS}\xspace}
\newcommand{\Processes}{\ensuremath{\calP}\xspace}
\newcommand{\ActiveLocks}{\ensuremath{\calL}\xspace}

 %Subst .. see above

\newcommand{\pr}{\ensuremath{\mathit{pr}}\xspace}
\newcommand{\Pred}{\ensuremath{\mathit{Pred}}\xspace}

% commands in apip
\newcommand{\pout}{\ensuremath{\mathsf{out}}\xspace}
\newcommand{\pin}{\ensuremath{\mathsf{in}}\xspace}

\newcommand{\pevent}{\ensuremath{\mathsf{event}}\xspace}

% MSR rules 
\newcommand{\msrA}{\mathrel{-\hspace{-2pt}[}}
\newcommand{\msrB}{\mathrel{]\hspace{-4pt}\to}}
\newcommand{\msrewrite}[1]{\msrA#1\msrB}

\newcommand{\traces}{\ensuremath{\mathit{traces}}\xspace}
\newcommand{\tracesmsr}{\ensuremath{\mathit{traces}^\mathit{msr}}\xspace}

\newcommand{\tr}{\ensuremath{\mathit{tr}}\xspace}

\newcommand{\temp}{\ensuremath{\mathit{temp}}\xspace}

% functions on terms

% \newcommand{\tracesppi}{\ensuremath{\mathit{traces}^\mathit{ppi}}\xspace}

%definition of the logic
\newcommand{\Dom}{\mathfun{dom}}

% translation
\newcommand\sem[1]{{\llbracket {#1} \rrbracket}}

\newcommand{\state}{{\sf state}}

\newcommand{\Event}{\ensuremath{\mathit{Event}}\xspace}

% abbreviations in the proofs
\newcommand{\hide}{\ensuremath{\mathit{hide}}\xspace}
\newcommand{\filter}{\ensuremath{\mathit{filter}}\xspace}

% assumptions

\newcommand{\Ass}{\ensuremath{\alpha}\xspace}

\newcommand{\AssIn}{\ensuremath{\alpha_\mathit{inev}}\xspace}

% we use this command for the definition of the translation at two places.
% Here we intialise it to some warning message, so wherever it is used:
% Define it to your convenience.
\newcommand{\theactualrule}[1]{\text{Please redefine the command
theactualrule.}}
\newcommand{\underscorethingy}[1]{\text{Please redefine the command
underscorethingy.}}

% For convenience

%progress function

\DeclareMathOperator\E E

\newcommand{\code}[1]{\normalfont \textsf{#1}\xspace} %How we want code to look like
 %How we want code to look like

\newcommand{\sign}{\mathfun{sig}}

\newcommand{\Log}{\mathfun{Log}}
\newcommand{\Execute}{\mathfun{Execute}}

\newcommand{\NormalAction}{\mathfun{NormalAct}}
\newcommand{\ExceptionalAction}{\mathfun{ExceptionalAct}}

\newcommand{\verify}{\mathfun{verify}}

\newcommand{\true}{\mathfun{true}}
\newcommand{\commit}{\mathfun{commit}}
\newcommand{\open}{\mathfun{open}}
\newcommand{\verCom}{\mathfun{verCom}}

\newcommand{\oracle}{\mathfun{oracle}}

\newcommand{\CommEv}{\mathfun{Comm}}
\newcommand{\Init}{\mathfun{Init}}
\newcommand{\Final}{\mathfun{Final}}
\newcommand{\HonestS}{\mathfun{HonestS}}

\newcommand{\DishonestS}{\mathfun{DishonestS}}
\newcommand{\DishonestA}{\mathfun{DishonestA}}
\newcommand{\Verified}{\mathfun{Verified}}

\newcommand{\B}{\mathrm{S}}

\newcommand{\Agents}{\mathcal{A}}
\newcommand{\Trusted}{\mathcal{T}}

\newcommand{\verdict}{\mathfun{verdict}}
\newcommand{\apv}{\mathfun{apv}}

\newcommand{\Tamarin}{Tamarin\xspace}
\newcommand{\corrupted}{\mathfun{corrupted}}

\newcommand{\Corrupt}{\mathfun{Corrupt}}
\newcommand{\Control}{\mathfun{Control}}
\newcommand{\Stop}{\mathfun{Stop}}

\newcommand{\Verifiability}[1][i]{\mathname{V}_{\omega_{#1},V_{#1}}}
\newcommand{\Exhaustiveness}{\mathname{XH}}
\newcommand{\Exclusiveness}{\mathname{XC}}
\newcommand{\Uniqueness}[1][i]{\mathname{U}_{\omega_{#1},V_{#1}}}
\newcommand{\UniquenessSingleton}[1][i]{\mathname{U}_{\omega_{#1},V_{#1}=\set{\B}}}
\newcommand{\Minimality}[1][i]{\mathname{M}_{\varphi,V_{#1}}}
\newcommand{\MinimalitySingleton}{\mathname{M}_{R,V_i=\set{\B},\omega_j}}
\newcommand{\MinimalityR}{\mathname{M}_{R,|V_i|\ge 2}}
\newcommand{\Sufficiency}{\mathname{SF}_{\omega_i,\varphi,V_i}}
\NewDocumentCommand{\SufficiencyB}{ O{i} O{\B}}{\mathname{SF}_{\omega_{#1},\varphi,{#2}}}
\newcommand{\SufficiencySingleton}{\mathname{SFS}_{\omega_i,\varphi,V_i=\set{\B}}}
\newcommand{\SufficiencyR}{\mathname{SFR}_{R,\omega_i,\varphi,|V_i|\ge 2}}
\newcommand{\RelationLifting}[1][j]{\mathname{RL}_{R,\omega_i,\omega_{#1},V_i,V_{#1}}}
\newcommand{\RelationReflexive}{\mathname{RS}_{R,V_i=\set{\B}}}
\newcommand{\Completeness}[1][i]{C_{V_{#1}}}
\newcommand{\CompletenessR}{C_{|V_i|\ge 2}}

\newcommand{\eid}{\mathvalue{eid}}
\newcommand{\AssSeq}{\alpha_\mathname{seq}}

% For table
\usepackage{pifont} %for xmark and cmark symbol
\newcommand{\xmark}{\ding{55}}%
 
\newcommand{\fullversionref}{accver-full}
\usepackage{xr} %referencing lemmas in full version
\externaldocument[F-]{\fullversionref}

\newcommand{\fvref}[1]{%
    \processifversion{conf}{\ref{F-#1} in the full
    version~\cite{\fullversionref}}%
        \processifversion{full}{\ref{#1}}%
        }

\begin{document}

\ifanonymous \else 
% \author{Robert K\"{u}nnemann \and Ilkan Esiyok \and Michael Backes} 
\author{%
\IEEEauthorblockN{%
Robert K\"{u}nnemann,  Ilkan Esiyok and Michael Backes} 
\IEEEauthorblockA{CISPA Helmholtz Center for Information Security\\
Saarland Informatics Campus}
}
\fi
\title{Automated Verification of Accountability in Security Protocols}
\begin{full}
\date{\today}
\end{full}
\maketitle

\begin{abstract}
    Accountability is a recent paradigm in security protocol design which aims to 
eliminate traditional trust assumptions on parties and hold 
them accountable for their misbehavior.
It is meant to establish trust in the first place and to
recognize and react if this trust is violated.
In this work, we discuss a protocol-agnostic definition of accountability: 
a protocol provides accountability (w.r.t. some security property)
if it can identify all misbehaving parties, where 
misbehavior is defined as a deviation from the protocol that causes
a security violation.

We provide a mechanized method for the
verification of accountability and demonstrate its use for
verification and attack finding on various examples from the
accountability and causality literature, including Certificate Transparency and
Kroll’s Accountable Algorithms protocol. 
We reach a high degree of automation by expressing accountability in terms of
a set of trace properties and show their soundness and completeness.

\end{abstract}

\section{Introduction}
\label{sec:intro}

    % \item \textbf{Accountability is important for protocols:}

The security of many cryptographic protocols relies on trust in some
third parties.
Accountability protocols seek to establish and ensure this trust by
deterrence, often by detecting participants that behave dishonestly.
%
%It can be reached, e.g., by ensuring non-repudiation via
%digital signatures, providing fairness via commitment schemes and
%showing correctness of computation via zero-knowledge proofs.
%%
Providing accountability can thus strengthen 
existing properties in case
formerly trusted parties deviate, e.g, the tallying party in an
electronic voting protocol or the PKI in a key-exchange protocol.
Examples of protocols where accountability is relevant include
Certificate Transparency~\cite{rfc6962} to achieve accountability for the
PKI,
OCSP stapling~\cite{rfc6066} to reach accountability for digital 
certificate status requests in PKI 
and Accountable Algorithms~\cite{krollphd}, a proposal for accountable
computations performed by authorities.

We regard
accountability as a meta-property: given some
traditional security property $\varphi$, a protocol that provides
accountability for $\varphi$ permits a specific party or the public to
determine whether a violation of $\varphi$ occurred, and if that is the
case, which party or parties should be held accountable. 
Accountability provides an incentive for `trusted' parties
to remain honest, and allows other parties to react to possible
violations, e.g.,
by revocation or fault recovery. It can also be used to build deterrence
mechanisms.

For a long time, accountability in the security setting lacked
a protocol-independent definition.
Generalized definitions and algorithms have been proposed for
distributed systems~\cite{Haeberlen:2007:PPA:1323293.1294279},
where a complete
view of every component is available. 
In the security setting, however,
the problem of 
identifying dishonest parties is much harder,
as they might deviate from the protocol in an invisible manner.
In unpublished work, Künnemann \etal~\cite{cryptoeprint:2018:127}
approach this problem using causal reasoning. 
Instead of identifying all participants that, perhaps invisibly,
deviate from their specifications;
they identify the parties 
that \emph{cause}
a violation.
Even if a limited part of the communication is available,
cryptographic mechanisms such as digital signatures, commitments and
zero-knowledge proofs
can be used to 
leave traces
that can be causally related to the violation, e.g., a
transmission of a secret.

In this paper, we provide the first mechanized verification technique
for their approach, which was stated in a
custom process calculus in
which several parties can choose individual ways of misbehaving.
First, we propose a variant of their definition in what we call the
\emph{single-adversary} setting. In this setting, a single adversary
controls all dishonest parties. This setting is used in almost all existing
protocol verification tools.
This vastly simplifies the definition of accountability and enables
the use of off-the-shelf protocol verifiers.
Second, we give verification conditions implying accountability for 
a specific \emph{counterfactual relation}. This relation links
\emph{what actually happened} to \emph{what
could have happened}, e.g., if only a subset of parties had
mounted an attack. Causation and, as we will demonstrate,
accountability depend on how this relation is specified.
We use these verification conditions and off-the-shelf tools to automatically
verify Certificate Transparency, OCSP stapling and other protocols for
accountability w.r.t.\ this relation. 
However, more complex scenarios require specifying more fine-grained
counterfactual relations.
We show for the general case: this definition can be decomposed into several
trace properties and the decomposition is
sound and complete.
For our case studies, the verification conditions consist of
7 to 31 such properties; most of them can be verified 
within tenths of seconds.
Third, 
we implement this verification method for an extension of the
applied-$\pi$ calculus to provide a convenient toolchain for the
specification and verification of accountability protocols.
We verify accountability or find attacks on
\begin{inparaenum}[\itshape a\upshape)]
\item several toy examples that highlight the complexity of
    accountability,
\item an abstract modelling of the case where accountability is
    achieved by maintaining a central audit log,
\item several examples from the causality literature, 
\item OCSP-Stapling, 
\item Certificate Transparency and
\item \emph{Accountable Algorithms}, a seminal protocol for 
   accountable computations in a real-world setting.
\end{inparaenum}
\begin{notextended}
We thus list our contributions as follows:
        \begin{itemize}
            \item 
                a new definition of accountability in the
                single-adversary
                setting (which is simpler than the previous
                definition~\cite{cryptoeprint:2018:127} due to the
                single-adversary setting).

            \item a verification technique, based on a 
                sound and complete translation to a set of trace
                properties
                that is compatible with 
                a mature and
                stable toolchain for protocol verification.

            \item a demonstration on several case studies, 
                including
                two
                fixes to 
                Kroll's Accountable Algorithms protocol,
                and a machine-verified proof
                that both fixes entail accountability for both
                participants.
        \end{itemize}
\end{notextended}

% vim:ft=tex

\section{Accountability}

To define accountability, we follow the intuition that
a protocol provides accountability if it can always determine
which parties caused a security violation.
In this sense, accountability is a meta-property: we speak of
accountability for violations of $\varphi$.
If no violation occurs, no party causes it;
if no party causes a violation, there is no violation.
Hence accountability w.r.t.\ $\varphi$ implies verifiability of
$\varphi$~\cite{Kusters:2010}.

To reason about failures, i.e., violations of $\varphi$, our formalism
has to allow parties to deviate from the protocol. 
Each party is either \emph{honest}, i.e., it
follows the behaviour prescribed by the
protocol, or it is
\emph{dishonest}, i.e., it may \emph{deviate} from this
behaviour.
Often there is a judging party, which is typically \emph{trusted},
i.e., it is always honest. 
A dishonest party is not necessarily deviating,
it might run a process that is indistinguishable from its protocol behaviour.
Hence it is impossible to detect all dishonest parties.

In the security setting, parties cannot monitor each other
completely. Typically, they only receive messages addressed or redirected to
them. This includes the judging party. Therefore, a deviating party
$A$ can send a secret to another deviating party $B$ and the judging party
may not notice.
Under these circumstances, identifying all deviating
parties is also impossible.

Instead, we focus on identifying all parties \emph{causing} the violation
of the security property $\varphi$.
%
% While a deviating party is always dishonest, it is not necessarily the
% cause for a security violation. 
%
The protocols we consider in this work are designed in a way the parties
that are deviating (and thus dishonest) will have to leave evidence in
order to cause security violations.

The definition we provide here is a simplified version of 
an earlier, causality-based definition~\cite{cryptoeprint:2018:127}, 
% the
% definition proposed by Künnemann \etal~\cite{cryptoeprint:2018:127}, 
adapted to a setting where there is only a single adversary
controlling all deviating parties, or equivalently, where the
deviating parties share all the knowledge they obtain.
Both definitions are based on sufficient
causation~\cite{Datta2015a,DBLP:journals/corr/abs-1710-09102}. 
%TODO, this sentence kind of defines the other paper as well and 
%doesn't look good
%
The intuition is to capture all parties for which the
\emph{fact that they are deviating at all} is causing the violation.
This may not only be a single party, but also a set of parties. 
If two parties $A$, $B$ collude against a secret sharing scheme
with a threshold of two and expose some secret, the
single party $A$ would not be considered
a cause, but we would say that $\set{A,B}$ have \emph{jointly} caused
the failure. 

Assume a fixed finite set of parties $\Agents=\set{A,B,\ldots}$.
Intuitively, the fact that a party or a set
of parties $\B \subseteq \Agents $ are deviating is a cause for
a violation
iff:
\begin{description}
    \item [SC1.] A violation indeed occurred and $\B$ indeed
        deviated.
    \item [SC2.] If all deviating parties, \emph{except the parties in $\B$}, 
        behaved honestly,
        the same violation would still occur. 
    \item [SC3.] $\B$ is minimal, i.e., SC1 and SC2 hold for no strict subset of $\B$.
\end{description}
The first condition is self-explanatory. The second formalizes that
the misbehaviour of the parties in $\B$ alone is \emph{sufficient} to
disrupt the protocol. 
For the secret sharing scheme scenario above, if the party $A$ deviated alone, this would not cause 
the exposure of the secret, so SC2 would not hold. $\set{A,B}$, however,
would meet all of the conditions.
SC2 reasons about a scenario that is different from the
events that actually took place, which is called
a \emph{counterfactual} in causal reasoning.
At the end of this section, we will discuss different
\emph{counterfactual relations} 
between the actual and the counterfactual course of
events.
The third condition SC3 ensures minimality. It removes parties that
deviated, but whose deviations are not relevant to the coming about of
the violation.

Note that there can be more than one (joint) cause. If the aforementioned
protocol would run in two sessions, one in which $A$ and $B$ colluded
to expose the secret and another one in which $A$ and $C$ colluded to expose 
a different secret, there would be two causes, $\set{A,B}$ and
$\set{A,C}$, each individually satisfying SC1 to SC3, each being an
individual cause for the failure.
This nicely separates joint causes ($A$ and $B$ working together;
$A$ and $C$ working together) from independent causes (the first
collusion and the second collusion).

Intuitively, an
accountability mechanism provides accountability for $\varphi$ if
it can always point out all $\B \subseteq \Agents$ causing a violation
of $\varphi$, or $\emptyset$ if there is no violation.

\subsubsection*{Need for trusted parties} 

Accountability protocols aim at eliminating trust assumptions, yet
they often require a trusted judging party. 
But, if the judging party operates offline and their judgement
can be computed with public knowledge, then any member of the public
can replicate the judgement and validate the result. The judging party is thus
not a third party, but the observer itself, who we assume can
be trusted to follow the protocol, i.e., not cheat herself.

\subsubsection*{Individually deviating parties vs.\ single adversary} 

% As mentioned, the calculus Künnemann \etal~\cite{cryptoeprint:2018:127} proposed considers
% only individually deviating parties. In contrast,
Protocol verification
tools usually consider a single adversary representing both an
intruder in the network and untrusted parties. Typically, the attacker
sends a \emph{corruption} message to these
parties, they accordingly transmit their secrets to the 
adversary. 
As messages are transmitted over an insecure network, the 
adversary can then impersonate the parties with their secrets. 
This is a sound over-approximation in case of security properties
like secrecy and authentication, as the adversaries are stronger if they
cooperate. It also simplifies verification: a single adversary
can be modelled in terms of non-deterministic state transitions for
message reception (see, e.g.,~\cite[Theorem~1]{AF-popl01}).
Many protocol verification tools operate in this
way~\cite{SMCB-csf12,blanchetcsfw01,
Cr2008Scyther,
MaudeNPA09}.
We therefore 
focus on the single-adversary setting, in
order to exploit existing methods for protocol verification
and encourage the development of accountability
protocols. 
This is the main distinction between our simplified definition of
accountability and the earlier definition~\cite{cryptoeprint:2018:127}.
Nevertheless, the philosophical difference between
these two settings remains relevant for our analysis, as it is based
on causal reasoning about deviating parties. In
Section~\ref{sec:related}, we discuss the difference between
these two settings and elaborate on our current understanding of them.

\subsubsection*{Protocol accountability}

Let a protocol $P$ be defined in some calculus, e.g., the applied-$\pi$
calculus, and assume that we can define the set of traces induced by
the protocol, denoted $\traces(P)$.
We also assume a \emph{counterfactual relation} $r$ between
an actual trace and a counterfactual trace. Depending on the
protocol, different counterfactual relations can capture the desired
granularity of a judgement.
We will describe this relation and give examples in the following section.

Given a trace $t$, a security property $\varphi$ can be evaluated, e.g.,
$t \vDash\varphi$ iff $\varphi$ is true for $t$,
which we will sometimes abbreviate to $\varphi(t)$.
We assume that any trace $t\in\traces(P)$ determines
the set of dishonest agents,
i.e., those who received the corruption message from the adversary
and define $\corrupted(t)\subseteq \Agents$ to be this set.

We can now define the a~posteriori verdict (apv), which specifies all
subsets of $\Agents$ that are sufficient to cause $\neg\varphi$.  
The task of an accountability protocol is to always compute the $apv$,
but without having full knowledge of what actually happened, i.e.,
$t$.

\begin{definition}[a~posteriori verdict]\label{def:a-posteriori-verdict} %{{{ 
    Given 
    an accountability protocol $P$,
    a property $\varphi$,
    a trace $t$ and
    a relation $r$,
    the a~posteriori verdict,
    which formalizes the set of parties causing $\neg\varphi$,
    is defined as:
    \begin{multline*}
        \apv_{P,\varphi,r}(t) := 
        \{\B\mid\text{$t \vDash \neg \varphi$ and 
          $\B$  minimal s.t.}\\
    \exists t' : 
    r(t,t') \wedge \corrupted(t') = \B 
    \wedge t' \vDash \neg\varphi\}.
    \end{multline*}
    Relation $r$ is
    reflexive, transitive and
    $r(t,t')$ implies 
    $\corrupted(P,t') \subseteq \corrupted(P,t)$.
\end{definition}%}}

Each set of parties $\B\in\apv_{P,\varphi,r}(t)$ is a sufficient cause
for an eventual violation, in the sense outlined at the beginning of
this section.
The condition $t \vDash \neg \varphi$ ensures that indeed a violation
took place. If the parties in $\B$ did not deviate in $t$,
then $\B$ would not be minimal, hence
SC1 holds. SC3 holds by the minimality condition. The remaining
conditions capture SC2: $t'$ is a \emph{counterfactual trace} w.r.t.\
$t$, 
a trace contrary to what actually happened.\footnote{Technically, the
set of counterfactual traces includes $t'=t$, as $r$ is reflexive, and
thus not every instance of $t'$ is, strictly speaking,
a counterfactual. For brevity, we prefer to nevertheless call $t'$ a counterfactual, despite this imprecision.}
In
$t'$,
only the parties in $\B$ may deviate, which should 
suffice to derive a violation.
We define the relation $r$ to constrain the set of counterfactual
traces.
At the very least, 
the condition that $\corrupted(t') \subseteq \corrupted(t)$ should hold
to guarantee that for any violating trace, a minimal $S$ is
defined.\footnote{Consider $r$ total, $t$ with $t\vDash \neg \varphi$
and $\corrupted(t)=\set{A}$ and $t'$ with $t'\vDash \neg \varphi$ and
$\corrupted(t')=\set{B}$ as a counterexample. Neither $\set{A}$ nor
$\set{B}$ would be minimal.}
We will discuss a few variants of $r$ in the following section.

For a given trace,
the apv outputs a set of sufficient causes, i.e., a set of sets of
agents. We call this output as the \emph{verdict}.
We also remark that $\apv_{P,\varphi,r}(t)=\emptyset$ iff
$t\vDash\varphi$, i.e., an empty verdict means the absence of
a violation --- there can be no cause for an event that did not happen.

To abstract from the mechanism by which an accountability protocol
announces the purported set of causes for a violation 
--- this could range from a designated party computing them to a public
ledger that allows the public to compute it ---
we introduce a \emph{verdict function}
from $\traces(P) \to 2^{2^\Agents}$. 
An accountability protocol is thus described by $P$ and the verdict
function.
We can now state accountability: a verdict function provides
accountability if it always computes the apv.

\begin{definition}[accountability]\label{def:accountability} %{{{ 
    A verdict function
    $\verdict: \traces(P) \to 2^{2^\Agents}$
    provides 
    a protocol $P$ 
    with
    accountability for a property $\varphi$
    (w.r.t.\ $r$)
    if, for any trace $t\in\traces(P)$
    \[ \verdict(t) = \apv_{P,\varphi,r}(t). \]
\end{definition}%}}}
%TODO Definition is not clear, especially after 'if'.

\begin{example}\label{ex:mon-fst}
Assume a protocol in which a centralized monitor controls access to some resource,
e.g., confidential patient records. Rather than preventing exceptional access
patterns, e.g., requesting the file of another doctor's patient,
the requests are allowed, but logged by the monitor. Accountability tends to be preferable over prevention in
such cases,
e.g., in case of emergencies.

The set of agents comprises doctors $D_1, D_2$ and the centralized monitor.
The centralized monitor is trusted and effectuates requests only if
they are digitally signed. The security property $\varphi$ is true if
no exceptional request was effectuated. 
Per protocol, $D_1$ and $D_2$ never send exceptional requests, however,
if in trace $t$, the central adversary corrupts them and learns their signing
keys,
he can act on their behalf and
sign exceptional requests. The set of dishonest
parties contains those corrupted by the adversary, 
    $\corrupted(t)=\set{D_1,D_2}$.
A verdict $\set{\set{D_1},\set{D_2}}$ indicates that both $D_1$
and $D_2$ deviated in a way that caused a violation, i.e., 
an exceptional request was effectuated. 
If a third doctor $D_3$ was also in the protocol and
it was corrupted $D_3\in\corrupted(t)$, but never signed any request, 
then it would not be involved in the apv
    $\apv_{P,\varphi,r}(t)=\set{\set{D_1},\set{D_2}}$.
The apv for this protocol can be computed by only taking the log into account: a verdict
function $\verdict$, that operates on the monitor's log, can be defined to compose singleton verdicts 
for each party that signed an exceptional request.
This verdict function is easy to implement, yet it provides the protocol
with accountability for $\varphi$ because it computes
    $\apv_{P,\varphi,r}$ correctly.
\end{example}

\subsubsection*{Counterfactual relation}\label{sec:relations}

The relation $r$ in the definition of the apv defines which
counterfactual scenarios are deemed relevant in SC2.
While there is an agreement in the causality literature that $t'$
cannot be chosen arbitrarily, there is no agreement on \emph{how}
they should relate.
We slightly change the previous monitoring example and
discuss why the relation is important.
\begin{example}\label{ex:mon-snd}
    Assume that the monitor supports a second mechanism
    to handle requests. Here, a doctor $D_1$ can also sign his exceptional request and
    ask his chief of medicine
    $C$
    to approve it.
    Assume in trace $t$, both $D_1$ and $C$ collude and use this
    mechanism to effectuate an exceptional request, violating $\varphi$.
    Intuitively, one would expect the apv, relying on logs, to give the verdict
    $\set{\set{D_1,C}}$. However, $D_1$ could have used the \emph{first} mechanism for this request. Hence, there is
    a counterfactual trace $t'$ where only $D_1$ is dishonest.
    If $r(t,t')$, then the more intuitive verdict $\set{\set{D_1,C}}$ is not minimal,
    but $\apv_{P,\varphi,r}(t)=\set{\set{D_1}}$ is minimal, shifting the blame to $D_1$ alone.
    The intuitive response would be: `But that is not what happened!',
    which is precisely what $r$ needs to capture.
\end{example}

We discuss 
three
approaches for relating factual and
counterfactual traces: 

\begin{itemize}
    \item\emph{by control-flow:} $r_c(t,t')$ iff $t$ and $t'$
        have similar control-flow.
        Several works in the causality literature relate traces by their
        control-flow~\cite{Datta2015a,cryptoeprint:2018:127,causality-faulttrees},
        requiring counterfactuals to retain, to varying degree, the same control-flow.
        See~\cite{causality-control-flow} for a detailed discussion
        about control flow in Pearl's influential causation framework.
        % counterfactuals using the minimality condition and a weakening based
        % on contingencies~\cite{DBLP:journals/corr/abs-1301-2275}\cite{DBLP:conf/ijcai/Halpern15}. 
        %%MAYBE consider writing more..
        % Datta et.\,al.\ require counterfactuals
        % to retain the same control-flow~\cite{Datta2015a}. 
        %
        % Pearl and Halpern's notion of causality allows ignoring certain
        % counterfactuals using the minimality condition and a weakening based
        % on contingencies~\cite{DBLP:journals/corr/abs-1301-2275}\cite{DBLP:conf/ijcai/Halpern15}. 
        %
        % Künnemann et.\,al.\
        % discuss these in detail and argue that this notion works well when
        % these contingencies are restricted to control-flow variables, 
        % and gives unintuitive results if they are not. They relate Halpern's
        % notion to a definition that forbids deviating control-flow
        % in counterfactuals~\cite{control-flow-paper}. %MAYBE correct the reference 
        \begin{full}
        A key challenge here is that parties that are corrupted in
        $t$, but now honest in $t'$ need to be able to choose
        a different control-flow. 
        %
        % The relation proposed by Künnemann et.\,al.\
        % considers the control-flow of all parties in
        % $\corrupted(P,\B)$ to be the same~\cite{cryptoeprint:2018:127}. 
        %
        In many protocol verification tools, the 
        adversary is not represented as a process with control-flow,
        but as an environment that non-deterministically chooses
        message input constrained by a deduction relation.
        Thus, there is no adversarial control-flow, which means that in
        the single-adversary setting, we can only talk about the
        control-flow of honest parties.
        This is natural: we can only observe deviating parties, but
        not know which program they run.
        \end{full}
        For simplicity, the notion we present
        (cf.\ Section~\ref{sec:ctl-flow})
        captures only the
        control-flow of trusted parties, i.e., parties guaranteed
        to be never controlled by the adversary.
        In case of the example, the control-flow of the trusted
        monitor would distinguish these two mechanisms.

        %\begin{full}
    %\item \emph{by instance of violation:}
        %% $r_v(t,t') \iff
        %% \forall \theta$, 
        %% $ (\t,\theta\vDash \varphi \leftrightarrow t',\theta\vDash \varphi )$.%
%% \footnote{%
        %% Here $t,\theta\vDash \varphi$ means $\theta$ is
        %% a valuation for t such that $\varphi$ holds (see
        %% Appendix~\ref{app:logic}).}
        %%
        %$r_v(t,t') \iff $ if all valuations of $\varphi$%
        %\footnote{Formally defined in %
        %\processifversion{conf}{full version~\cite{accver-full}.}%
        %\processifversion{full}{Appendix~\ref{app:logic}.}%
        %}
        %are either
        %true  on both $t$ and $t'$, or none.
       %This approach is  
       %commonly used in criminal law 
       %to solve causal problems where, e.g., 
       %a person was poisoned, but shot before the poison took effect.
       %%
       %The classical test for causality (sine qua non) gives
       %unsatisfying results (without the shot, the person would still
       %have died), unless one describes the causal event in
       %more detail, i.e., by distinguishing death from shooting from death
       %from poisoning (~\cite[p.~188]{dressler1995understanding}; see
%also~\cite[p.~46]{Mackie1980-MACTCO-17}). 
        %%
        %For security protocols, the instance of the violation
        %could be characterized by the session, the
        %participating parties or other variables that are free in
        %$\varphi$ (see Lowe~\cite{lowe1997hierarchy}).
        %%
        %For the centralized monitor, the violation would be
        %a disjunction over which mechanism effectuated the action,
        %hence $r_v$ could also distinguish these mechanisms. 
        %\end{full}
        % \begin{conf}
    \item \emph{by kind of violation:}
        $r_k(t,t') $ iff $t$ and $t'$ describe the same kind of
            violations.
       This approach is, e.g.,  
       used in criminal law 
       to solve causal problems where the classical `what-if` test
       fails, e.g., 
       a person was poisoned, but shot before the poison took effect.
       % (~\cite[p.~188]{dressler1995understanding}; see
       % also~\cite[p.~46]{Mackie1980-MACTCO-17}). 
       %
       The classical test for causality (sine qua non) gives
       unsatisfying results (without the shot, the person would still
       have died), unless one describes the causal event in
       more detail, i.e., by distinguishing death from shooting from death
       from poisoning (~\cite[p.~188]{dressler1995understanding}; see
also~\cite[p.~46]{Mackie1980-MACTCO-17}). 
        For security protocols, the instance of the violation
        could be characterized by the session, the
        participating parties or other variables that are free in
        in the security property $\varphi$ (see Lowe~\cite{lowe1997hierarchy}).
       This relation is informal and depends on intuition, 
       so it is not used in our analysis.
       % \end{conf}

    \item \emph{weakest relation according to
        Def.~\ref{def:a-posteriori-verdict}}: $r_w(t,t') \iff
        \corrupted(P,t')\subseteq \corrupted(P,t)$.
        This relation is conceptually simple and suitable for many 
        protocols in which collusion is not an issue, i.\,e.,
        verdicts contain only singleton sets.
        Outside this class, it
        may give unintuitive a~posteriori verdicts in cases where $t$ 
        requires
        collusion, but one of the colluding parties could mount
        a possibly very different attack by themselves.
        %
        % This is the case, e.g.,  in the centralized monitor example (cf.
        % Section~\ref{sec:cases}).
\end{itemize}

In Section~\ref{sec:cond-coarse}, we provide verification conditions for
$r_w$. In Section~\ref{sec:ver}, we provide more general 
verification conditions that
apply to arbitrary relations, as some scenarios require more 
fine-grained analysis, and later we mechanize it for $r_c$.

\begin{full}
Besides these relations, there are also approaches in distributed systems that 
require
counterfactuals to reproduce the same observations for some amount of
time (e.g., for as long as possible, or as long as it complies to come
specification) and then switch to their specified
behavior~\cite{gossler:hal-00924048,causality-reactive}.
\end{full}

\section{Verification conditions for $r_w$}\label{sec:cond-coarse}

In this section, 
we define a set of verification conditions
parametric in a security property $\varphi$ and a verdict function.
If these conditions are met, they
provide a protocol with accountability for
$\varphi$ w.r.t.\ the weakest condition on counterfactuals
$r_w(t,t') \defeq
\corrupted(t') \subseteq \corrupted(t)$.
Each of these conditions is a trace property and can thus be verified
by off-the-shelf protocol verification tools.
In our case studies, we will use these verification conditions to verify
accountability properties for the Certificate Transparency protocol.

The main idea:  we assume the verdict function is described as a case
distinction over a set of trace properties $\omega_1$ to $\omega_n$.
Any of these \emph{observations}
$\omega_i$ is then 
assigned a verdict $V_i$.

\begin{definition}[verification conditions]\label{def:verification-conditions-coarse} %{{{ 
    Let $\verdict$ be a verdict function of form:
    \[
        \verdict(t) = \begin{cases}
            V_1 & \text{if $\omega_1(t)$}\\
            \vdots &  \\
            V_n & \text{if $\omega_n(t)$}\\
        \end{cases}
    \]
    and $\varphi$ a predicate on traces.
    We define the verification condition
    $\gamma_{\varphi,\verdict}$ 
    as the conjunction of the formulae in Table~\ref{tab:coarse-cond}.

    \begin{table}[t]
      \centering
      
        \caption{Verification Conditions $for \: rw$.}\label{tab:coarse-cond}
        \begin{tabular}{lr}
            conditions & formulae\\
            \toprule
            \emph{Exhaustiveness ($\Exhaustiveness$):} & $\forall t. \quad \omega_1(t)\lor \cdots \lor \omega_n(t)$ \\
             \midrule 
            \emph{Exclusiveness ($\Exclusiveness$):} & $\forall t, i, j. \quad i \neq j \implies \neg(\omega_i(t)\land  \omega_j(t))$ \\
             \midrule 
            \emph{Sufficiency of each $\omega_i$} & $\forall \B\in V_i. \quad \exists t.$ \\
            \emph{s.t.\ $V_i\neq\emptyset$ ($\SufficiencyB$):} & $\neg\varphi(t) \land \corrupted(t) = \B $ \\
             \midrule 
            \emph{Verifiability of each $\omega_i$} & $\forall t. \quad \omega_i(t) \implies$ \\
            \emph({$\Verifiability$}): & $ (V_i=\emptyset \iff \varphi(t))  $ \\
             \midrule 
            \emph{Minimality of each $V_i$} & $\forall \B \in V_i \ \forall \B' \subsetneq \B  $ \\
            \emph({$\Minimality$}): & $ \nexists t.\ \neg\varphi(t) \land \corrupted(t) = \B' $  \\
             \midrule 
            \emph{Uniqueness of each $V_i$} & $\forall t. \quad \omega_i(t) \implies$ \\
            \emph({$\Uniqueness$}): & $ \bigcup_{\B\in V_i} \B \subseteq \corrupted(t) $ \\
            \midrule 
            \emph{Completeness of each $V_i$}
            &
            $\forall \B\subseteq \bigcup_{\B'\in V_i} \B' \ \forall j.\ V_j=\set{\B} \implies$ 
            \\
            \emph{($\Completeness$):}
            &
            $\B\in V_i$.
            \\
            \bottomrule
        \end{tabular}
        \vskip 0.1in
        ($t$ is quantified over $\traces(P)$)
    \end{table}

\end{definition}%}}}

We briefly go over these conditions.
The case distinction needs to be exhaustive
($\Exhaustiveness$) and exclusive ($\Exclusiveness$), because 
verdict functions are total.
For any observation $\omega_i$ that leads to a non-empty verdict, any
set of parties $\B$ in this verdict needs to be able to produce
a violating trace on their own ($\SufficiencyB$).
However,
removing any element from $\B$ should make it impossible to produce
a violation ($\Minimality$), due to the minimality requirement
in Def.~\ref{def:a-posteriori-verdict}.
If an observation leads to the empty verdict, it needs to imply the
security property $\varphi$, because accountability implies
verifiability ($\Verifiability$).
Whenever an observation $\omega_i$ is made, all parties
that appear in the ensuing verdict have necessarily been corrupted
($\Uniqueness{}$). This ensures uniqueness; if there was a second
sufficient and minimal verdict, part of the verdict would correspond
to a trace that corrupts parties that do not appear in the verdict
(details in the proof of completeness,
Appendix~\fvref{sec:proof-cond-coarse}).
Finally, if there is a singleton verdict 
(e.g., $V_j = \set{\set{B,C}}$)
containing only parties that appear in another composite verdict
(e.g., $V_i = \set{\set{A,B},\set{A,C}}$)
then traces that give the former are related to traces that give the
latter (where, at least, $A$, $B$ and $C$ were dishonest).
Hence the singleton verdict needs to be included.
($\Completeness{}$).

We show these conditions
sound and complete
in Appendix~\fvref{sec:proof-cond-coarse}.
Practically, this means that any counter-example to any lemma
generated from these conditions demonstrates an attack against
accountability.
\begin{example}\label{ex:mon-coarse}
    Consider the centralized monitor from Example~\ref{ex:mon-fst},
    and, for simplicity, assume there is only one doctor $D$. The verdict
    function gives $V_1=\set{\set{D}}$ if it logged an action signed by
    $D$, if this action was effectuated and if it was exceptional.
    Otherwise, it gives $V_2=\emptyset$.
    To show that his verdict function provides accountability
    for $\varphi \defeq \text{no exceptional action was effectuated}$,
    one would show:
    \begin{itemize}
        \item the case distinction exhaustive and exclusive,
        \item that the attacker can effectuate an exceptional action if $D$'s signing key is known ($\SufficiencyB$),
        \item that the `otherwise' condition (no exceptional action
            was effectuated and signed by $D$) implies that no
            exceptional
            action was effectuated by anyone ($\Verifiability$),
        \item that no exceptional action can be effectuated without knowledge
            of $D$'s signing key ($\Minimality$), and
        \item that $D$'s signature on an exceptional action that was
            effectuated can only be obtained by corrupting $D$
            ($\Uniqueness$).
        \item Completeness ($\Completeness$): $V_1$ is the
            only non-empty verdict.
    \end{itemize}
\end{example}

\section{Verification conditions for arbitrary $r$}\label{sec:ver}

As outlined in Example~\ref{ex:mon-snd},
there are scenarios where a more fine-grained analysis is necessary.
These scenarios are characterized by violations that can be provoked
either by a set of colluding parties or by a subset thereof, using
a different mechanism.
Hence we provide a different and more elaborate set of verification
criteria (see Table~\ref{tab:ver-conditions}). They fall into two categories: the first consists of
trace properties that again can be verified using off-the-shelf
protocol verifiers. The second relates the case distinction used to
define the verdict to the relation: in general, all traces that
fall into the same case should be related. The verification of the
second kind of
conditions depends on the relation chosen and can be conducted by
hand. In a later section, we mechanize the verification of these
conditions for the relation $r_c$ specifically. 
Hence our method is fully automated for this relation.

These verification criteria are sound and complete.
(see Appendix~\fvref{sec:proof-cond-cases}).
This means that the conjunction of all verification criteria is
logically equivalent to accountability and thus contradicts
Datta et.\,al.'s view that `accountability depends on actual
causation and it is not a trace
property'~\cite{Datta2015a}. 

\begin{definition}[verification conditions]\label{def:ver-conditions} %{{{ 
    Let $\verdict$ be a verdict function of 
    \begin{full}
    form:
    \[
        \verdict(t) = \begin{cases}
            V_1 & \text{if $\omega_1(t)$}\\
            \vdots &  \\
            V_n & \text{if $\omega_n(t)$}\\
        \end{cases}
    \]
    \end{full}
    \begin{conf}
        the form from
        Definition~\ref{def:verification-conditions-coarse}
    \end{conf}
    and $\varphi$ be a predicate on traces.
    We define the verification condition
$\nu_{\varphi,\verdict}$ 
    as the conjunction of the formulae in Table~\ref{tab:ver-conditions},
    where $t$ and $t_i$ range over $\traces(P)$. 
\end{definition}%}}}

Again, we assume the verdict to be expressed as a case distinction.
This 
case distinction must be
sufficiently fine-grained to capture all relevant classes of
counterfactuals, e.g., all ways the violation could come about in
terms of $r$. 
We distinguish between the empty verdict 
(meaning no violation took place),
singleton verdicts ($\set{\B}$, where $\B$ itself
is a set of parties jointly causing a violation) 
and composite verdicts (consisting of two or more elements, e.g.,
$\set{\set{A,B},\set{A,C}}$ if $A$, $B$ and $C$ deviated, but $A$ 
could have caused the violation
either jointly with $B$ or with $C$).
The main idea is that the correctness of composite verdicts, e.g.,
$\set{\set{A,B},\set{C}}$,
follows from the correctness of the singleton verdicts they are
composed from, e.g., $\set{\set{A,B}}$ and $\set{\set{C}}$, as long as all
traces that provoke the composite verdict relate to the singleton
verdict.
        \begin{table}[t!]
      \centering
      
        \caption{Verification conditions $for \: arbitrary \: r$.}\label{tab:ver-conditions}
        \begin{tabular}{lr}
            conditions & formulae\\
            \toprule
            \emph{Exhaustiveness ($\Exhaustiveness$):} & $\forall t. \quad \omega_1(t)\lor \cdots \lor \omega_n(t)$ \\
             \midrule 
            \emph{Exclusiveness ($\Exclusiveness$):} & $\forall t, i, j. \ i \neq j \implies \neg(\omega_i(t)\land  \omega_j(t))$ \\
             \midrule 
            \emph{Sufficiency for each $\omega_i$}, & $\exists t. \quad \omega_i(t) \land \neg\varphi(t) $ \\
            \emph{with singleton $V_i=\set{\B}$} & $ \land \corrupted(t) = \B $ \\
            \emph({$\SufficiencySingleton$}): & \\
             \midrule
            \emph{Sufficiency for each $\omega_i$}, & $\forall \B. \quad \B\in V_i \implies $ \\
            \emph{with $|V_i|\ge 2$} & $ \exists j. \quad R_{i,j} \land V_j = \set{\B} $ \\
            \emph({$\SufficiencyR$}): &  \\
             \midrule
            \emph{Verifiability for each $\omega_i$}, & $\forall t. \quad \omega_i(t) \implies $ \\
            \emph({$\Verifiability$}): & $ (V_i=\emptyset \iff \varphi(t)) $ \\
             \midrule
            \emph{Minimality, for} & $\forall \B'. \quad \B' \subsetneq \B \implies $ \\
            \emph{singleton $V_i=\set{\B}$} & $ \nexists t. \quad \omega_i(t) \land \corrupted(t) = \B' $ \\
            \emph({$\Minimality$}): & \\
             \midrule
            \emph{Minimality, for} & $ \nexists \B, \B' \in V_i \quad \B' \subsetneq \B$  \\
            \emph{composite $V_i$, $|V_i|\ge 2$} &  \\
            \emph({$\MinimalityR$}): & \\
             \midrule
            \emph{Uniqueness of each $V_i$} & $ \forall t. \quad \omega_i(t) \implies \B \subseteq \corrupted(t) $ \\
            \emph{with $V_i=\set{\B}$} & \\
            \emph({$\UniquenessSingleton$}): &  \\
             \midrule 
            \emph{Completeness of $|V_i|\ge 2$} & $ \forall j,S. \quad R_{i,j} \land V_j=\set{\B} \implies  $ \\
            \emph({$\CompletenessR$}): & $ \exists \B'. \quad \B'\in V_i \land \B \subseteq \B' $ \\
            \midrule
            \emph{Relation is lifting of r} & if $ V_i,V_j \neq \emptyset$ then \\
            \emph({$\RelationLifting$}): & $\forall t,t'. \quad \omega_i(t) \land \omega_j(t')$ \\
            & $ \land R_{i,j} \iff r(t,t') $ \\
             \midrule
            \emph{Relation is reflexive and} & $\forall i,j. \quad V_i$ is singleton \\
            \emph{terminating on singleton} & $ \land R_{i,i} \land ( R_{i,j} \implies i=j) $ \\
            \emph({$\RelationReflexive$}): &  \\
             \bottomrule   
            
        \end{tabular}
        \vskip 0.5em
        ($t$ is quantified over $\traces(P)$)
    \end{table}

We assume the cases to be connected along these lines;
all cases resulting in an empty verdict have to guarantee $\varphi$ to
hold ($\Verifiability$).
All cases resulting in singleton verdicts have to imply that 
\begin{inparaenum}[(a)]
\item a violation took place ($\Verifiability$),
\item that the parties in the verdict alone can provoke this
    violation ($\SufficiencySingleton$),
\item these parties need to be corrupted whenever this case
    matches ($\Minimality$) and
\item that the verdict is unique ($\UniquenessSingleton$).
\end{inparaenum}

Composite verdicts need to relate to singleton verdicts by means of
a lifting $R$ of the relation $r$ ($\RelationLifting$). For each part
of a composite verdict, $R$ points to the singleton verdicts for the
same parties. Therefore, it needs to be terminating on singleton cases
($\RelationReflexive$).
\begin{full}%
\footnote{%
One can visualise the relation between these cases as follows. As $r$
is transitive, we can define equivalence classes for all mutually
related $t$ and $t'$ ($r(t,t')$ and $r(t',t)$). If $t_A$ and $t_B$ are
in different equivalence classes $A$ and $B$, and if $r(t,t')$, then
by transitivity of $r$, all members of $A$ relate to all members of
$B$. 
Hence we can think of these equivalence classes as nodes in a directed
graph with an edges between related equivalence classes. 
This graph is acyclic, as otherwise, all equivalence classes in this
cycle would be in the same equivalence class by transitivity. It is
thus a forest.
By transitivity, each node in each tree in this forest can be reached
in one step from the root. All leaf nodes correspond to singleton
verdicts, and all root nodes that are not leaf nodes themselves are
composite verdicts.}%
\end{full}
As $R$ is a lifting of $r$ and
reflexive (i.e.,  
all traces in the same case 
are related to each other, $\RelationReflexive$),  
completeness, sufficiency
and minimality
carry over from singleton cases to composite cases, as long as
all parts of of the composite verdict are covered by a singleton
verdict ($\SufficiencyR$), and, vice versa, all related singleton verdicts are
contained in the composite verdicts ($\CompletenessR$). 
In this way, we can avoid the requirement that the composite
verdict itself needs to define the minimal set of parties needed to provoke a violation, 
which would not be the case in most of
our case studies.\footnote{Compare with the minimality requirement in
Table~\ref{tab:coarse-cond} in Def.~\ref{def:verification-conditions-coarse}.}
Instead, we only need
a simple syntactic check ($\MinimalityR$) 
to ensure that
the parts of a composite verdict are not contradictory with regards to the
minimality of the apv.
Consider, e.g., the composite verdict $\set{\set{A,B},A}$.

In summary, the key to these verification conditions is to express
equivalence classes w.r.t.\ $r$ and  relations between them
in the case distinction describing the verdict function.

\begin{example}\label{ex:mon-cases}
    Consider the extended centralized monitor
    (Example~\ref{ex:mon-snd}) and assume that the logged signature
    distinguishes which mechanism was used to effectuate an action and
    that only one action can be effectuated.
    Assume further a verdict function that 
    \begin{inparaenum}[(a)]
    \item outputs $\set{\set{D}}$ if
        the monitor logged and effectuated an exceptional action signed by
        $D$,
    \item  outputs $\set{\set{C,D}}$ if it logged and effectuated an
        exceptional action signed by $D$ and $C$, and 
    \item  
        outputs $\emptyset$
        otherwise.
    \end{inparaenum}
    Minimality can only hold if case (a) and case (b) are not in
    the relation. 
    Hence $\RelationLifting$ requires any trace falling into
    case (a) to be unrelated from any trace falling in 
    case (b). $\RelationReflexive$ requires all traces leading to
    the observation in case (a) to be related, and likewise for 
    case (b).
    If we consider, e.g., the monitor's control flow $r_c$,
    this can be shown automatically (cf.\ Section~\ref{sec:ctl-flow}). 
    If we consider $r_k$, it is essentially an axiom.
\end{example}

\section{Calculus}
\label{sec:calculus}

Before we present our case studies, we will elaborate on the protocol
calculus in which they are stated.
The calculus we used 
is an extension of the well-known 
applied-$\pi$ calculus~\cite{AF-popl01}.
In addition 
to
the usual operators for concurrency, replication,
communication, and name creation, 
from applied-$\pi$,
this calculus (called
SAPiC~\cite{KK-jcs16,BaDrKr-2016-liveness}\footnote{
Our results apply to both the original version of SAPiC and the extension with reliable channels.
})
supports 
constructs for accessing and updating
an explicit global state,
which is useful for
accountability protocols that rely on trusted third parties retaining
some state or a public ledger.
Readers familiar with the applied-$\pi$ calculus can jump straight to
Section~\ref{sec:calculus-acc}, where the modelling of corruption is
explained. The constructs for state manipulation are marked in
Figure~\ref{fig:syntax}.

\begin{full}
SAPiC was
extended with constructs 
for non-deterministic choice and communication on a reliable
channel to accommodate the 
analysis of fair-exchange protocols~\cite{BaDrKr-2016-liveness}.
We conjecture this extension to be useful for accountability protocols,
as this is necessary to verify \emph{timeliness} and \emph{fairness}. In an
accountability context, holding parties accountable for timeliness can
assure that any party delaying the computation indefinitely can be 
identified, i.e., if $A$ fails to provide $B$ with data that $A$ needs
to answer $C$, can $B$ prove to $C$ that it is $A$'s fault. 
We conjecture fairness to occur as a necessary condition to other
accountability properties: accountability can be reached by each party
attesting that the other party provides certain data, and identifying
maliciously deviating parties by  missing or inconsistent information.
If $A$ and $B$ collect information on each
other, the first to receive the information proving they behaved
correctly could be in a situation where it might (falsely) blame the
other. A fair sub-protocol can prevent this situation.
%
%MAYBE best would be a case study that shows this... this is not shown in text.?
\end{full}

We will now introduce the syntax and informally explain the semantics of
the calculus. For the formal semantics, please refer to
% \processifversion{conf}{the full version~\cite{accver-full}.}%
% \processifversion{full}{Appendix~\ref{app:semantics}.}
Appendix~\ref{app:semantics}.

% \subsection{Preliminaries}

%TODO check that we don't explain to much or to few

\subsubsection*{Terms and equational theories} Messages are modelled
as abstract terms.  We define an order-sorted term algebra with the sort
\msgsort and two incomparable subsorts \pubsort and \freshsort for
two countably infinite sets of 
public names ($\PN$) 
and fresh names ($\FN$).
% %
% Fresh names will be used to model
% cryptographic keys and nonces, public names model publicly known
% values. 
Furthermore, we assume a countably infinite set of variables
for each sort $s$, $\Vars_s$. Let \Vars be the union of the set of
variables for all sorts.  We write $u:s$ when the name or variable $u$
is of sort $s$. Let \Sign be a signature, i.e., a set of function
symbols, each with an arity. We write $f/n$ when function symbol $f$
is of arity $n$.  
There is a subset $\Sign_\mathit{priv}\subseteq \Sign$
of \emph{private} function symbols which
cannot be applied by the adversary.
Let $\Terms$ be the set of well-sorted
terms built over \Sign, \PN, \FN and \Vars,
and $\Mess$ be the subset containing only ground terms, i.e., terms
without variables.
% For a term $t$, we denote
% by $\names(t)$, respectively $\vars(t)$ the set of names, respectively
% variables, appearing in $t$.

Equality is defined by means of an equational theory $\ET$, i.e.,
a finite set of equations between terms inducing a binary relation
$=_\ET$
that is closed under application of function symbols, bijective
renaming of names and substitution of variables by terms of the same
sort.

% The capabilities of the adversary are defined by a deduction relation
% between
% the information available to the adversary (formalized in a \emph{frame})
% and any message $m\in\Mess$ deducible from it. 
% Again, we refer to
% \processifversion{conf}{the full version~\cite{accver-full}}%
% \processifversion{full}{Appendix~\ref{app:semantics}}
% for details.

\begin{example}
    We model digital signatures using symbols
    $\set{\sign,\verify,\pk,\sk,\true}\subset\Sign$ with
    $\sk\in\Sign_\mathit{priv}$,
    and equation
    \[
    \verify(\sign(m, \sk(i)), m, \pk(\sk(i)))=\true.\]
\end{example}

For the remainder of the article, we will assume the signature $\Sign$ and
equational theory $\ET$ to
contain symbols and equations for pairing and projection 
$\{ \pair{.}{.}, \fst, \snd \} \subseteq \Sign$ and equations
$\fst(\pair x y) = x$ and $\snd(\pair x y) = y$ are in \ET.
We use $\langle x_1, x_2, \ldots, x_n\rangle$ as
a shortcut for $\langle x_1,\langle x_2,\langle \ldots,\langle x_{n-1}, x_n\rangle\ldots\rangle$.
We suppose that functions between terms are interpreted modulo $E$, i.e., if $x =_E y$ then $f(x) = f(y)$. 
% Given function $f$ we let $f(x) = \bot$ when $x\not\in_E \Dom(x)$.  When $f(x) = \bot$ we say that $f$ is undefined for all $y=_E x$.  We define the function $f:=g[a\mapsto b]$ with $\Dom(f) = \Dom(g) \cup_E \set a$ as $f(x):=b$ for $x =_E a$ and $f(x):=g(x)$ for $x\neq_E a$.% otherwise, even if $f$ or $g$ are not substitutions.

\subsubsection*{Facts} 
We also assume an unsorted signature \FSign,
disjoint from $\Sign$. The set of \emph{facts} is defined as 
$\calF
:= \{ F(t_1,\ldots,t_k) \mid t_i\in\Terms_\Sign, F\in\FSign \mbox{ of
  arity }k \}$ and
used to annotate protocol steps. 

\subsubsection*{Sets, sequences, and multisets} 
We write $\setN_n$ for the set $\set{1,\ldots, n}$. Given a set $S$, we denote the set of finite sequences of elements from $S$, by $S^*$.
% and by $S^\#$ the set of finite multisets of elements from $S$. We use the superscript $^\#$ to annotate usual multiset operations, e.g. $S_1 \cup^\# S_2$ denotes the multiset union of multisets $S_1, S_2$. Given a multiset $S$ we denote by $\mathit{set}(S)$ the set of elements in $S$. 
% The sequence consisting of elements $e_1, \ldots, e_n$ will be denoted by $[e_1, \ldots, e_n]$ and the empty sequence is denoted by $[]$. We denote by $|S|$ the length, i.e., the number of elements of the sequence. We use $\cdot$ for the operation of adding an element either to the start or to the end, e.g., $e_1 \cdot [e_2, e_3] = [e_1, e_2, e_3] = [e_1, e_2] \cdot e_3$. Given a sequence $S$, we denote by $\mathit{idx}(S)$ the set of positions in $S$, i.e., $\setN_n$ when $S$ has $n$ elements, and for $i \in \mathit{idx}(S)$ $S_i$ denotes the $i$th element of the sequence.  
Set membership modulo $E$ is denoted by $\in_E$ and defined as $e \in_E S$ iff $\exists e' \in S.\ e' =_E e$. $\subset_\ET$, $\cup_E$, and $=_\ET$ are defined for sets in a similar way. Application of substitution is lifted to sets, sequences and multisets as expected.  By abuse of notation we sometimes interpret sequences as sets or multisets; the applied operators should make the implicit cast clear.

\subsection{Syntax and informal semantics}
\begin{figure}[th]
    \begin{grammar}
            <$P$,$Q$> ::= 
        0\tikzmark{one}
      \alt $P \mid Q$   
      \alt $!$ $P$
      \alt $\nu a\colon\freshsort$; $P$
      \alt out($m,n$); $P$
      \alt in($m,n$); $P$  
      \alt if $\Pred$ then $P$ [else $Q$]   \tikzmark{three}
      \alt event $F$; $P$ \tikzmark{two}
      \alt $P + Q$   \quad\textit{(non-deterministic choice)}
        \tikzmark{a}
      \alt insert $m$,$n$; $P$ \quad\textit{(inserts $n$ at cell $m$)}
      \alt delete $m$; $P$  \quad\textit{(deletes content of $m$)} \tikzmark{c}
      \alt lookup $m$ as $x$ in $P$ [else $Q$]
      \alt lock $m$; $P$
        \alt unlock $m$; $P$ \tikzmark{b}
      % \alt $ [L] \msrewrite A [R] ; P$ \quad ($L,R,A \in \calF^*$)
    \end{grammar}
    \caption{Syntax ($a\in\FN$, $x\in\Vars$, $m,n\in\Terms$
      $\Pred\in\calP$,
      $F\in\calF$).
}\label{fig:syntax}
\begin{tikzpicture}[overlay, remember picture]
    \draw [decoration={brace,amplitude=0.5em},decorate,thin,black, xshift=1cm]
    let \p1=(pic cs:one), \p2=(pic cs:two), \p3=(pic cs:three) in
    ({max(\x1,\x3)}, {\y1+0.8em}) -- node[right=1.6em, label distance=0.5cm,text depth=-1ex,rotate=-90, anchor=south] {classical applied-$\pi$} ({max(\x1,\x3)}, {\y2});
    \draw [decoration={brace,amplitude=0.5em},decorate,thin,black, xshift=1cm]
    let \p1=(pic cs:a), \p2=(pic cs:b), \p3=(pic cs:c) in
    ({max(\x1,\x3)}, {\y1+0.8em}) -- node[right=1.6em, label distance=0.5cm,text depth=-1ex,rotate=-90, anchor=south] {SAPiC additions} ({max(\x1,\x3)}, {\y2});
\end{tikzpicture}

\end{figure}
$0$ denotes the terminal process. $P \mid Q$ is the parallel execution
of processes $P$ and $Q$ and $! P$ the replication of $P$ allowing an
unbounded number of sessions in protocol executions.  
$P+Q$ denotes
\emph{external} non-deterministic choice, i.e.,
if $P$ or $Q$ can reduce to a process $P'$ or $Q'$, $P+Q$ may reduce
to either.
%
% This construct can be used to model timeouts,
% e.g., if $P$ awaits a message on
% an unreliable channel, but the adversary decides to block this
% message, a recovery protocol $Q$ may be run.
% %
% This can be countered by running a recovery protocol, which, e.g.,
% calls a trusted third party for help via a reliable channel. In this
% case, 
% $Q$ needs to progress, as it is of form $\pout(`r',m); Q'$ for some
% $m$ and $Q'$ -- see \cite{BaDrKr-2016-liveness} for further discussion.
%
The construct $\nu a; P$ binds the name $a\in\FN$ in $P$ and models
the generation of a fresh, random value.
The processes out($m,n$); $P$ and in($m,n$); $P$ represent the output,
respectively input, of message $n$ on channel $m$. 
As opposed to the applied pi calculus~\cite{AF-popl01},  
SAPiC's input construct performs pattern matching instead of variable binding. 
%
%There are exactly two channels, one for reliable communication, e.g.,
%between a protocol participant and the trusted third party, and one
%public channel. Messages on both channels may be intercepted and
%altered by the adversary, however, the reliable channel guarantees
%that the message that was sent arrives eventually. 
%%
If the channel is left out, the public channel $c$ is assumed, which
is the case in the majority of our examples.
% %
% The standard SAPiC semantics~\cite{KK-jcs16} which include private channels are also
% supported by our implementation, but not discussed here.
%
The process \code{if \Pred then $P$ else $Q$} will execute $P$ or $Q$, depending on whether $\Pred$ holds.  For example, if $\Pred=\mathit{equal}(m,n)$, and $\phi_\mathit{equal}=x_1 \approx x_2$, then \code{if $\mathit{equal}(m,n)$ then $P$
  else $Q$} will execute $P$ if $m =_\ET n$ and $Q$ otherwise. (In the
  following, we will use $m=n$ as a short-hand for
  $\mathit{equal}(m,n)$).
The event construct is merely used for annotating processes and will
be useful for stating security properties. For readability, we
sometimes omit 
trailing 0 processes and else branches that consist of
a $0$ process.
The remaining constructs are used to manipulate state and 
were introduced with SAPiC~\cite{KK-jcs16}.
% We offer two different mechanisms for state.
% The first construct is \emph{functional} and allows to
% associate a value to a key. 
The construct insert $m$,$n$ binds the value $n$ to a key
$m$. Successive inserts overwrite this binding, 
the delete $m$ operation `undefines' the binding.
The construct lookup $m$ as
$x$ in $P$ else $Q$ allows for retrieving the value associated to $m$,
binding it to the variable $x$ in $P$. If the mapping is undefined for
$m$, the process behaves as $Q$. 
The lock and unlock constructs are
used to gain or waive exclusive access to a resource $m$, in the style
of Dijkstra's binary semaphores: if a term $m$ has been locked, any
subsequent attempt to lock $m$ will be blocked until $m$ has been
unlocked. This is essential for writing protocols where parallel
processes may read and update a common memory.

\begin{example}\label{ex:mon-process}
    The centralized monitor from Example~\ref{ex:mon-fst} can be
    modelled as follows. For $\NormalAction/0$, we can model the doctor's role
    as follows:
\begin{lstlisting}
D$\defeq$in(a); if a = $\NormalAction$ then
       out(<$\langle \mathsf{'Do'},a \rangle$,sign($\langle \mathsf{'Do'},a \rangle$,sk('D'))>).
\end{lstlisting}
    The centralized monitor itself verifies the signature and logs the
    access using the event construct. Note that it does not check
    whether $a$ constitutes a `normal' action.
\begin{lstlisting}
M$\defeq$ (in(<$m_1\defeq \langle \mathsf{'Do'},a\rangle$,m1s>); 
  if verify(m1s,$m_1$,pk(sk('D')))=true() then
    event LogD(a);event Execute(a))
\end{lstlisting}
To model these parties running arbitrarily many sessions in parallel,
we compose $D$ and $M$ to $!D \mid !M$.
\end{example}

% \subsection{Traces}\label{sec:traces}

As usual, the semantics are defined by means of a reduction relation.
% which we defer to the long version~\cite{accver-full}.
%
A configuration $c$
consists of the set of running processes, the global store and more.
By reducing some process, it can transition into a configuration $c'$.
This relation is denoted
$c \xrightarrow{F} c'$, where the fact $F$ denotes an event, e.g.,
$\mathit{LogD}(a)$, or the adversary sending a message $m$, in which
case $F=K(m)$. 

An \emph{execution} is a sequence of related configurations, i.e., $c_1
\xrightarrow{F_1} \cdots \xrightarrow{F_n} c_{n+1}$
The sequence of non-empty facts $F_i$ defines the trace of an
execution. 
Given a ground process $P$,
$\traces(P)$ is the set of traces 
that start from an initial configuration $c_0$ (no messages
emitted yet, no open locks etc.).

% \item ends in a \emph{final}
% configuration.\footnote{
% In standard SAPiC, any configuration is final.
% To show properties like timeliness~\cite{BaDrKr-2016-liveness},
% SAPiC supports a `progress' option which considers
% configurations final iff all messages on the reliable channel are
% delivered and all processes are reduced as far as possible.}
% \end{inparaenum}
%
%
To specify trace properties,
SAPiC and the 
underlying
Tamarin prover~\cite{SMCB-csf12}
support a
fragment of first-order logic
with atoms $F@i$ (fact $F$ is at position $i$ in trace),
$i<j$ (position $i$ precedes position $j$) 
and equality on positions and terms 
(see. Appendix~\ref{app:logic}
% \processifversion{conf}{(see full version~\cite{accver-full}}%
% \processifversion{full}{(cf. Appendix~\ref{app:logic}}%
\ for its formal definition).
We write $t\vDash \varphi$  if
$t$ satisfies a trace property $\varphi$.

\subsection{Accountability protocols}\label{sec:calculus-acc}

A process by itself does not encode which of its subprocesses
represents which agent. Hence, for each agent $A$, we assign a process
$P_A$. 
Furthermore, in order to model the adversary
taking control of agents, each agent needs to specify a corruption
procedure. At the least, this corruption procedure outputs that agent's
secrets. In our calculus, these are the free names (unbound by input
or $\nu$) in $P_A$. To
model other capabilities obtained by corrupting an agent, e.g., database
access, we allow for an auxiliary process $C_A'$ to be specified.

\begin{definition}[accountability protocol]\label{def:accountability-protocol} %{{{ 
    Assume a set of parties $\Agents=\set{A_1,A_2,\ldots, A_n}$ and
    $\Trusted\subseteq\Agents$.
    An accountability protocol is 
    a ground process of the following form:
    $\nu \vec a; (P_{A_1} \mid C_{A_1} \mid \cdots  \mid P_{A_n} \mid C_{A_n})$
    where  $C_{A_i}$ is of form 
    $\pevent~\Corrupt(A_i); \pout('c',\langle a_1,\ldots,a_m \rangle); C'_{A_i}$
    and $\set{a_1,\ldots,a_m} \subseteq \vec a$
    are the free names in $P_{A_i}$ if $A_i\notin \Trusted$, and $0$
    otherwise.
\end{definition}%}}}

\begin{example}
    The centralized monitor protocol from Examples~\ref{ex:mon-fst}
    and~\ref{ex:mon-process} is an accountability process
    \[ 
    D \mid (\pevent~\Corrupt('D'); \pout(\sk('D')) \mid M.
    \]
\end{example}

% The set of free names in $\fn(A_i)$ models secrets that
% are shared between agents a priori. (If a name in $\vec n$ is
% only free in a single $A_i$, then w.l.o.g., it can be bound within
% $P_{A_i}$).

Processes accessing the store can specify auxiliary
processes $C_{A_i}$, as
per default, SAPiC does not permit the
adversary to emit events or access the store. 
\begin{full}
E.g., to give a read access for some deviating party $E$ to selected
    parts of the store, we define
\begin{lstlisting}
$C_{E}'\defeq$ !(in('c',<x,y>); insert <'Postbox',x> y)
    | !(in('c',x); lookup <'Postbox',t> as y in out('c',y)).
\end{lstlisting}
We leave this auxiliary process informal, as it depends on how the
store is used.
% \begin{full}
% To give an example:
% If the 
% honest behavior of the party $A_i$, i.e., $P_{A_i}$, only contains
% event constructs with facts of form $F(t)$ where $F$ is in some set 
% $F\in \Sign_{\mathrm{fact},A_i}$,
% and state manipulation constructs accessing keys of form $\langle `Ai',
% t \rangle$ for some $t$, we may define $C_{A_i}$ as follows:
% \begin{lstlisting}
%   !(in('r',<$A_i$,m>);0)
% | !(in('c',m); event F(m)) (* ($\forall F\in \Sign_{\mathrm{fact},A_i}$) *)
% | !(in('c',<x,y>); insert <'Ai',x> y)
% | !(in('c',x); delete <'Ai',x> )
% | !(in('c',x); lookup <'Ai',t> as y in out('c',y))
% \end{lstlisting}
\end{full}
With these formal requirements, we can  define the set of corrupted
parties of a trace as
$\corrupted(t)=\set{A\in\Agents \mid \Corrupt(A)\in t}$.

The accountability mechanism is defined through the accountability
protocol itself and the verdict function. We require the verdict
function to be invariant w.r.t.\ $\ET$.
\begin{example}\label{ex:mon-verdict}
    The verdict function for
    the centralized monitor protocol from Examples~\ref{ex:mon-fst}
    and~\ref{ex:mon-process}, which we sketched in
    Example~\ref{ex:mon-coarse}, can be specified as:
    \[
        \verdict(t) = \begin{cases}
            \set{\set{D}}& \text{if $t\vDash \omega_1$} \\
            \emptyset & \text{if $t\vDash \omega_2 $} \\
        \end{cases}
    \]
    for $\omega_1\defeq \exists a,i,j. \Execute(a)@i \land \mathit{LogD}(a)@j \land a\neq \NormalAction$
    and $\omega_2 \defeq \neg\omega_1$.
    As only $\set{\set{D}}$ is to be blamed in this example,
    this verdict function achieves 
    accountability for
    \[ 
    \varphi \defeq \forall a',i'.\ \Execute(a')@i' \implies
            a = \NormalAction, \]
    even w.r.t.\ the
    weakest relation $r_w$.
    This can be shown automatically by verifying the verification conditions for $r_w$
    (Def.~\ref{def:verification-conditions-coarse}), i.e.:
    \begin{itemize}
        \item exhaustiveness and exclusiveness,
        \item sufficiency of $\set{\set{D}}$
            $(\SufficiencyB[1][\set{\set{D}}])$:
            there is a trace $t$ s.t.\ 
            $t\vDash \exists i.
            \Corrupt(D)@i \land \neg (\forall j,a.\ \Execute(a)@j
            \implies a = \NormalAction)$, i.e.,
            a corrupt $D$ is able to execute an exceptional action.
        \item verifiability w.r.t.\ $\omega_1$
            ($\Verifiability[1]$):
            $\omega_1 \implies \neg \varphi$, which holds a~priori.
        \item verifiability w.r.t.\ $\omega_2$
            ($\Verifiability[2]$):
            $\omega_2 \implies \varphi$, i.e.,
            the absence of a log for an
            exceptional action  means none was effectuated.
        \item minimality of $\set{\set{D}}$ ($\Minimality[1]$):
            $\varphi \lor 
            \exists i.\allowbreak  \Corrupt(D)@i$.
            Unless $D$ was corrupted, no exceptional action was
            effectuated.
        \item uniqueness of $\set{\set{D}}$ ($\Uniqueness[1]$):
            $\omega_i \implies 
            \exists i.  \Corrupt(D)@i$. An entry in the log
            blaming $D$ can only occur if $D$ was actually corrupted.
    \end{itemize}
\end{example}

% and
% can model an a~posteriori judgement
% after inspection of all logs
% or just 
% project an event emitted by a trusted party acting as some sort
% of judge.
% %
% In case of a security violation,
% the verdict function should
% compute the set of deviating parties sufficient to cause it.
% %
% Sometimes, there is more than one such set, therefore a
% \emph{verdict} is
% a set of subsets of $\Agents$.
% %
% \begin{full}
% We will sometimes write a verdict $V$
% as a propositional formula 
% (with parties as atoms)
% in disjunctive normal form
% $\bigvee_{C\in V} \left( \bigwedge_{p\in C} p  \right)$ to clarify
% its semantics. 
% %
% For example, if a verdict $(A \land B)\lor (B \land C ) \lor (A \land
% C)$ is given, the deviation of any two of the parties $A$, $B$, $C$
% shall be sufficient to provoke the violation, e.g., when a simple
% majority was sufficient to accept a faulty input, and all three did in
% fact misbehave.
% \end{full}

% vim ft=tex

\section{Case studies for $r_w$}\label{sec:casesrw}\label{sec:implementation}

In this section and Section~\ref{sec:casesrc},
we demonstrate the feasibility of our verification
approach on various case studies in different settings.
We first concentrate on cases where the weakest counterfactual
relation $r_w$ is sufficient, including practical examples like
Certificate Transparency and OCSP-Stapling.

We implemented our translation from accountability properties to
conditions in SAPiC\footnote{Currently available in the
development branch of Tamarin and to be included in the next release: 
\url{https://github.com/tamarin-prover/tamarin-prover}.},
which provides support for
arbitrary relations (leaving the proofs for $\RelationLifting$ and
$\RelationReflexive$ to the user), the relation $r_c$ (as described in
Section~\ref{sec:ctl-flow}) and the weakest possible relation $r_w$
(Section~\ref{sec:cond-coarse}).
Our fork retains full compatibility with the classic SAPiC semantics,
with the extension for liveness properties~\cite{BaDrKr-2016-liveness}
and operates without any substantial changes to Tamarin.
\begin{full}
(We have only optimized the SAPiC-specific heuristic, which is part
of Tamarin, to deprioritize goals pertaining to annotations
introduced with the sequential self-composition in
Section~\ref{sec:ctl-flow}.)
\end{full}
By default, our fork preserves multiset rewrite rules contained in its
input, and can thus also serve as a preprocessor for accountability
protocols encoded in Tamarin's multiset rewriting calculus.

% %
% We used tamarin~\cite{SMCB-csf12}
% to verify the translated msrs.

% All five case studies are formulated in terms of an accountability
% protocol in the sense of Definition~\ref{def:accountability-protocol},
% a verdict function, and an accountability lemma w.r.t.\
% a security property $\varphi$.
% %
% In all examples, the only secrets are private keys. The
% adversary can retrieve those and thus take control over the party, in
% which case a $\Corrupt$-event is triggered. 

Our findings are summarized in Table~\ref{tab:studies}.
For each case study, we 
give the type (\checkmark{} for 
successful
          verification, \xmark{} if we discovered an attack),
          % $\infty$ means that the verification procedure diverges.
the number of lemmas generated by
      our translation,
the number of additional helping lemmas\footnote{%
SAPiC, as well as tamarin, are sound and complete, but the underlying
verification problem is undecidable~\cite{AbadiCortierTCS06}.
Therefore, analyses in SAPiC/Tamarin sometimes employ \emph{helping
lemmas} to help the verification procedure terminate. Just like security
properties, they are stated by the user and verified by the tool.
}
and the time
      needed to verify \emph{all} lemmas (even if an attack was found).
Verification was performed on a 2,7 GHz Intel Core i5 with 8GB RAM.
%MacBook Pro (Retina, 13-inch, Early 2015)
%Processor 2,7 GHz Intel Core i5
%Memory 8 GB 1867 MHz DDR3
%Startup Disk Macintosh HD
%Graphics
%ocamlc -v =  4.05.0

\begin{table}[t]
  \centering
      \caption{Case studies and results.}\label{tab:studies}
    % \renewcommand{\arraystretch}{1.2}
    % \begin{tabular}{r cc|cc|cc|cc|cc}
    %   & \multicolumn{2}{c|}{ASW} & \multicolumn{2}{c|}{ASW (mod.)}   & \multicolumn{2}{c|}{GJM}         & \multicolumn{2}{c|}{GJM (mod.)}  & \multicolumn{2}{c}{toy example}\\
    %      property          & type & time  & type & time  & type & time & type & time & type & time \\ 
    %     \hline
    %     timeliness (A)     & \checkmark{} & 1:40min  & \checkmark{} & 1:38min & \checkmark{} & 0:46min  & \checkmark & 6:08min & \xmark{} & 37s \\
    %     timeliness (B)     & $\infty$ & --- & \checkmark{} & 37:34min & \checkmark{} & 12:49min  & \checkmark & 34:49h &   &  \\
    %     fairness (A)  & \xmark  & 8:34min   & \xmark{} & 31:06min & \xmark  & 2:22min & \checkmark{} & 14:11min & \checkmark & 5:46h \\
    %                   &           &           & \checkmark{}  & 0:40h & &&&\\
    %     fairness (B)  &  && \checkmark{} & 14:05h & & &
    %                                                     \checkmark  & 43:52h  \\ 
    % \end{tabular}

    \begin{tabular}{lcccr}
        & & \# lemmas & \# helping & \\
        protocol & type & generated &  lemmas & time\\
        \toprule
        Whodunit & \\
        \quad faulty & \xmark{},$r_w$ &\ 16 & \ 0 &\ 395s \\
        \quad fixed & \checkmark{},$r_w$ &\ 8 & \ 0 &\ 112s  \\
         \midrule
        Certificate Transp. & \\%checkmark{} \\
        \quad model from~\cite{DBLP:conf/isw/BruniGS17} & \checkmark{},$r_w$ &\ 31 & \ 0 &\ 41s \\ 
        \quad extended model & \checkmark{},$r_w$ &\ 21 & \ 0 &\ 50s \\ 
        \midrule
        OCSP Stapling & \\%checkmark{} \\
        \quad trusted resp. & \checkmark{},$r_w$ &\ 7 & \ 3 &\ 945s \\ 
        \quad untrusted resp. & \xmark{},$r_w$ &\ 7 & \ 3 &\ 12s$^*$ \\ 
        \midrule
        Centralized monitor & \\%checkmark{} \\
        \quad faulty & \xmark{},$r_c$ &\ 17 & \ 0 &\ 5s \\ 
        \quad fixed & \checkmark{},$r_c$ &\ 17 & \ 0 &\ 3s \\ 
        \quad replication & \checkmark{},$r_c$&\ 17 & \ 0 &\ 7s \\ 
        \midrule
        Causality & \\%checkmark{} &\ 27 & \ 2 &\ \\          
        \quad Desert traveller & \checkmark{},$r_c$ &\ 16 & \ 0 &\ 7s \\ 
        \quad Early preempt. & \checkmark{},$r_c$ &\ 16 & \ 0 &\ 1s \\
         %^ Suzy-Billy
        \quad Late preempt. & \checkmark{},$r_c$ &\ 16 & \ 0 &\ 13s\\ 
         %^ Poisoning Shooting
        \midrule
        Accountable alg. & \\%checkmark{} &\ 27 & \ 2 &\ \\ 
        \quad modified-1 & \checkmark{},$r_c$ &\ 27 & \ 1 &\ 5792s \\ 
        \quad modified-2 & \checkmark{},$r_c$ &\ 27 & \ 1 &\  2047s \\         
        \bottomrule
    \end{tabular}
    \vskip 0.1in
    (*): time to find counter-examples for $\Verifiability$ and $\Uniqueness$.
\end{table}

\subsubsection*{Toy example: whodunit}\label{sec:whodunint}

\begin{figure}%{{{fig
\centering
\begin{lstlisting}
S := in (a); out($c_{SA}$,a); out($c_{SJ}$,a)
A := in($c_{SA}$,a); out ($c_{AJ}$,a)
J := in($c_{SJ}$,$a_1$); in($c_{AJ}$,$a_2$); if $a_1$=$a_2$ then event Equal()
                                    else event Unequal()
new $c_{SA}$; new $c_{AJ}$; new $c_{SJ}$; (S || A || J ||
 !(in(<'corrupt',x>); event Corrupted(x); out(sk(x));
     (if x='S' then out($c_{SA}$); out ($c_{SJ}$))
   ||(if x='A' then out($c_{SA}$); out ($c_{AJ}$)))))
\end{lstlisting}
\caption{Whodunit protocol\cite[Ex.~8]{cryptoeprint:2018:127}.}
\label{fig:whodunit-protocol}
\end{figure} %}}}

The example in Figure~\ref{fig:whodunit-protocol} illustrates the
difference between verdicts larger than 2 and uncertainty about the
correct verdict.
Two parties, $S$ and $A$,
coordinate on some value chosen by $S$. $S$ sends this value to $A$ 
and to a trusted party $J$. Then, A is supposed to forward this value
to $J$.
We are interested in accountability for $J$ receiving the same value
from $S$ and $A$.

The crux here is that
a correct verdict function cannot exist: 
if $J$ receives different values, 
there are two minimal explanations for her. Either
$A$ altered the value it received or $S$ gave $A$ a
different value in the first place.
Indeed, if we formalize this in two verdict functions, one blaming $A$
if the fact $\mathfun{Unequal}$ occurs in the trace, the other blaming
$S$, Tamarin finds a counterexample for each. 
If we change the protocol so that $S$ needs to sign her message, and
$A$ needs to forward this signature, then we can prove accountability
for the verdict function that blames $S$ in case of inequality. 

\subsubsection*{Certificate Transparency}

Certificate Transparency~\cite{rfc6962} is a protocol that provides
accountability to the public key infrastructure. Clients are
submitting certificates signed by CAs to logging authorities (LAs), who
store this information in a Merkle tree. Auditors validate that these
logs have not been tampered with. Based on these trustworthy, distributed
logs, clients, e.g., domain owners, may detect misbehavior, e.g., an
unauthorised CA, issuing certificates for this domain.

We base our modelling on that of Bruni et.al.~\cite{DBLP:conf/isw/BruniGS17},
which considers a simplified setting with one CA, one LA, and two
honest auditors. We first verify accountability of the CA for the
property that any certificate in the log that is tested was honestly
registered. To this end, we, as well as Bruni et.al., have to assume
access to the CA's domain registration data. We then verify
accountability of the LA for the property that any log entry that was
provided was provided consistently to all auditors. Finally, we
compose both security properties and verdict functions, and can thus
show that CA and LA can be held accountable at the same time.

While the original modelling prescribed cheating LAs to cheat in
a certain way (always provide the correct log entry to auditor $u_1$
and omit it to auditor $u_2$), we extended the model to permit
deviating LAs to selectively provide log entries.
This complicates the formulation of the consistency criterion, but
makes the model slightly more realistic.
Both models can be verified within a minute.

\subsubsection*{OCSP Stapling}

\begin{figure}%{{{fig
\centering
    \resizebox{\linewidth}{!}{
    \setmscvalues{small}
    \begin{msc}[instance distance = 4.3cm,level height = 0.65cm,
        msc keyword = {},
        % title top distance = -1cm,
        head top distance = 5mm,
        foot distance = 5mm,
        ] 
        % {Accountable algorithms protocol (fixed)}
        {}
    \declinst{O}{}{$O$}
    \declinst{S}{}{$S$}
    \declinst{J}{}{$J$}
 \mess{$\mathit{cert}$}{S}{O}
\nextlevel
        \mess{$m_1=\langle \mathit{cert},t,\mathit{good}\rangle$, $\sig(m_1,\sk_O)$}{O}{S}
\nextlevel
        \mess{$m_1,\sig(m_1,\sk_O)$}{S}{J}
\end{msc}
}
    \caption{OCSP Stapling}
\label{fig:ocsps}
\end{figure} %}}}

The Online Certificate Status Protocol (OCSP~\cite{rfc6960}) provides an interface to
query whether a digital certificate is revoked, or not. Upon a request
(which may be signed or not), a trusted responder (e.g., a CA or
a delegate) gives a signed response containing a timestamp, the certificate in
question and its revocation status (\emph{good}, \emph{revoked} or
\emph{unknown}). \emph{OCSP Stapling}~\cite{rfc6066}, is an extension
to TLS that specifies how a TLS server may attach
a recent enough OCSP response to a handshake. This reduces the
communication overhead. In addition, it avoids clients 
exposing their browsing behavior to the OCSP server via their
requests.

We model OCSP stapling as an accountability protocol between a trusted
OCSP responder $O$, an untrusted server $S$ and a trusted Judge $J$.
The judge represents a client that receives a stapled response from
the server and seeks to determine if its communication partner can be
trusted.
In addition, a clocking process emits timestamps.
Our modelling is quite simplistic, e.g., the TLS Handshake is reduced
to a forwarding of the signed timestamp.
The main challenge we focussed on was
defining the accountability property that is actually achieved.
First, note that a server can choose to reveal its secret
key at any time.
In order to make any meaningful statement about the revocation mechanism,
we have to limit ourselves to cases where, whenever a server reveals
his secret, it also revokes the corresponding certificate.
We thus
slightly diverge from the corruption procedure in
Definition~\ref{def:accountability-protocol}, and require the server
to mark his certificate as revoked upon corruption. 

We can show accountability for $\varphi=$
\newcommand{\ski}{\mathfun{sk}}
\begin{align*}
    \neg \exists c,\ski,t,i,j,k,l.&
    \mathit{Judged}(t,c)@i \land \mathit{Secret}(c,\sk)@j \\
    & \land K(sk)@k \land \mathit{Time}(t)@l \land k < l,
\end{align*}
i.e., 
whenever a client
received an OCSP response for a certificate $c$ with timestamp $t$ (at
which point $\mathit{Judged}(t,c)$ is raised),
she
can be assured that the corresponding secret $\sk$ was not leaked
(recall that $K(\sk)$ marks the adversary sending a message) 
at a point in
time $k$ prior to the emission of the timestamp $t$ at time $l$ (but possibly later).
Timestamps are here modelled as public nonces emitted by the clock
process. Prior to outputting a nonce, the event $\mathit{Time}(t)$ is
emitted, binding these nonces to positions in the trace.

To explore the limits of OCSP, we declare the OCSP responder to be
untrusted. We find an attack 
on the previous accountability property
where $O$ reports a \emph{good} certificate status to $J$, despite the
revocation triggered when corrupting $S$.
The techniques used by Milner et.al.\ to detect misuse of secrets~\cite{8049721}
could potentially be used to mitigate this issue. We leave this
research question for future work.

\section{Verifying RL and RS}\label{sec:ctl-flow}

For protocols in which parties can collude, but can also cause damage
on their own, we need to verify accountability with respect to
a relation $r$ between what actually happened, and what could have
happened with a subset of these parties (see
Example~\ref{ex:mon-snd}).
The verification conditions in Section~\ref{sec:ver} provide
a framework for many counterfactual relations $r$. 
%
% While it is perfectly fine to start from a given case distinction and
% decide ad-hoc whether the conditions $\RelationLifting$ and
% $\RelationReflexive$ can hold axiomatically --- this does not seem
% unreasonable, e.g., in case of Example~\ref{ex:mon-cases} --- 
% in most cases, one would like to verify those as well.
%
In this section, we show how $r$ can be instantiated to 
the relation $r_c$ 
discussed 
in Section~\ref{sec:relations}, 
so that the conditions
$\RelationLifting$ and $\RelationReflexive$, and thus accountability
as a whole, can be verified with SAPiC and Tamarin.
\begin{conf}
For other relations, which includes relations that cannot be formally
stated such as $r_k$, $\RelationLifting$ and $\RelationReflexive$ need
to be proven or justified on paper (but the remaining conditions can be
verified).
\end{conf}

% We decided for this relation because it 
% avoids the masking problem of $r_w$ (our implementation actually supports
% $r_w$ as well, see Appendix~\ref{app:coarse}),
% is intuitive, 
% and received sufficient attention in the literature. 

In practice, the control-flow of a process is not necessarily the
control-flow of its implementation. To leave some degree of
flexibility to the modeller, we allow for the control-flow to be
manually annotated and use a unary fact $\Control\in\FSign$ to
mark control-flow. Per default, there should be exactly one statement
$\pevent~\Control(p:\pubsort)$ on each path from the root to a leaf of the
process tree of each process corresponding to a trusted party.
We can then define $r_c(t,t')$ iff for all $p$ and $p'$ s.t.\ 
$\Control(p)\in t$
and
$\Control(p')\in t'$, $p=_\ET p'$.

The main challenge in proving $\RelationLifting$ and $\RelationReflexive$
is that SAPiC supports only trace properties.\footnote{Tamarin
supports diff-equivalence~\cite{BDS-ccs15}. This variant of
observational equivalence considers two processes unequal if they move
into different branches, hence it is not suitable for our case.}
Hence, in general, we can argue about all or some $t$, but not 
about pairs of $t$ and $t'$.
The solution is to combine $t$ and $t'$ in
a single trace,
which is their concatenation.
If for all occurrences of $\Control(p)$ in the first part,
and of
$\Control(p')$ in the second part, $p$ and $p'$ coincide,
then the $t$ and $t'$ corresponding to these two parts are in the
relation.
This technique is known as self-composition~\cite{barthe2004secure}.

Defining this sequential composition of $P$ with itself is technically
challenging and requires altering the translation from SAPiC to
multiset rewrite rules --- observe that $P;P$ is not a syntactically
valid process. Due to space limitations, we refer to
Appendix~\fvref{sec:ctl-full} and~\fvref{sec:ctl-proof}
% the full version
% of this paper~\cite{accver-full}
for the technical solution and its proof of correctness (it is
sound and complete), and will only discuss the idea and limitations.

The idea is that the adversary can start executions with a fresh
execution id, which is used to annotate visible events. To separate
executions, the adversary can trigger a stopping event. A rewriting of
the security property ensures that it is trivially true,
unless both executions are properly separated, i.e., every execution
is terminated, events are enclosed by start and stop events, and these
events themselves define disjoint intervals.

Note that for a process where a trusted party is under replication, it
is possible that two different $\Control()$-events are emitted in the
same execution, and thus the corresponding process is in no
equivalence class w.r.t.\ $r_c$.
This affects only one of our case studies (centralized monitor),
however, instead of considering a possibly replicating series of
violations, we chose to identify the party causing the \emph{first}
violation.
%
% The long version~\cite{accver-full}
Appendix~\fvref{sec:ctl-full}
discusses other possible solutions to this issue in more detail.

% vim ft=tex

\section{Case studies for $r_c$}\label{sec:casesrc}

The most challenging protocols from an accountability perspective are
those in which joint misbehaviour is possible. 
The centralized monitoring mechanism from Example~\ref{ex:mon-snd}
provides such an example, as well as the accountable algorithm's
protocol proposed by Kroll.
In both cases, an analysis w.r.t.\ a more restrictive
counterfactual relation is strictly necessary. We opt for the relation
$r_c$, relating runs having the same control-flow. To this end, we use
the elaborate verification condition for arbitrary $r$ (Def.~\ref{def:ver-conditions})
and automate the analysis of $\RelationLifting$ and
$\RelationReflexive$ by considering the sequential self-composition
of the protocol as described in the previous section.
Owing to our accountability definition's origins in causation, we will
discuss examples of `preemption' from the causation literature,
which are considered difficult to handle, but can be tamed by
considering the control-flow of execution.
Note that we omit code listings for most examples, however, they come
with the implementation.

\subsubsection*{Centralized monitor}

Example~\ref{ex:mon-snd} considered a protocol based on a central
trust monitor $M$.
Albeit modelled in a very abstract form (the actual
protocols are likely to be tailored for their use case), this kind of
mechanism occurs in various forms in plenty of real-world scenarios.
We want to demonstrate that, in principle, we can handle such
scenarios.

A party $D$ can effectuate actions,
some of which are usual (e.g., a doctor requesting his patient's
file),
some of which are not (e.g., requesting the file of another doctor's
patient).
Rather than blocking access for exceptional action, these are logged by
$M$,
(e.g., if another doctor's patient has a heart attack and needs
treatment right away), which is an accountability problem w.r.t.\ the
property that no exceptional action happened.
A supervisor $C$ (e.g., chief of medicine) is needed to effectuate a third
class of actions, exceptional actions, for which $D$ needs to get $C$'s
authorization. The processes of parties $D$, $C$ and $M$ are running
in parallel with a process that outputs their public keys and 
their private keys on request (see Fig.~\ref{fig:delegation-protocol}).

\begin{figure}%{{{fig
\centering
\begin{lstlisting}
D$\defeq$in(a);
   if isNormal(a) then
       out(<$m_1\defeq \langle \mathsf{'Do'},a \rangle$,sign($m_1$,sk('D'))>)
   else if isExceptional(a) then
        out(<$m_2\defeq\langle\mathsf{'Permit'},a\rangle$,sign($m_2$,sk('D'))>)) 
C$\defeq$in(<$m_2\defeq\langle\mathsf{'Permit'},a\rangle$,m2s>);
     if verify(m2s, $m_2$, pk(sk('D')))= true() then
        if isExceptional(a) then
          out(<$m3\defeq\langle m_2, m2s\rangle$,sign($m3$,sk('C'))>)
M$\defeq$
 (in(<$m_1\defeq \langle \mathsf{'Do'},a\rangle$,m1s>); 
  if verify(m1s,$m_1$,pk(sk('D')))=true() then
  event Control('0','1');event LogD(a);event Execute(a))
+(in(<$m3\defeq\langle m_2, m2s\rangle$,m3s>); // for $m_2\defeq\langle\mathsf{'Permit'},a\rangle$
  if verify(m3s, $m3$, pk(sk('C'))) = true() then
  if verify(m2s, $m2$, pk(sk('D'))) = true() then
  event Control('0','2');event LogDC(a);event Execute(a))

!($D$ || $C$) || $M$
|| //give access to public keys
   (out(pk(sk('D')));out(pk(sk('C')));out(pk(sk('M'))))
||!(in('c',<'corrupt',x>); 
    event Corrupted(x); out('c',sk(x))))
\end{lstlisting}
\caption{Centralized monitor.}
\label{fig:delegation-protocol}
\end{figure} %}}}

We use function symbols 
$\NormalAction/0$, $\ExceptionalAction/0$ 
to denote these kinds of actions, and 
$\sign/2$, $\verify/2$ to model digital signatures.
% %
% We model the PKI as follows:
% Public keys ($\pk/1$) are derived from secret keys ($\sk/1$). Note
% that $\sk/1$ is a private function symbol, i.e., in contrast to other
% function symbols, the adversary cannot apply it herself (but may
% deduce $\sk$-terms from other terms). 

$D$ receives an action $a$ from the adversary (for
generality) and either signs it and sends it to $M$ directly (if $a$
is `normal'), or signs a permission request, which $C$ has to sign
first (if $a$ is `exceptional').
$C$ only signs requests for exceptional actions. $M$ non-deterministically
guesses what kind of message arrives, and verifies that the signatures
are correct. If this is the case, the action is executed.

We investigate accountability for the property that only
`exceptional' or `normal' actions are executed:
\begin{align*}
    \forall a,i. & \Execute(a)@i \\
    & \implies a=\ExceptionalAction() \lor a=\NormalAction() 
\end{align*}

Within 3 seconds, our toolchain shows that the verdict function that maps the
occurrence of $\mathname{LogD}$ to $\set{\set{D}}$ and the occurrence of
$\mathname{LogDC}$ to $\set{\set{D,C}}$ provides accountability for this
property.

\begin{full}
\begin{figure}%{{{fig
\centering
\begin{lstlisting}
A$\defeq$ 0 //Poisoned -- by default, no poison
B$\defeq$ 0 //Shot -- by default, no shooting
V$\defeq$  //Victim
             (in(~p); insert 'Victim',~p; in(~s); insert 'Victim',~s) 
          +  (in(~s); insert 'Victim',~s) 
C$\defeq$ insert 'Victim', 'Healthy'; 
     lookup 'Victim' as x in (
        if x=~s then (
          event Shoot(); event Control('0','1'); event Verdict()
          )
        else (
          if x=~p then (event Poisoned(); event Control('0','2');event Verdict())
          else ( event Healthy(); event Control('0','3'); event Verdict())
        )
     )

new ~p; new ~s; (A || B || C || V ||
 !(in ('c',<'corrupt',x>); event Corrupted(x); 
     !((if x='A' then out('c',~p)) || (if x='B' then out('c',~s)))))
\end{lstlisting}
\caption{Late preemption.}
\label{fig:late-preemption}
\end{figure} %}}}
\end{full}

The first protocol design was (without intention) erroneous.  We find
two attacks
% ~\cite[Ex.~2]{cryptoeprint:2018:127} 
within 5 seconds (for
falsification and verification of all lemmas). 
\begin{full}
The first attack results from a confusion of message $m_1$ and $m_2$ (solved
here via tagging). The second results from having $m_3$ rely on a nonce
to bind it to $m_2$, which is vulnerable to a corrupted $A$
signing a different action, with the same nonce, later (we solve this
by including $m_2$ in $m_3$). 
Both manifest as a violation of
$\Minimality$ and $\Uniqueness$ in their respective cases,
since a strict subset of the faulty verdicts $\set{A}$, respectively $\set{A,B}$,
is enough to trigger the case.
\end{full}

For simplicity, $M$ is not covered under replication. Putting $M$ under
replication, it is possible to have two different
$\Control$-events in the same run, which entails that no second
trace can relate via $r_c$.
We can, however, prove a modified property $\varphi'$ which is true only if
there are no violations to $\varphi$, or \emph{at least} two. 
Intuitively, this means that we can point out which party caused the
\emph{first} violation by considering, for every violating trace, the prefix
which contains only one violation.
This is possible for all safety-properties, as they are, by
definition, prefix-closed~\cite{lamport1977proving}.
In addition, we modify $M$ to only emit $\Control$-events when acting
upon an action that is neither normal nor exceptional, i.e., we consider
only control-flow for actions which produce violations.

\subsubsection*{Examples from causation literature}

As our accountability definition is rooted in causation, we chose to
model three examples from the causation literature, two from
Lewis' seminal work on
counterfactual reasoning~\cite{Lewis1973-LEWC}, and one 
formulated by Hitchcock~\cite[p.
526]{Hitchcock2007-HITPPA}.
They all encode problems of \emph{preemption}, where
a process that would cause an event (e.g., a violation)
is interrupted by a second processes.
All examples are verified in a couple of seconds.
We refer to
Appendix~\fvref{sec:ex-causation} for details.

\subsubsection*{Kroll's Accountable Algorithms protocol} 

The most interesting case study is the accountable
algorithms protocol of Kroll~\cite[Chapter~5]{krollphd}. 
% We hence
% devote the rest of this section to it.
%
It lets an authority $A$, e.g., a tax authority, perform
computations for a number of subjects  $S_1$ to $S_n$, e.g., the
validation of tax returns.
Any subject can verify, using a public log, that the value it receives
has been computed correctly w.r.t.\ the input it provides and some
secret information that is the same for all parties.
We substantially extend this protocol to also provide accountability
for the subjects: if a subject decides to falsely claim that the
authority cheated, we can validate their claim.
This is a very instructive scenario, as now any party emitting logs or
claims can just lie. It also demonstrates that a trusted \emph{third}
party is not needed to provide accountability. While we define
a judgment process $J$, which is technically a trusted party 
(it is always honest), this party
is (a) not involved in the protocol and (b) can be computed by
anyone, e.g., a journalist, an oversight body, or the general public.

\begin{figure}%{{{fig
\centering
    \resizebox{\linewidth}{!}{
    \setmscvalues{small}
    \begin{msc}[instance distance = 3.3cm,level height = 0.65cm,
        msc keyword = {},
        % title top distance = -1cm,
        head top distance = 5mm,
        foot distance = 5mm,
        ] 
        % {Accountable algorithms protocol (fixed)}
        {}
    \declinst{L}{}{Public log}
    \declinst{A}{}{Authority $A$}
\declinst{S1}{}{Subject $S$}
 \mess{$\Init,C_y$}{A}{L}
\nextlevel
\mess{$\langle m_1 \defeq \langle 1, x, r_x', \sig_{C_x'} \rangle, \sig_{m_1}  \rangle$}{S1}{A}
        \nextlevel
        \action*{$z\defeq f(x,y)$}{A}
        \nextlevel[2]
        \mess{$\langle r \defeq \langle 2, z, r_{z},r_{x} \rangle, \sig_{r} \rangle$}{A}{S1}
        %\rangle)$}{A}{S1}
\nextlevel
        \mess{$\Log,\langle \sig_{C_x'}, ZK(\ldots), C_x' \rangle$}{A}{L}
\nextlevel
        \mess{$\Log,\langle \sig_{C_x'}, ZK(\ldots), C_x' \rangle$}{L}[.25]{S1}
% \nextlevel
        \action*{$\cdots$}{A}
        % \mess{$\langle \mathit{claim} \defeq \langle 3,x,z,r,\sig_{r},r_x' \rangle, sig_{claim}  \rangle$}{S1}[.5]{L}
\nextlevel
        \mess{$\Final$}{A}{L}
\end{msc}
}
    \caption{Honest protocol run for
    accountable algorithm protocol.}
\label{fig:honest-protocol-run-for-example-ex}
\end{figure} %}}}

\begin{figure}[!htb]%{{{fig
\centering
 \begin{lstlisting}
S$\defeq$ let res     = <'2', z, rz, rx>
      claim   = <'3', x, z, res, sig_res, rxp>
      Cxp     = commit(x, rxp)
      m1      = <'1', x, rxp, sign(Cxp, sk('S'))>
      sig_cxp = fst(log)
      zkp     = fst(snd(log))
      CxpL    = snd(snd(log))
      Cy      = fst(snd(Pub(zkp)))
  in 
  lookup <'A', 'Init'> as CyL in 
  new x; new rxp; out (<m1, sign(m1,sk('S'))>); 
  lookup <'A','Log','S'> as log in
  if and4( verZK(zkp), eq(sig_cxp,sign(Cxp, sk('S'))),
           eq(CxpL,Cxp), eq(Cy, CyL)) then 
                in (<res, sig_res>); 
                if verify(sig_res, res, pk(sk('A')))=true() then
                  out (<claim, sign(claim, sk('S'))>)

A$\defeq$ let m1 = <'1', x, rxp, sig_cxp>
      z      = f(x, y)
      res    = <'2', z, rz, rx>
      Cx         = commit(x, rx)
      Cy         = commit(y, ry)
      Cz         = commit(z, rz)
      Cxp        = commit(x, rxp)
      zkp        = ZK(<Cx,Cy,Cz>,x,y,z,rx,ry,rz)
      sig_res    = sign(res, sk('A'))
  in
  new y; new ry;
  insert <'A', 'Init'>, Cy;
  in (<m1, sig_m1>); new rx; new rz;
  if and3(verify(sig_m1,m1,pk(sk('S'))), 
      eq(x, open(Cxp,rxp)), 
      verify(sig_cxp,Cxp,pk(sk('S')))) then 
      out (<res, sig_res>);  // send result to S
      insert <'A','Log','S'>, <sig_cxp, zkp, Cxp>;
      insert <'A','Final'>, true()

!($A$ || $S$) || $J$ ||
  out(pk(sk('A'))); out(pk(sk('S')))
||  
 !(in('c',<'corrupt',x>);event Corrupted(x);out('c',sk(x));
   !((if x='A' then in(y); insert <'A','Init'>,y)
    ||(if x='A' then in(y); insert <'A','Log','S'>, y)
    ||(if x='A' then in(y); insert <'A','Final'>, y)) 
    ||(if x='A' then lookup <'A','Log','S'> as y in out(y))
    ||(if x='S' then lookup <'A','Log','S'> as y in out(y))
    ))
 \end{lstlisting}
\caption{Accountable algorithms protocol.}
\label{fig:accountable-algorithms-protocol}
\end{figure} %}}}

The goal is to compute a function $f$ on inputs $x$ and $y$; $f$
representing an arbitrary algorithm, $x$ the subject's input, and $y$
some secret policy that $f$ takes into account (e.g., an income
threshold to decide which tax returns need extra scrutiny). The policy
$y$ is the same for all subjects.
First, the authority commits on $y$ (see
Figure~\ref{fig:honest-protocol-run-for-example-ex}).
We model commitments using the symbols
$\commit/2$,
$\open/2$, $\verCom/2$ and equations
$\open(\commit(m,r),r)=m$
and
$\verCom(\commit(m,r),r)=\true()$.
Next, the subject sends its input $x$ along with randomness for its
desired commitment $r_x'$ and a signature for the commitment
$C_x'\defeq \commit(x,r_x')$.
Now the authority computes $z=f(x,y)$ and returns $z$, as well as the
randomness which was used to compute two commitments on its own, $C_x$ on $x$
and $C_z$ on $z$.
The signed commitment $C_x'$ and a zero-knowledge proof (ZKP) are
stored in a public append-only log, e.g., the blockchain. 
The log is modelled via the store and a global assumption that entries
cannot be overwritten.
The ZKP contains the three commitments $C_x$, $C_y$ and
$C_z$ and shows that they are commitments to $x$, $y$ and $z$ such
that $f(x,y)=z$. Using this ZKP, one can check that $C_y$ is the value
initially committed to, and that the input $x$ and the output $z$
are consistent. 
Now $A$ proceeds with the next subject, and appends a $\Final$ message
to the log to indicate when it is done. 
$S$ can decide to file a claim, consisting of $x$, $z$, $r_{x}'$ and the signed
message it received from $A$ in the second step.

The original protocol was implemented with hash functions and
% Zero-Knowledge Succinct Non-interactive Arguments of Knowledge
ZK-SNARKs.
Kroll's preliminary analysis~\cite[Section~5.2.2]{krollphd} was
informal and only considered holding the authority accountable.
Later analysis discovered
that any subject can falsely accuse the authority $A$ of
misbehavior~\cite{cryptoeprint:2018:127}.
Obviously, such a claim would subvert the trust into the system and
hence render the accountability mechanism useless in practice. 
Consequently, we extended the protocol with a message authentication
mechanism based on digital signatures so that $A$ and $S_i$ prove
their identity and protect themselves from false accusations, as well
as $C_x'$, which is a second commitment on $x$ generated by $S$
instead of $A$.

This commitment (along with $S$'es signature) serves as a witness
that
the input which $A$ claims to have received is indeed the one $S$ has sent,
without revealing $x$ to the public.

On the other hand, we simplified Kroll's protocol by removing
randomness generation for $f$, which was implemented using verifiable
random functions. We outline in Section~\ref{sec:concl} how this
feature calls for a definition of accountability in the computational
setting.

We model this protocol as an accountability process with three
subprocesses $A$, $S$ and
a judging procedure $J$ (see
Figure~\ref{fig:accountable-algorithms-protocol}
and~\ref{fig:accountable-algorithms-protocol-judging-procedure}).
This judging procedure essentially determines
the verdict by inspecting the log and a claim output by $S$. 
For brevity, we introduce a symbol $\mathfun{eq}/2$ with equation
$\mathfun{eq}(x,x)=\true$ and predicates $\mathfun{and3}$ and
$\mathfun{and4}$ which evaluate to true if all three or four terms
equal the constant $\true$.
We also use syntactic sugar $\mathsf{let}~v = t~\mathsf{in}~P$
to denote the literal substitution of $v$ by $t$ in $P$.

\begin{figure}[t]%{{{fig
\centering
\begin{lstlisting}
J$\defeq$ let res     = <'2', z, rz, rx>
      claim   = <'3', x, z, res, sig_res, rxp>
      sig_cxp = fst(log)  // log is <sig_cxp, zkp, Cxp>;
      zkp     = fst(snd(log))
      CxpL    = snd(snd(log))
      Cx      = fst(Pub(zkp))
      Cy      = fst(snd(Pub(zkp)))
      Cz      = snd(snd(Pub(zkp)))
in
in (<claim, sig_claim>);
if verify(sig_claim, claim, pk(sk('S'))) = true() then
  lookup <'A','Final'> as y in 
  event Final();
  lookup <'A','Init'> as CyL in
  event Comm(CyL);
  lookup <'A','Log','S'> as log in
  // first check validity of the log by itself,
  if and3(verZK(zkp), eq(CyL, Cy), //produced by A
    verify(sig_cxp,CxpL,pk(sk('S')))) //verified by A 
  then 
      if and3(verify(sig_res,res,pk(sk('A'))), // honest S verifies this
              verCom(CxpL,rxp), // produced by S
              eq(x, open(CxpL,rxp)))) // both signed by S 
      then // We now believe S is honest and its claim valid
        if and4(verCom(Cx,rx), verCom(Cz,rz),
                eq(x, open(Cx, rx)),eq(z, open(Cz, rz)))
        then 
          event Control('0','1');event HonestS();
          event HonestA();event Verified(claim)
        else 
          if oracle(x,Cy,z)=true() then // see below.
            event Control('0','2');event HonestS();
            event HonestA(); event Verified(claim)
          else 
            event Control('0','3');event HonestS();
            event DisHonestA(); event Verified(claim)
      else  // A's log is ok, but S is definitely cheating
        if oracle(x,Cy,z)=true() then
          event Control('0','4');event HonestS();
          event HonestA(); event Verified(claim))
        else 
          event Control('0','5');event DisHonestS();  
          event HonestA();event Verified(claim)
  else (// A is dishonest and produced bad log
    if oracle(x,CyL,z)=true() then
      event Control('0','6');event HonestS();
      event HonestA();event Verified(claim)
    else
      event Control('0','7');event DisHonestA();
      event DisHonestS();
      // S checks log before submitting claim$\rightarrow$ S dishonest
      event Verified(claim))
\end{lstlisting}
\caption{Accountable algorithms protocol: judging procedure.}
\label{fig:accountable-algorithms-protocol-judging-procedure}
\end{figure} %}}}

The verdict function is a trivial conversion of the events emitted by
the judging procedure $J$, which inspects the log and evaluates the
claim for consistency (see
Figure~\ref{fig:accountable-algorithms-protocol-judging-procedure}).
We decided for this approach in order to ensure that the judgement can
be made by the public, and to provide an algorithmic description on
how to do so, since a verdict function can easily be written to rely on
information that is not available or not computable.
$J$ is technically a trusted party in the sense that it is always honest.
However, J is not involved in the protocol and the fact can be
computed after the fact by anyone who receives the claim, e.g.,
a newspaper, an oversight body, or the general public.
A dishonest $J$ merely means that the verdict function is computed
incorrectly, in which case we cannot, and should not, make any guarantees.
The verdict function maps traces in which
$\DishonestA$ and $\DishonestS$ appear to be $\set{\set{A,S}}$,
and traces with
$\DishonestA$ and $\HonestS$ to be $\set{\set{A}}$.
It provides the protocol with accountability for the following property
$\varphi$:
\begin{align*}
    \forall t,x,z,r,\sig_r,r_{x'}. & 
    \Verified(\langle t,x,z,r,\sig_r,r_{x'} \rangle)@k 
    \\
     \implies \exists j,i,y,r_y. &
    \Final()@i \land \CommEv(\commit(y,r_y))@j 
    \\
    & \land z = f(x,y) \land
    i<j<k.
\end{align*}

This property guarantees that any claim considered by the judging
procedure is correctly computed w.r.t. $f$ and the initial commitment
to $y$. Note that a violation requires a $\Verified$-event, which $J$
only emits if, and only if, the first conditional and the two
subsequent lookups are successful.
The conditional formulates the requirement of a claim to be signed by
$S$, the two lookups require $A$ to indicate the
protocol is finished and to have produced an Init entry, in order to
ensure that the claim is only filed after the protocol has finished.

The judging procedure needs to rely on an external oracle to determine
whether a violation actually took place. This is a restriction, as it
requires any party making a judgement to have access to such an
oracle.
We need this for the following reason: accountability implies
verifiability, hence, even if the logs have been tampered with, and
they do not contain usable information, the verdict function needs to
nevertheless output the empty verdict if $z=f(x,y)$. For example, if $C_y$ in the
log does not match $x'$ and $z$ in claim, i.e., $z\neq f(x',\open(C_y,r_y))$,
but the logged $C_x$ is a commitment to
a different $x$, it could well be the case that $z=f(x,y)$.
Figuratively speaking, the adversary tries to trick us into thinking
something went wrong.
We represent the oracle as a function symbol $\oracle/3$ with equation
$\oracle(x,\commit(y,ry),f(x,y))=\true$, and restrain ourselves to
only use this function in the judging procedure.

The judging procedure is instructive in the constraints it
puts on the parties. Broadly speaking, they have to assist $J$
in holding the opposite party accountable by validating
incoming messages. E.g., if $A$ manipulates the log by
providing an invalid ZKP, $S$ terminates, and $\varphi$ is trivially
true.
This ensures that $J$ can count on the validity of the log unless $S$
is dishonest. On the other hand, $A$ can now stall the process.
From a protocol design perspective, this raises the following
challenges:
\begin{inparaenum}[\itshape 1.\upshape]
    \item Are there guiding principles for the judging procedure? (Our
        design involved a lot or trial and error, which is only
        a viable strategy due to the tool support we have.)
    \item Is it possible to achieve accountability for
        timeliness, i.e., the property that the protocol eventually
        finishes~\cite{BaDrKr-2016-liveness}?
    \item Can we do
        without oracle access in the judging procedure?
\end{inparaenum}

The verification of
the modified variant of Kroll's protocol takes
% 5792 seconds 
about two hours.
An alternative way of fixing this model requires a modified
zero-knowledge proof but is structurally closer to the original
proposal.
In order to avoid maintaining both $C_{x}$ (chosen by $A$)
and $C_{x'}$ (chosen by $S$), we allow $S$ to chose $C_{x}$, i.e.,
the commitment to $x$ in the public part of the ZKP, similar to how
the commitment to $y$ has to match the commitment sent out at the
beginning of the protocol. 
%
% This is structurally closer to Kroll's original proposal, however, it
% requires a slightly different proof system. 
We therefore modelled and
verified both fixes, but present only one here.
This second variant can be 
verified 
in
% 2047 seconds, 
half an hour.
In both cases, the analysis is fully automatic, but a
helping
lemma (which itself is verified automatically) was added for speedup.
It states that lookups for the same key in the log always give the
same result.

\begin{comment}
\input{test_results}
not shown
\end{comment}

\section{Related work}
\label{sec:related}

The focus of this work is on accountability in the security setting;
here we understand accountability as the ability to identify malicious
parties.
Early work
on this subject provides or uses a notion 
of accountability in either an informal way, or tailored to the
protocol and its security~\cite{%
Asokan1998a,%
krollphd,%
DBLP:conf/wpes/BackesCS05,%
DBLP:conf/esorics/BackesFM13}.
The difficulty is defining what constitutes `malicious
behavior' and what this means for completeness, i.e., the ability of
a protocol to hold \emph{all} malicious parties accountable.
Jagadeesan et\,al.\ provided the first generalized notion of accountability,
considering parties malicious if they deviate from the protocol. 
But in their model,
`the only auditor capable of providing [completeness] is 
one which blames all principals who are capable of dishonesty,
regardless of whether they acted dishonestly or not'~\cite{Jagadeesan2009a}.
Algorithms in distributed systems,
e.g., PeerReview~\cite{Haeberlen:2007:PPA:1323293.1294279},
use this notion to detect faults in the  Byzantine setting, but need a complete
local view of all participating agents.
Our corruption model is also Byzantine, however,
we work in the security setting and
thus we do not assume that a complete
view of every component or the whole communication is available. 

For the security setting,
Küsters et\,al.\ recognize that completeness according to 
this definition of maliciousness
cannot be fulfilled (while remaining sound),
because it `includes misbehavior that is
impossible to be observed by any other party [or is harmless]'.
They propose to capture completeness via policies.
Künnemann \etal~\cite{cryptoeprint:2018:127} argue that
these policies are not expressive enough.
In case of joint misbehavior, the policy is
either incomplete, unfair, or it 
encodes the accountability mechanism itself~\cite{cryptoeprint:2018:127}.
Other approaches 
focus on protocol actions as
causes for security violations~\cite{DBLP:journals/scp/GosslerM15,%
Datta2015a,Feigenbaum:2011:TFM:2073276.2073282}.
However,
% Example~\ref{ex:pki} illustrates the main issue with this approach:
not all protocol actions that
are related to an attack are necessarily malicious. Consider, e.g., an
attack that relies on the public key retrieved from a key server ---
the sending of the public key is causally related to the violation,
but harmless in itself.
While causally related protocol actions can be a filter and a useful
forensic tool, they refer us back to the original question: What
constitutes malicious misbehavior?

Künnemann \etal~\cite{cryptoeprint:2018:127} answer these questions by considering \emph{the
fact} that a party deviated as a potential
cause~\cite{cryptoeprint:2018:127}. The main difference between their
framework and the present is that
they define a calculus which allows for individual parties to deviate. In
particular, deviating parties can communicate with each other and make their
future behavior dependent on such signals. This allows to encode
`artificial' causal dependencies in their behavior, e.g., a party
$A$, instead of mounting an attack, waits for an arbitrary message
from a second party $B$ before doing so. If one only observes $A$'s
attack, the involvement of $B$ is as plausible as the idea of $A$
acting on its own. This kind of `provocation' can occur whenever $B$
can secretly communicate with $A$. As out-of-band channels cannot be
excluded in practice, this shows that accountability is
impossible. In the single-adversary setting, provocation
cannot be encoded: all deviating parties are represented by the
adversary; neither $B$ sending the provocation message to $A$, nor $A$
conditioning the attack on the arrival of this message can be
expressed. As the case studies show, the provocation problem vanishes.
%
% (In the single-adversary setting,
% applicability can be recovered by only considering
% optimal deviations, which, of course,
% gives up completeness to some extend,
% and complicates the definition of accountability).
%
This implies that the single-adversary setting comes at a loss of
generality --- although it is a commonly assumed worst-case
assumption in security. They propose `knowledge-optimal' attacks,
where only secret information is exchanged, which we \emph{conjecture} to
be equivalent to the single-adversary setting following~\cite[Lemma
3]{cryptoeprint:2018:127}.
% It assumes some means of communication between
% colluding parties and their will to cooperate.
%
% While this worst-case assumption is common in security, further
% exploration is necessary to investigate the practical consequences.
%
\begin{full}
Furthermore, they consider only 
a single counterfactual relation, which is
comparable with $r_c$ and fixes the
control-flow of all parties, except for those reverted to their
normative behavior, i.e., $\corrupted(t)\setminus \corrupted(t')$.
We chose to only fix the behavior of trusted parties,
as corrupted parties are controlled by the adversary, who, in our
calculus, does not have control-flow in the sense honest parties have.
\end{full}
\begin{full}
However, if one arrives at expressing this condition, e.g., via
messages transmitted, Lemma~\ref{thm:soundness} can be exploited to
automate most of the verification.
\end{full}
\begin{full}
Furthermore, the requirement that the violation occurs even if some
superset of $\B$ is corrupted is removed, as it follows immediately.
This is due to the fact that, again, causal dependencies between
deviating parties disappear in our corruption model.
\end{full}
% MAYBE write something, if we still have space
% Finally, we rely on 
% we use Tamarin and trace properties instead of
%         observational equivalence. Could make some conditions simpler,
%         but only simple forms of diff equivalence are really
%         supported.

Independently,
Bruni, Giustolisi and Sch{\"{u}}rmann,
propose a definition of accountability that is based on
parties choosing to deviate~\cite{DBLP:conf/isw/BruniGS17}.
In contrast to our verdict function, they consider per-party tests,
e.g., `Is party $A$ to be held accountable in this run'. 
Any set of these
tests can be interpreted as a verdict function that outputs 
the set of singleton sets for which the test gave true.
In fact, this is how we constructed the verdict function in the Certificate
Transparency case study from their tests. 
The downside to this approach is that joint accountability
is not expressible, excluding, e.g., the case studies in
Section~\ref{sec:casesrc}.
Nevertheless, it is instructive to compare their definition
to our verification conditions for $r_w$ (other relations are not
    covered).
The definition has four criteria, three of which have logically
    equivalent formulae in our set of verification conditions.
Their fourth criterion, however, is weaker (than $\Verifiability$), and our
criterion $\SufficiencyB$ is missing in their definition.
We argue that both conditions are strictly necessary: without
$\SufficiencyB$, a protocol that does not even permit a violation ($\varphi$ is always
    true), could fulfill all their criteria and still (unfairly!) blame
a party.
Furthermore, their counterpart to $\Verifiability$
% fourth criterion (which is strictly weaker than $\Verifiability$)
allows violations to remain undetected under certain circumstances. We elaborate this
in 
    Appendix~\fvref{sec:giustolisi}.
    % the full version~\cite{accver-full}.
This confirms our top-down approach: by building on 
accountability as a problem of causation, 
we were able to ground our verification conditions in
Def.~\ref{def:a-posteriori-verdict}.
We are able to specify
exactly what our verdict function is computing and can be sure to not
forget conditions.

% vim:ft=tex

\section{Conclusion and future work}\label{sec:concl}

We demonstrated the practicality of verifying accountability in
security protocols with a high degree of automation, 
and thus
call for the analysis of
existing and the invention of new protocols that provide for
accountability, instead of blindly trusting third
parties, e.g., voting protocols.
\begin{full}
While admittedly, there is no clear consensus on the choice of
counterfactuals, 
our method is adaptable to many relations,
and our case studies suggest that the one we chose for a closer look
works well.
Researchers in the AI and philosophy communities are currently
advancing our understanding of the matter. We stand to benefit from
their findings.
\end{full}

The present definitions apply to a wide-range of protocol
calculi. We implemented support for the SAPiC calculus, which allows for
a precise definition of control-flow. However, the definition we
provide cannot express computational or statistical indistinguishability.
For this reason, we had to simplify Kroll's accountable
algorithms protocol, and ignore its ability to randomize the
computation using a verifiable random function, and thus implement,
e.g., accountable lotteries. 
A computational variant of our definition would be desirable, but it
poses some technical challenges. In particular, counterfactual
adversaries shall not depend on the randomness actually used, which
prohibits a straight-forward translation and thus renders the
formulation of a generalized accountability game an interesting
question for future research.

% vim:ft=tex

\subsubsection*{Acknowledgements}
We thank Deepak Garg and Rati Devidze for discussion and advice in the early
stages of this work, and Hizbullah Abdul Aziz Jabbar for help on the
accountable algorithms protocol.
This work has been partially funded
by the German Research Foundation
(DFG) via the collaborative research center ``Methods and Tools for
Understanding and Controlling Privacy'' (SFB 1223),
and
via the ERC Synergy Grant IMPACT (Grant agreement ID: 610150).
The second author holds the Ministry of National Education Scholarship
of the Turkish Republic.

\printbibliography
\appendix
\subsection{Operational semantics}\label{app:semantics}

\subsubsection*{Frames and deduction} Before giving the formal semantics
of SAPiC, we introduce the notions of frame and deduction. A
\emph{frame} consists of a set of fresh names $\tilde n$ and a
substitution $\sigma$, and is written $\nu \tilde
n. \sigma$. Intuitively, a frame represents the sequence of messages
that have been observed by an adversary during a protocol execution
and secrets $\tilde n$ generated by the protocol, a priori unknown to
the adversary. Deduction models the capacity of the adversary to
compute new messages from the observed ones.

\begin{definition}[Deduction]
  \label{def:apip-ded}
  We define the deduction relation $\nu \tilde n . \Subst \vdash t$ as
  the smallest relation between frames and terms defined by the
  deduction rules in \autoref{fig:deduc}.
\end{definition}

\begin{figure*}[ht]
  $$
  \begin{array}{c @{\qquad}c}
  \infer[ \textsc{Dname}]{ \nu \tilde n . \Subst \vdash a }{ a\in\FN\cup\PN & a\notin\tilde n }
  &
  \infer[ \textsc{DEq}]{\nu \tilde n . \Subst \vdash t' }{\nu \tilde n . \Subst \vdash t & t=_E t'}
  \\[2mm]
  \infer[ \textsc{DFrame}]{\nu \tilde n . \Subst \vdash x\Subst}{x \in
    \Dom(\Subst) }
  & 
  \infer[ \textsc{DAppl}]{\nu \tilde n . \Subst \vdash f(t_1,\ldots,t_n) }
  {\nu \tilde n . \Subst \vdash t_1  \cdots 
    \nu \tilde n . \Subst \vdash t_n & f\in\Sigma^k\setminus
  \Sigma^k_\mathit{priv} }
  \end{array}
  $$
  \caption{Deduction rules.}
  \label{fig:deduc}
\end{figure*}

\begin{full}
\begin{figure*}
$
\infer
 {\nu \tilde n . \sigma \vdash k_2 }{  
\infer
{
\nu\tilde n.\sigma\vdash  \mathit{sdec}(c,k_1)
}
{
\infer
{\nu\tilde n.\sigma\vdash c
}
{ x_1\in\dom(\sigma)}
&
\infer
{\nu\tilde n.\sigma\vdash  k_1}
{ x_2\in\dom(\sigma)}
}
&
\mathit{sdec}(c,k_1) =_\ET k_2
 }
 $

\caption{Proof tree witnessing that $\nu \tilde n . \sigma \vdash k_2$, where $c=\mathit{senc}(k_2,k_1)$}
\label{fig:examplededuction}
\end{figure*}

\begin{example}
  If one key is used to encrypt a second key, then, if the intruder learns
  the first key, he can deduce the second. For $\tilde n=k_1,k_2$ and
  $\sigma=\set{^{\mathit{senc}(k_2,k_1)}/_{x_1}, ^{k_1}/_{x_2}}$,
  $\nu\tilde n.\sigma\vdash k_2$, as witnessed 
  by the proof tree given in Figure~\ref{fig:examplededuction}.
\end{example}
\end{full}

\subsubsection*{Operational semantics} We can now define the operational
semantics of our calculus. The semantics is defined by a labelled transition
relation between process configurations. A \emph{process
  configuration} is a 
  % 6-tuple $(\Names, \StoreA, \Processes, \Subst, \ActiveLocks, \Undelivered)$
  5-tuple $(\Names, \StoreA, \Processes, \Subst, \ActiveLocks)$
where 
\begin{itemize}
\item $\Names \subseteq \FN$ is the set of fresh names generated by the processes;
\item $\StoreA: \Mess_\Sign \to \Mess_\Sign$ is a partial function
  modeling the store;
%\item $\StoreB \subseteq \GroundFacts^\#$ is a multiset of ground
%  facts and models the multiset of stored facts;
\item $\Processes$ is a multiset of ground processes representing the
  processes executed in parallel;
\item $\Subst$ is a ground substitution modeling the messages output
  to the environment;
\item $\ActiveLocks \subseteq  \Mess_\Sign$ is the set of currently active locks
% \item $\Undelivered \subseteq  \Mess^\#_\Sign$ is the multiset of
%     messages pending delivery on the public (resilient) channel
\end{itemize}

\begin{figure*}
  \input{fig-semantics-standard}
  \caption{Operational semantics.}
  \label{fig:operationalsemantics}
\end{figure*}

The transition relation is defined by the rules in
Figure~\ref{fig:operationalsemantics}. Transitions are labelled by
sets of ground facts. For readability, we omit empty sets and 
brackets around singletons, i.e., we write $\to$ for
$\stackrel{\emptyset}\longrightarrow$ and $\stackrel{f}\longrightarrow$ for $\stackrel{\set f}\longrightarrow$. We write $\to^*$
for the reflexive, transitive closure of $\to$ (the transitions that
are labelled by the empty sets) and write $\stackrel{f}
\Rightarrow$ for $\to^* \stackrel{f}\rightarrow \to^*$. 
We can now define the set of traces, i.e., possible executions that a process admits. 

\begin{definition}[Traces of $P$]\label{def:tracesppi}
    Given a ground process $P$,
    we define the \emph{traces of $P$} as
\begin{multline*}
    \traces(P) =  \left\{( F_1,\ldots,F_n ) \mid 
       \vphantom{\pitrans{F_1}_*} \right. 
    \\
      c_0
  \pitrans{F_1}_* 
  \ldots
  \pitrans{F_n}_*
  c_n  
    \left. \vphantom{\pitrans{F_1}_*} 
    % \wedge \final(c_n)
    \right\}, \text{ where }
\end{multline*}
$c_0 = (\emptyset,\emptyset, \set{P}, \emptyset, \emptyset, \emptyset)$,
A \emph{progressing trace} additionally fulfils the condition that all
    processes in $\Processes_n$ are blocking~\cite[Def.\ 2]{BaDrKr-2016-liveness}.
\end{definition}

If we are interested in liveness properties, we will only consider the set of \emph{progressing} traces, i.e.,
traces that end with a \emph{final} state. Intuitively, a state is final if all messages on resilient channels have been delivered and the process is \emph{blocking}~\cite[Def.~2,3]{BaDrKr-2016-liveness}.

% vim: ft=tex

\subsection{Security properties}\label{app:logic}

In the \Tamarin tool~\cite{SMCB-csf12}, security properties are
described in an expressive two-sorted first-order logic. The sort
\temp is used for time points, $\Vars_\temp$ are the temporal
variables.

\begin{definition}[Trace formulae] A trace atom is either false
  $\bot$, a term equality $t_1 \approx t_2$, a timepoint ordering $i
  \lessdot j$, a timepoint equality $i \doteq j$, or an action $F@i$
  for a fact $F\in\calF$ and a timepoint $i$. A trace formula is a
  first-order formula over trace atoms. 
\end{definition}
( If clear from context, we use
$t_1=t_2$ instead of $t_1\approx t_2$,
$i<j$ instead of $i \lessdot j$,
and
$i=j$ instead of$i \doteq j$.)

% define $i \ledot j$ as syntactic sugar for $i \lessdot
% j \vee i \doteq j$. Similar, $i \lessdot j \lessdot k$ is shorthand
% for $i \lessdot j \wedge j \lessdot k$ (and variations). 
% As we will see in our case studies, this logic is expressive enough to
% analyze a variety of security properties, including liveness
% properties.

To define the semantics, let each sort $s$ have a domain $\Dom(s)$.
$\Dom(\tempsort)=\mathcal{Q}$,
$\Dom(\msgsort)=\Mess$,
$\Dom(\freshsort)=\FN$, and
$\Dom(\pubsort)=\PN$.
A function $\theta:\Vars \to \calM \cup \mathcal{Q}$ is a valuation if it
respects sorts, i.e., $\theta(\Vars_s) \subset \Dom(s)$ for all sorts
$s$. If $t$ is a term, $t\theta$ is the application of the homomorphic
extension of $\theta$ to $t$.

\begin{definition}[Satisfaction relation] 
  The satisfaction relation $(\tr,\theta) \vDash \varphi$ between
    a trace \tr, a valuation $\theta$, and a trace formula $\varphi$
    is defined as follows:
  %\arraycolsep=1.4pt
  % \begin{array} {lcl}
  %   (\tr,\theta) \vDash \bot & \multicolumn{2}{l}{\mbox{never}}\\
  %   (\tr,\theta) \vDash F@i & \mbox{iff} & 
  %   \theta(i) \in \mathit{idx}(tr)  \mbox{ and } F\theta  \in_E \tr_{\theta(i)}\\
  %   (\tr,\theta) \vDash i\lessdot j & \mbox{iff} &  \theta(i) < \theta(j)\\
  %   (\tr,\theta) \vDash i\doteq j & \mbox{iff} & \theta(i) = \theta(j)\\
  %   (\tr,\theta) \vDash t_1 \approx t_2 & \mbox{iff} & t_1\theta =_E t_2\theta\\
  %   (\tr,\theta) \vDash \neg \varphi & \mbox{iff} & \mbox{not } (\tr,\theta) \vDash \varphi\\
  %   (\tr,\theta) \vDash \varphi_1 \wedge \varphi_2 & \mbox{iff} &
  %   (\tr,\theta)\vDash \varphi_1 \mbox{ and } (\tr,\theta)\vDash \varphi_2\\
  %   % \iffullversion%
  %   % (\tr,\theta) \vDash \exists x:s .\varphi & \mbox{iff} & \mbox{there is
% % } u\in \Dom(s) ~ \mbox{such that }(\tr,\theta[x\mapsto u]) \vDash
  %   % \varphi 
  %   % \else%
  %   (\tr,\theta) \vDash \exists x:s .\varphi & \mbox{iff} & \mbox{there is
% } u\in \Dom(s) \\ && \mbox{such that }(\tr,\theta[x\mapsto u]) \vDash
  %   \varphi
  %   % \fi%
  % \end{array}
  \begin{align*}
      (\tr,\theta) &\vDash \bot && \ \text{never}\\
    (\tr,\theta) &\vDash F@i & &\iff \theta(i) \in \mathit{idx}(tr) \land  F\theta  \in_E \tr_{\theta(i)}\\
    (\tr,\theta) &\vDash i\lessdot j & &\iff   \theta(i)
      < \theta(j)\allowbreak\\
    (\tr,\theta) &\vDash i\doteq j & &\iff  \theta(i) = \theta(j)\\
    (\tr,\theta) &\vDash t_1 \approx t_2 & &\iff  t_1\theta =_E t_2\theta\allowbreak\\
    (\tr,\theta) &\vDash \neg \varphi & &\iff  \text{not } (\tr,\theta)\vDash \varphi\\
    (\tr,\theta) &\vDash \varphi_1 \wedge \varphi_2 & &\iff (\tr,\theta)\vDash \varphi_1 \text{ and } (\tr,\theta)\vDash \varphi_2\\
    % &\iffullversion%
    % (\tr,\theta) &\vDash \exists x:s .\varphi & &\iff & \text{there is
% } u\in \Dom(s) ~ \text{such that }(\tr,\theta[x\mapsto u]) &\vDash
    % \varphi 
    % \else%
    (\tr,\theta) &\vDash \exists x:s .\varphi & &\iff  \text{there is } u\in \Dom(s) \\
      &&& \phantom{\iff~} \text{such that }(\tr,\theta[x\mapsto u])\vDash \varphi.
    % \fi%
  \end{align*}
\end{definition}
For readability, we define 
$t_1 \gtrdot t_2$ as $\neg (t_1 \lessdot t_2 \vee t_1
\doteq t_2 )$ and  ($\ledot, \not \doteq, \gedot$) 
as expected. We also use classical notational shortcuts such as
$t_1 \lessdot t_2 \lessdot t_3$ for $t_1 \lessdot
t_2 \wedge t_2 \lessdot t_3$ and $\forall i \leq j.\ \varphi$ for
$\forall i.\ i \leq j \to \varphi$.
When $\varphi$ is a ground formula we sometimes simply write $\tr \vDash
\varphi$ as the satisfaction of $\varphi$ is independent of the valuation.
\begin{definition}[Validity, satisfiability]
Let $\mathit{Tr} \subseteq (\Processes(\GroundFacts))^*$ be a set of
traces. 
A trace formula $\varphi$ is said to be \emph{valid} for $\mathit{Tr}$ (written
$\mathit{Tr}  \vDash^\forall \varphi$) if for any trace $\tr \in
\mathit{Tr}$ and any valuation $\theta$ we have that  $(\tr , \theta) \vDash
\varphi$.

A trace formula $\varphi$ is said to be \emph{satisfiable} for $\mathit{Tr}$, written
$\mathit{Tr}  \vDash^\exists \varphi$, if there exist a trace $\tr \in
\mathit{Tr}$ and a valuation $\theta$ such that  $(\tr , \theta) \vDash
\varphi$.

\end{definition}

Note that $\mathit{Tr} \vDash^\forall \varphi$ iff $\mathit{Tr}
\not\vDash^\exists \neg \varphi$. 
Given a multiset rewriting system
$R$ we say that $\varphi$ is valid,  written $R \vDash^\forall
\varphi$, if $\tracesmsr(R) \vDash^\forall \varphi$.  We say that
$\varphi$ is satisfied in $R$, written $R \vDash^\exists \varphi$, if
$\tracesmsr(R) \vDash^\exists \varphi$.  Similarly, given a ground
process $P$ we say that $\varphi$ is valid, written $P \vDash^\forall
\varphi$, if $\traces(P) \vDash^\forall \varphi$, and that $\varphi$
is satisfied in $P$, written $P \vDash^\exists \varphi$, if
$\traces(P) \vDash^\exists \varphi$.

\begin{full}
\subsection{Proofs for
Section~\ref{sec:cond-coarse}}\label{sec:proof-cond-coarse}

\begin{theorem}[soundness of verification conditions]\label{thm:soundness-coarse}
    For any protocol $P$,
    predicate on traces $\varphi$
    and verdict function $\verdict: \traces(P)\to2^{2^\Agents}$
    of the form described in
    Definition~\ref{def:verification-conditions-coarse},
    if the conditions $\gamma_{\varphi,\verdict}$ hold,
    then $\verdict$ provides $P$ 
    with accountability for $\varphi$ for $r=r_w$.
\end{theorem}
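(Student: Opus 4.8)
The plan is to fix an arbitrary $t\in\traces(P)$ and establish $\verdict(t)=\apv_{P,\varphi,r_w}(t)$ by unfolding both sides. First I would invoke $\Exhaustiveness$ and $\Exclusiveness$ to see that exactly one case applies, fixing a unique index $i$ with $\omega_i(t)$ and $\verdict(t)=V_i$. Since $r_w(t,t')\iff\corrupted(t')\subseteq\corrupted(t)$ is reflexive and transitive, Definition~\ref{def:a-posteriori-verdict} applies, and writing $\mathrm{caus}(\B)$ for the statement $\exists t'.\ \corrupted(t')=\B \wedge t'\vDash\neg\varphi$, the a~posteriori verdict collapses to
\[
    \apv_{P,\varphi,r_w}(t) = \{\B \subseteq \corrupted(t) \mid \mathrm{caus}(\B)\text{ and }\forall \B'\subsetneq\B.\ \neg\,\mathrm{caus}(\B')\}
\]
when $t\vDash\neg\varphi$, and equals $\emptyset$ when $t\vDash\varphi$. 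I would then split on whether $V_i$ is empty. For the empty case, $\Verifiability$ gives $V_i=\emptyset\iff\varphi(t)$, and combined with the remark that $\apv_{P,\varphi,r_w}(t)=\emptyset$ iff $t\vDash\varphi$, both sides vanish exactly when $\varphi(t)$, settling this case.

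For $V_i\neq\emptyset$, $\Verifiability$ forces $t\vDash\neg\varphi$, and I would prove the two inclusions separately. The inclusion $V_i\subseteq\apv_{P,\varphi,r_w}(t)$ is the routine direction: for $\B\in V_i$, $\Uniqueness$ yields $\B\subseteq\bigcup_{\B'\in V_i}\B'\subseteq\corrupted(t)$, $\SufficiencyB$ yields $\mathrm{caus}(\B)$, and $\Minimality$ yields $\neg\,\mathrm{caus}(\B')$ for every $\B'\subsetneq\B$; hence $\B$ is a minimal cause contained in $\corrupted(t)$, i.e.\ $\B\in\apv_{P,\varphi,r_w}(t)$.

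The interesting direction is $\apv_{P,\varphi,r_w}(t)\subseteq V_i$, which is where $\Completeness$ enters. Given $\B\in\apv_{P,\varphi,r_w}(t)$, I would first extract from $\mathrm{caus}(\B)$ a witness trace $t'$ with $\corrupted(t')=\B$ and $t'\vDash\neg\varphi$, and place it, via $\Exhaustiveness$/$\Exclusiveness$, in a unique case $j$. The key auxiliary lemma is that this case carries a \emph{singleton} verdict $V_j=\set{\B}$: by $\Verifiability$ we have $V_j\neq\emptyset$; by $\Uniqueness$ every element of $V_j$ is contained in $\corrupted(t')=\B$; by $\SufficiencyB$ every such element is itself a cause; and by minimality of $\B$ in $\apv$, any cause contained in $\B$ must equal $\B$, so $V_j=\set{\B}$. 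With $V_j=\set{\B}$ in hand, $\Completeness$ delivers $\B\in V_i$ and closes the inclusion. Assembling both inclusions with the empty case gives $\verdict(t)=\apv_{P,\varphi,r_w}(t)$ for all $t$, which is accountability in the sense of Definition~\ref{def:accountability}.

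The main obstacle is precisely the last step: to apply $\Completeness$ I must know $\B\subseteq\bigcup_{\B'\in V_i}\B'$, i.e.\ that every set of parties jointly causing a violation within $\corrupted(t)$ already occurs among the sets collected in the verdict $V_i$ attached to the matching observation $\omega_i$. This is the content $\Uniqueness$ is meant to secure — intuitively, a minimal cause $\B$ not covered by $V_i$ would exhibit a second, equally minimal verdict corrupting parties that the observation $\omega_i$ does not account for, contradicting that $\omega_i$ pins down exactly the corruptions relevant to the violation. Making this precise, and in particular disentangling composite verdicts from parties that are corrupted in $t$ but idle (so that they form no new minimal cause), is the delicate part of the argument; I would isolate it as a standalone lemma about the union $\bigcup_{\B'\in V_i}\B'$ before combining it with the singleton-extraction lemma to conclude.
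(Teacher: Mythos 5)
Your architecture coincides with the paper's proof almost step for step: the same split on $V_i=\emptyset$ versus $V_i\neq\emptyset$, the same forward inclusion $V_i\subseteq\apv_{P,\varphi,r_w}(t)$ from $\SufficiencyB$, $\Minimality$ and $\Uniqueness$, and the same reverse-inclusion strategy of locating the witness trace $t'$ in its unique case $j$, proving $V_j=\set{\B}$, and then invoking $\Completeness$. Your singleton-extraction lemma is in fact cleaner than the paper's: the paper runs a three-way case distinction ($V_j=\emptyset$, $|V_j|\ge 2$, singleton) and needs $\Minimality[j]$ inside the $|V_j|\ge 2$ case, whereas your argument --- every member of $V_j$ is a cause contained in $\corrupted(t')=\B$ by $\Uniqueness[j]$ and $\SufficiencyB[j]$, hence equals $\B$ by minimality of $\B$ --- disposes of both nonempty subcases at once. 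Up to that point the proposal is correct.

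The step you flag at the end, however, is a genuine gap, and the standalone lemma you hope to extract is not provable from the table's conditions. $\Uniqueness$ gives $\bigcup_{\B'\in V_i}\B'\subseteq\corrupted(t)$, which is the \emph{opposite} containment to the guard $\B\subseteq\bigcup_{\B'\in V_i}\B'$ that $\Completeness$ requires; from $r_w(t,t')$ you only get $\B\subseteq\corrupted(t)$, and no condition bounds $\corrupted(t)$ by the union. Concretely, take two parties $A$, $B$ who can each alone trigger a violation, with cases keyed on whose signature appears: $\omega_1$ (violation, only $A$ signed) with $V_1=\set{\set{A}}$, $\omega_2$ (violation, $B$ signed) with $V_2=\set{\set{B}}$, $\omega_3$ (no violation) with $V_3=\emptyset$. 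Every condition in Table~\ref{tab:coarse-cond} holds, yet on a trace $t$ where both parties are corrupted but only $B$ acts, $\set{A}$ is a minimal cause under $r_w$ (a counterfactual corrupting only $A$ exists and $\set{A}\subseteq\corrupted(t)$), so $\apv_{P,\varphi,r_w}(t)=\set{\set{A},\set{B}}$ while $\verdict(t)=\set{\set{B}}$; here the guard $\set{A}\subseteq\set{B}$ fails and $\Completeness[2]$ is vacuous. You should also know the paper's own proof is no stronger at exactly this point: after establishing $\B'=\B$ in its case (c) it simply asserts that $\Completeness$ yields $\B\in V_i$, without discharging the antecedent $\B\subseteq\bigcup_{\B'\in V_i}\B'$. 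So your instinct that this is the delicate step is right, but closing it requires strengthening the hypotheses --- e.g., demanding that each $\omega_i$ determine, for every singleton case $j$, whether the parties of $V_j$ are corrupted, so that the guard (or directly $\B\in V_i$) becomes checkable --- rather than deriving it from $\Uniqueness$ as stated.
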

\begin{proof}
Note that while our implementation checks for the conditions (XH) and
    (XC) to make sure that $V_1$ to $V_n$ and $\omega_1$ to
    $\omega_n$ indeed define a function (XC) that is total on the
    traces (XH), this is already  required by definition of a verdict
    function. 

    We need to show that for all $t\in\traces(P)$ 
    the actual verdict
    coincides with the a~posteriori verdict.
    We fix $t$ and $i$ such that $t\vDash \omega_i$.

    If $V_i=\emptyset$, then,
    by $\Verifiability$,
    $\varphi(t)$.
    By definition of $r_w$ and the a~posteriori verdict,
    $\varphi(t)$
    implies
    $\apv_{P,\varphi,r_w}(t)=\emptyset=V_i$.

    If $V_i\neq\emptyset$,
    from $\SufficiencyB$
    we have that for each $\B\in V_i$,
    there exists $t'$ such that
    $\corrupted(t')=\B\neq \emptyset$
    and  
    % $t'\vDash \omega_i$.
    % %
    % By $\Verifiability$, 
    $t'\vDash \neg \varphi$.
    From $\Uniqueness$, we get that 
    $\corrupted(t') 
    = \B
    \subseteq \cup_{\B'\in V_i} \B'
    \subseteq \corrupted(t)$
    and thus
    $r_w(t,t')$.
    From $\Minimality$, we get that each such $t'$ is
    minimal.
    Finally, from $\Verifiability$, we have 
     $t\vDash \neg \varphi$
    and hence $\B\in\apv_{P,\varphi,r_w}(t)$.

    It is left to show that all $\B\in\apv_{P,\varphi,r_w}(t)$ are in
    $V_i\neq\emptyset$.
    Fix $t$, $i$ and $\B$.
    By definition of the a~posteriori verdict,
    there is $t'$ such that $r_w(t,t')$, $\neg\varphi(t')$ and
    $\corrupted(t')=\B$.
    By $\Exhaustiveness$, there is some $j$ s.t.\ $\omega_j(t')$.
    We show by case distinction, that $V_j$ is singleton.
    Case (a): $V_j=\emptyset$. Contradiction to $\neg\varphi(t')$ and
    $\Verifiability$.
    Case (b): $|V_j|\ge 2$.
    By $\Uniqueness[j]$, 
    \begin{equation}
        \bigcup_{\B'\in V_j} \subseteq \corrupted(t').\label{eq:1}
    \end{equation}
    By $\SufficiencyB[j]$, for any such $\B'\in V_j$, we have
    $t''$ with $\neg\varphi(t'')$ and $\corrupted(t'')\in V_j$.
    Furthermore, 
    \begin{equation}
        \corrupted(t'')\subsetneq \bigcup_{\B''\in
        V_j},\label{eq:2}
    \end{equation}
    as otherwise, there would be $\B''\in V_j$ that is different from
    $\B'$ but a subset of it, in which case $\SufficiencyB[j]$ and $\Minimality[j]$ would be
    in contradiction.
    From \eqref{eq:1} and \eqref{eq:2},  we conclude that
    $\corrupted(t'')\subsetneq \corrupted(t')$, contradicting the
    minimality of $t'$.

    We are thus left with the last case. Case (c): $V_j=\set{\B'}$.
    By $\Uniqueness[j]$, $\B'\subseteq \corrupted(t')=\B$.
    If $\B'\subsetneq \B$, then by $\SufficiencyB[j]$, there is $t''$
    such that $\neg\varphi(t'')$, $r_w(t',t'')$ and
    $\corrupted(t'')=\B'$. By transitivity of $r_w$,
    $r_w(t,t'')$, and thus this case would contradict minimality of $t'$.
    Hence $\B'=\B$. In this case, $\Completeness$ guarantees that
    $\B=\B'\in V_i$.
    Therefore,
    $\apv_{P,\varphi,r_w}(t) 
    = \set{ \B \mid \B\in V_i}
    = V_i = \verdict(t)$.
\end{proof}

If, for each case $\omega_i$, there is $t\in\traces(P)$ such that $t
\vDash \omega_i$, then these conditions are complete.

\begin{theorem}[completeness of verification conditions]
    For any protocol $P$,
    predicate on traces $\varphi$
    and verdict function $\verdict: \traces(P)\to2^{2^\Agents}$
    of the form described in
    Definition~\ref{def:verification-conditions-coarse},
    if $\verdict$ provides $P$ 
    with accountability for $\varphi$ for $r=r_w$,
    and for every $\omega_i$, there is $t\in\traces(P)$ s.t.\
    $\omega_i(t)$,
    then
    the conditions $\gamma_{\varphi,\verdict}$ hold,
\end{theorem}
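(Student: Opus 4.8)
The plan is to run the soundness argument of Theorem~\ref{thm:soundness-coarse} in reverse: assuming $\verdict(t)=\apv_{P,\varphi,r_w}(t)$ for every $t\in\traces(P)$, I would derive each of the seven conditions of Table~\ref{tab:coarse-cond} in turn. The two structural conditions $\Exhaustiveness$ and $\Exclusiveness$ hold because $\verdict$ is, by hypothesis, a well-defined total function presented by the given case distinction; exactly as noted in the soundness proof, these are forced by the requirement that the cases $\omega_1,\dots,\omega_n$ partition $\traces(P)$. The single fact that drives everything else is this: by $\Exclusiveness$, whenever $\omega_i(t)$ holds we have $\verdict(t)=V_i$, so the accountability hypothesis yields $V_i=\apv_{P,\varphi,r_w}(t)$. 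I would combine this repeatedly with the unfolding of $\apv$ and of $r_w$ (i.e.\ $r_w(t,t') \iff \corrupted(t')\subseteq\corrupted(t)$) and with the remark that $\apv_{P,\varphi,r_w}(t)=\emptyset$ iff $\varphi(t)$.

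The four ``local'' conditions then follow quickly. For $\Verifiability$, fix $t$ with $\omega_i(t)$; then $V_i=\apv_{P,\varphi,r_w}(t)$, so $V_i=\emptyset$ iff $\apv_{P,\varphi,r_w}(t)=\emptyset$ iff $\varphi(t)$. For $\Uniqueness$, each $\B\in V_i=\apv_{P,\varphi,r_w}(t)$ carries a counterfactual witness $t'$ with $r_w(t,t')$ and $\corrupted(t')=\B$; since $r_w$ gives $\corrupted(t')\subseteq\corrupted(t)$, we obtain $\B\subseteq\corrupted(t)$ and hence $\bigcup_{\B\in V_i}\B\subseteq\corrupted(t)$. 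For $\SufficiencyB$ and $\Minimality$ I would first invoke the extra hypothesis that every case is realized: pick $t_i$ with $\omega_i(t_i)$, so $V_i=\apv_{P,\varphi,r_w}(t_i)$. Then for $\SufficiencyB$, each $\B\in V_i$ comes with a witness $t'$ satisfying $\neg\varphi(t')$ and $\corrupted(t')=\B$, which is precisely the trace required. For $\Minimality$, the same witness shows $\B\subseteq\corrupted(t_i)$; if some $\B'\subsetneq\B$ admitted a trace $t$ with $\neg\varphi(t)$ and $\corrupted(t)=\B'$, then $\corrupted(t)=\B'\subsetneq\B\subseteq\corrupted(t_i)$ gives $r_w(t_i,t)$, so $\B'$ would contradict the minimality of $\B$ in $\apv_{P,\varphi,r_w}(t_i)$.

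I expect the main obstacle to be $\Completeness$, the only condition relating two distinct cases $i$ and $j$. Here I would again use realizability to obtain representatives $t_i$ (with $V_i=\apv_{P,\varphi,r_w}(t_i)$) and $t_j$ (with $\set{\B}=V_j=\apv_{P,\varphi,r_w}(t_j)$). From $\B\in\apv_{P,\varphi,r_w}(t_j)$ I extract a violating witness $t'$ with $\corrupted(t')=\B$ and $\neg\varphi(t')$, and from $V_j=\set{\B}$ with $\Uniqueness$ at $t_j$ I get $\B\subseteq\corrupted(t_j)$. The hypothesis $\B\subseteq\bigcup_{\B'\in V_i}\B'$ together with $\Uniqueness$ at $t_i$ gives $\B\subseteq\corrupted(t_i)$, hence $r_w(t_i,t')$, which establishes the existence part of $\B\in\apv_{P,\varphi,r_w}(t_i)$. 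For minimality, any $\B''\subsetneq\B$ witnessing a violation would, using $\B\subseteq\corrupted(t_j)$, relate to $t_j$ via $r_w$ and contradict the minimality of $\B$ in $\apv_{P,\varphi,r_w}(t_j)$; so $\B$ is minimal for $t_i$ as well and thus $\B\in V_i$. The delicate point to flag is the degenerate case $V_i=\emptyset$, where $\bigcup_{\B'\in V_i}\B'=\emptyset$ forces $\B=\emptyset$; handling this cleanly relies on $\emptyset$ never occurring as an element of a verdict, which is guaranteed once one assumes (as is natural) that an all-honest run satisfies $\varphi$.
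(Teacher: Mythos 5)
Your proof is correct and follows essentially the same route as the paper's: both derive $\Exhaustiveness$/$\Exclusiveness$ from totality of $\verdict$, instantiate accountability at a trace realizing each case to get $V_i=\apv_{P,\varphi,r_w}(t_i)$, and then unfold the definitions of $\apv$ and $r_w$ as subset inclusion of $\corrupted$-sets, with the realizability hypothesis carrying $\SufficiencyB$, $\Minimality$, and $\Completeness$ exactly as in the paper (which merely cites the already-established $\Minimality$ in its $\Completeness$ step where you inline the same argument via $t_j$). Your closing remark on the degenerate case $V_j=\set{\emptyset}$ is a genuine refinement rather than a deviation: the paper's proof silently assumes no violating trace has $\corrupted(t)=\emptyset$ (otherwise its final step ``$\B\in V_i$ by accountability'' breaks when $\omega_i$ is a non-violating case, since $\apv$ is empty whenever $\varphi(t)$ holds regardless of the existential witness), so making that assumption explicit is to your credit.
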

\begin{proof}
    Fix $\verdict$, $P$ and $\varphi$.
    We argue the conditions one by one, referring only back to
    conditions that have already been established.
    $\Exhaustiveness$ and $\Exclusiveness$ hold because $\verdict$ is
    a total function.
    $\SufficiencyB$ follows directly from
    Definitions~\ref{def:a-posteriori-verdict}
    and~\ref{def:accountability}.
    From Definition~\ref{def:a-posteriori-verdict},
    we can follow that
    $\apv_{P,\varphi,r_w}(t)=\emptyset \iff t \vDash \neg \varphi$.
    Thus $\Verifiability$ holds from the assumption that $\verdict$
    provides accountability.
    $\Minimality$: 
    By contradiction, if $\B\in V_i$ contradicts $\Minimality$,
    then, by $\SufficiencyB$, there is $t_{\B}$ such that 
    $\corrupted(t_{\B})=\B$ and $t_{\B}\vDash \neg \varphi$,
    but for $\B'\subsetneq \B$, there is $t_{\B'}$ s.t.\ $r_w(t_\B,t_{\B'})$
    and $t\vDash \neg\varphi$.
    By accountability and by the assumption that there is some
    $t\in\traces(P)$
    s.t.\ $t\vDash \omega_i$, we have that $r_w(t,t_{\B})$ and, by
    transitivity, $r_w(t,t_{\B'})$. Hence $\B$ not minimal in
    $\apv_{P,\varphi,r_w}(t)$, contradicting accountability.
    $\Uniqueness$: Fix any $t$ and $i$ s.t.\ $t\vDash \omega_i$, 
    By accountability, we have
    $\verdict(t)=\apv_{P,\varphi,r_w}(t)$,
    and thus, by definition of the a~posteriori verdict and $r_w$,
    for all $\B'\in V_i$, 
    $\B'\subseteq \corrupted(t)$.
    Therefore, 
    $\cup_{\B'\in V_i} \B' \subseteq \corrupted(t)$
    for any $\omega_i$ for which such a $t$ exists.
    $\Completeness$: For contradiction, let there be $j$, $\B$, $i$ s.t.\ $V_j=\set{\B}$,
    $\B\not\in V_i$
    but
    $\B\subseteq \cup_{\B'\in V_i} \B'$.
    By $\Uniqueness$, 
    $\cup_{\B'\in V_i} \B' \subseteq \corrupted(t)$,
    and thus $r_w(t,t_\B)$
    for $t$ s.t. $t\vDash \omega_i$ (which exists by assumption)
    and $t_\B$ s.t.\ $\corrupted(t_\B)=\B$ and $t_\B\vDash
    \neg\varphi$ (which exists by $\Sufficiency[j]$).
    By $\Minimality$, $\B$ is minimal, hence $\B\in V_i$ by
    accountability, contradicting the assumption.
\end{proof}

\subsection{Proofs for Section~\ref{sec:ver}}\label{sec:proof-cond-cases}

All but the last two conditions can be readily checked using 
SAPiC/Tamarin.
We will implement a check for the case where $r=r_c$, i.e., where
the relation 
captures the requirement that counterfactuals expose the same
control-flow for trusted parties, similar (but not equal) to the
relation used in~\cite{cryptoeprint:2018:127}.
But first, we prove that these conditions sufficient.
\begin{theorem}[soundness]\label{thm:soundness}
    For any protocol $P$,
    predicate on traces $\varphi$
    and verdict function $\verdict: \traces(P)\to2^{2^\Agents}$
    of the form described in
    Definition~\ref{def:ver-conditions},
    if the conditions $\nu_{\varphi,\verdict}$ hold,
    then $\verdict$ provides $P$ 
    with accountability for $\varphi$.
\end{theorem}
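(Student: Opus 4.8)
The plan is to fix an arbitrary $t\in\traces(P)$ and show $\verdict(t)=\apv_{P,\varphi,r}(t)$. By $\Exhaustiveness$ and $\Exclusiveness$ there is a unique index $i$ with $\omega_i(t)$, so $\verdict(t)=V_i$, and it suffices to prove $V_i=\apv_{P,\varphi,r}(t)$. I would split on the shape of $V_i$ (empty, singleton, composite), writing $D_t := \{\corrupted(t')\mid r(t,t')\wedge t'\vDash\neg\varphi\}$ for the set of corruption sets of counterfactuals, so that $\apv_{P,\varphi,r}(t)$ is $\emptyset$ when $t\vDash\varphi$ and is otherwise the set of $\subseteq$-minimal elements of $D_t$.

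The empty case is immediate: if $V_i=\emptyset$, then $\Verifiability$ gives $t\vDash\varphi$, hence $\apv_{P,\varphi,r}(t)=\emptyset=V_i$. The singleton case $V_i=\{\B\}$ is the base case and rests on one observation: every counterfactual of $t$ again falls into case $i$. Indeed, if $r(t,t')$ and $t'\vDash\neg\varphi$, then $\omega_k(t')$ for some $k$ with $V_k\neq\emptyset$ (by $\Verifiability$); $\RelationLifting$ turns $r(t,t')$ into $R_{i,k}$, and $\RelationReflexive$ (reflexive and terminating on singletons) forces $k=i$. Using $R_{i,i}$ (also from $\RelationReflexive$) together with the witness from $\SufficiencySingleton$ and $\RelationLifting$ shows $\B\in D_t$; $\Minimality$ rules out any $\corrupted(t')\subsetneq\B$ in case $i$, so $\B$ is minimal; and $\UniquenessSingleton$ forces every counterfactual corruption set to contain $\B$, so $\B$ is the unique minimal element. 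Hence $\apv_{P,\varphi,r}(t)=\{\B\}=V_i$.

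For the composite case $|V_i|\ge 2$, $\Verifiability$ first gives $t\vDash\neg\varphi$, so $\apv_{P,\varphi,r}(t)=\min_{\subseteq}D_t$, and I would reduce the claim to two lemmas about $D_t$. First, membership: for each $\B\in V_i$, $\SufficiencyR$ yields a singleton case $j$ with $R_{i,j}$ and $V_j=\{\B\}$, $\SufficiencySingleton$ (for $j$) a witness $t_j$ with $\corrupted(t_j)=\B$ and $t_j\vDash\neg\varphi$, and $\RelationLifting$ the relation $r(t,t_j)$; thus $V_i\subseteq D_t$. Second, domination: every $X\in D_t$ satisfies $\B\subseteq X$ for some $\B\in V_i$. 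Here I would use the monotonicity built into $r$ by Definition~\ref{def:a-posteriori-verdict}, namely $r(t,t')\Rightarrow\corrupted(t')\subseteq\corrupted(t)$: a counterfactual landing in case $i$ itself is a superset of every child verdict (each $t_j$ is reachable from it, so $\B=\corrupted(t_j)\subseteq\corrupted(t')=X$), while a counterfactual in a related singleton case $k$ has its verdict pinned to an element of $V_i$ by $\UniquenessSingleton$ and $\CompletenessR$. Given these two lemmas, $\MinimalityR$ (the antichain property of $V_i$) closes both inclusions: if $\B\in V_i$ were non-minimal, domination would place a second element of $V_i$ strictly below it, contradicting $\MinimalityR$; and any minimal $X\in D_t$ dominates some $\B\in V_i\subseteq D_t$, forcing $X=\B$. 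Hence $\apv_{P,\varphi,r}(t)=\min_{\subseteq}D_t=V_i$.

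The main obstacle I anticipate is precisely the domination lemma for composite verdicts, specifically controlling the counterfactuals that do \emph{not} fall into one of the intended singleton children: those landing in case $i$ itself, and those in ``foreign'' singleton cases. The monotonicity clause $r(t,t')\Rightarrow\corrupted(t')\subseteq\corrupted(t)$ handles the former cleanly, and $\CompletenessR$ together with $\UniquenessSingleton$ is meant to exclude the latter; making this airtight requires the structural fact (the forest picture accompanying Definition~\ref{def:ver-conditions}) that $\RelationLifting$, $\RelationReflexive$ and transitivity of $r$ force the relation on non-empty cases to be a one-level tree rooted at $i$, so that no related singleton verdict can be a strict subset of an element of $V_i$. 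I would therefore isolate this one-level-tree structure as a preparatory lemma and invoke it, rather than reasoning about pairs of counterfactuals by hand.
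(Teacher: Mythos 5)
Your proof is correct and takes essentially the same route as the paper's: the identical case split on $V_i$ (empty, singleton, composite), with $\Verifiability$, $\SufficiencySingleton$/$\SufficiencyR$, $\RelationLifting$, $\RelationReflexive$, $\Minimality$/$\MinimalityR$, $\UniquenessSingleton$ and $\CompletenessR$ playing the same roles at the same steps. If anything, your explicit domination lemma --- invoking the monotonicity clause $r(t,t') \implies \corrupted(t') \subseteq \corrupted(t)$ of Definition~\ref{def:a-posteriori-verdict} to handle counterfactuals that land back in the composite case $i$ itself --- is more careful than the paper's argument, which disposes of all such counterfactuals via $\CompletenessR$ and $\UniquenessSingleton[j]$ without separately treating the subcase $j=i$, where $\UniquenessSingleton[j]$ does not literally apply.
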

\begin{proof}
Note that while our implementation checks for the conditions
    $\Exhaustiveness$ and
    $\Exclusiveness$ to make sure that $V_1$ to $V_n$ and $\omega_1$ to
    $\omega_n$ indeed define a function ($\Exclusiveness$) that is total on the
    traces ($\Exhaustiveness$), this is already  required by definition of a verdict
    function. 

    We need to show that for all $t\in\traces(P)$ 
    the actual verdict
    coincides with the a~posteriori verdict.
    We fix $t$ and $i$ such that $\omega_i(t)$.

    If $V_i=\emptyset$, then,
    by $\Verifiability$,
    $\varphi(t)$.
    By definition of $r$ and the a~posteriori verdict,
    $\varphi(t)$
    implies
    $\apv_{P,\varphi,r}(t)=\emptyset=V_i$.

    If $V_i=\set{\B}$, then,
    by $\Verifiability$,
    $\neg\varphi(t)$.
    From $\SufficiencySingleton$,
    there exists $t'$ s.t.\ 
    $\corrupted(t')=\B$
    and  $t'\vDash \neg \varphi$.
    $\RelationReflexive$ and $\RelationLifting[i]$ imply $r(t,t')$.
    From $\MinimalitySingleton$ we get that, 
    for any $t^*$ such that $\corrupted(t^*)$ is a strict subset of
    $\B$ and
    $j$ such that $\omega_j(t^*)$ (by $\Exclusiveness$ and
    $\Exhaustiveness$, there is exactly one),
    % $\neg R_{i,j}$. 
    $j\neq i$. By $\RelationReflexive$, 
    this means that $\neg
    R_{i,j}$.
    By $\RelationLifting$, this implies $\neg r(t,t^*)$ or $V_j
    = \emptyset$. From $\Verifiability[j]$, $V_j=\emptyset$
    implies $\varphi(t^*)$, hence in both cases
    $\corrupted(t^*)$ is not in the a~posteriori verdict.
    Hence $t'$ is minimal w.r.t.\ $\corrupted(t')$.
    From $\RelationReflexive$ and $\RelationLifting$,
    we can conclude that any other $t''$ 
    such that $r(t,t'')$ and $\neg\varphi(i)$
    has $\omega_i(t'')$.
    From
    $\UniquenessSingleton$,
    we follow that
    $\corrupted(t'')=\B$, or is not minimal.
    Thus
    $\apv_{P,\varphi,r}(t) = \set{ \B }
    = V_i = \verdict(t)$.

    If $V_i = \set{\B_1,\ldots,\B_n}$, $n\ge 2$, then,
    by $\Verifiability$,
    $\neg\varphi(t)$.
    From $\SufficiencyR$
    we have that 
    for each $\B\in V_i$,
    there is a $j$ such that $R_{i,j}$ and $V_j = \set{\B}$.
    For each $\B\in V_i$,
    from $\SufficiencySingleton$,
    there exists $t'$ such that
    $\corrupted(t')=\B$,
    $t'\vDash \neg \varphi$.
    As $\neg\varphi(t)$ and
    $\neg\varphi(t')$, we can conclude from
    $\RelationLifting$
    that
    $r(t,t')$.
    From $\MinimalitySingleton$ we get that,
    for any $t^*$ such that $\corrupted(t^*)$ is a strict subset of
    $\B$, and any
    $k$ such that $\omega_k(t^*)$ (by $\Exclusiveness$ and
    $\Exhaustiveness$, there is exactly one),
    $\neg R_{j,k}$. 
    From $\MinimalityR$, we obtain $\neg R_{i,k}$.
    By $\RelationLifting$, this implies $\neg r(t,t^*)$ or $V_k
    = \emptyset$. From $\Verifiability[k]$, $V_k=\emptyset$
    implies $\varphi(t^*)$, hence 
    in both cases
    $\corrupted(t^*)$ is not in the a~posteriori verdict.
    Hence each such $t'$ is minimal w.r.t.\ $\corrupted(t')$.
    From $\RelationLifting$,
    we can conclude that any other $t''$ 
    such that $r(t,t'')$ and $\neg\varphi(t'')$
    has $\omega_j(t'')$
    for some $j$ with $R_{i,j}$.
    From $\CompletenessR$ and 
    $\UniquenessSingleton[j]$,
    we follow that
    $\corrupted(t'')=\B$ for some $\B\in V_i$, or is not minimal.
    Thus
    $\apv_{P,\varphi,r}(t) = \set{ \B \mid \B\in V_i}
    = V_i = \verdict(t)$.
\end{proof}

As it turns out, the conditions are also complete. Practically, this
means that a counter-example to any lemma generated from the
conditions is also an attack against accountability. 

\begin{theorem}[completeness]\label{thm:completeness}
    For any protocol $P$,
    predicate on traces $\varphi$
    and verdict function $\verdict: \traces(P)\to2^{2^\Agents}$
    of the form described in
    Definition~\ref{def:ver-conditions},
    if 
     $\verdict$ provides $P$ 
    with accountability for $\varphi$,
    then  the conditions
    $\nu_{\varphi,\verdict}$ hold (for suitably chosen $R$ and
    $\omega_i$).
\end{theorem}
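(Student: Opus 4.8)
The plan is to prove the converse of the soundness argument (Theorem~\ref{thm:soundness}): assuming $\verdict(t)=\apv_{P,\varphi,r}(t)$ for every $t\in\traces(P)$, I reconstruct a case distinction $\{\omega_i\}$, verdicts $\{V_i\}$ and a relation $R$ for which every formula of Table~\ref{tab:ver-conditions} holds. Following the footnote after Definition~\ref{def:ver-conditions}, the organizing idea is to read the cases off the $r$-structure of traces. Since $r$ is reflexive and transitive, the \emph{mutual} relation $t\sim t'\defeq r(t,t')\wedge r(t',t)$ is an equivalence relation; I take the $\omega_i$ to be indicators of its equivalence classes, set $V_i$ to the common value of $\apv_{P,\varphi,r}$ on class $i$, and let $R_{i,j}$ hold iff some (equivalently, every) $t\in\omega_i$, $t'\in\omega_j$ satisfy $r(t,t')$.

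First I would check that this is well defined, using two preliminary facts. (i) $\varphi$ is constant on each class: if $t\sim t'$ and $\neg\varphi(t)$, then $t$ is itself a counterfactual for $t'$ with $\corrupted(t)=\corrupted(t')$ (equality because $r(t,t')$ and $r(t',t)$ force $\corrupted(t')\subseteq\corrupted(t)\subseteq\corrupted(t')$), so $\apv_{P,\varphi,r}(t')\neq\emptyset$ and hence $\neg\varphi(t')$. (ii) $\apv_{P,\varphi,r}$ is constant on each class, because by transitivity the forward-reachable sets of counterfactuals of $t$ and $t'$ coincide. Across two \emph{distinct} classes transitivity makes $r$ all-or-nothing, so $R$ is well defined and $\RelationLifting$ holds by construction for all $i,j$ with $V_i,V_j\neq\emptyset$. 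The bookkeeping conditions are then immediate: $\Exhaustiveness$ and $\Exclusiveness$ since the classes partition $\traces(P)$, and $\Verifiability$ since $\apv_{P,\varphi,r}(t)=\emptyset$ iff $\varphi(t)$, combined with (i).

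The substantive conditions follow from accountability together with the minimality built into the apv. For a class $i$ with singleton $V_i=\{\B\}$, unfolding Definition~\ref{def:a-posteriori-verdict} gives a counterfactual $t^{*}$ with $\corrupted(t^{*})=\B$ and $\neg\varphi(t^{*})$, and minimality of $\B$ forces the forward-reachable witnesses of $t^{*}$ back into class $i$, yielding $\SufficiencySingleton$; the clauses $\UniquenessSingleton$ ($\B\subseteq\corrupted(t)$ on the class) and $\Minimality$ (no trace of the class corrupts a proper subset of $\B$) are exactly the $\corrupted(t')\subseteq\corrupted(t)$ clause and the minimality clause of the apv. For a class with $|V_i|\ge 2$, each $\B\in V_i$ is a minimal cause whose witness lives in a singleton class $j$ with $V_j=\{\B\}$ and $R_{i,j}$ ($\SufficiencyR$); conversely any singleton $V_j=\{\B\}$ with $R_{i,j}$ contributes, by accountability and transitivity of $r$, a cause of $t$, so $\B$ is contained in a member of $V_i=\apv_{P,\varphi,r}(t)$ ($\CompletenessR$); and $\MinimalityR$ is the syntactic shadow of apv-minimality, since $\B'\subsetneq\B$ with both in one verdict would contradict minimality.

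The step I expect to be the main obstacle is matching the classification by verdict cardinality (singleton versus $|V_i|\ge 2$) to the $r$-structural classification, i.e.\ proving $\RelationReflexive$: that a class carrying a singleton verdict is \emph{terminal}, relating forward to no other non-empty class. This is the delicate dual of the case split in the soundness proof, and it is precisely where the interplay between apv-minimality and the $\corrupted$-monotonicity of $r$ must be exploited carefully — one has to control runs with non-causal, redundant corruptions, which still carry a singleton verdict yet may $r$-relate to a strictly smaller non-empty class. Handling this cleanly will likely require keying the singleton/composite split on the $r$-ordering (leaves versus roots of the induced forest) rather than on the raw cardinality of $V_i$, and then showing, under accountability, that the two groupings agree; establishing that agreement, together with the well-definedness of $R$, is the heart of the argument.
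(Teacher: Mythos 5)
Your construction is the same as the paper's: the paper, too, forms the mutual-relatedness classes $t\sim t' \iff r(t,t')\land r(t',t)$, uses transitivity to observe that relatedness between distinct classes is all-or-nothing, takes the cases to be (refinements into) these classes with $R_{i,j}$ the induced cross-class relatedness, and then discharges $\Exhaustiveness$, $\Exclusiveness$, $\Verifiability$, $\SufficiencySingleton$, $\SufficiencyR$, $\Minimality$, $\MinimalityR$, $\UniquenessSingleton$ and $\CompletenessR$ by unfolding Definition~\ref{def:a-posteriori-verdict} essentially as you sketch. (Your well-definedness facts (i) and (ii) are handled in the paper by splitting the \emph{given} cases along the classes, so constancy of the verdict per case is automatic; that difference is bookkeeping.) The one step you stop short of proving is exactly the one step where the paper offers no argument at all: after defining the cases and $R$, the paper's proof says in a single sentence that ``Then $\RelationLifting$ and $\RelationReflexive$ hold'' and moves on.

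Your suspicion about that step is well founded, and the repair you propose would not close it. Under only the axioms of Definition~\ref{def:a-posteriori-verdict} (reflexivity, transitivity, and $r(t,t')\Rightarrow\corrupted(t')\subseteq\corrupted(t)$), take a violating trace $t$ in which a corrupted $A$ causes the violation while $B$ is corrupted but idle: then $\apv_{P,\varphi,r}(t)=\set{\set{A}}$, the class of $t$ has $\corrupted\equiv\set{A,B}$ (mutual relatedness forces equal corrupted sets) and a singleton verdict, yet it relates one-way into the distinct non-empty class of the apv witness $t'$ with $\corrupted(t')=\set{A}$. No re-choice of the $\omega_i$ escapes this: $t$ and $t'$ cannot share a case, because $\RelationLifting$ with $i=j$ leaves no consistent value for $R_{i,i}$ when $r(t,t')$ holds but $r(t',t)$ fails; and once they occupy distinct non-empty cases, $\RelationLifting$ forces $R_{i,j}$ while $\RelationReflexive$ forbids it. So the agreement you hope to establish between the leaf/root grouping and the singleton/composite grouping is false in general --- a singleton-verdict class with redundant corruptions is not a leaf --- and already fails for $r_w$ expressed in this framework, although accountability can hold. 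The theorem is rescued only by an assumption on $r$ that the paper leaves implicit and that its footnote's forest picture silently presupposes: relatedness between non-empty-verdict classes must emanate only from composite-verdict classes. This does hold for the mechanized control-flow relation $r_c$ of Section~\ref{sec:ctl-flow}, whose relatedness among control-emitting traces is symmetric, so that a redundantly corrupted trace lands in the \emph{same} class as its slimmed-down counterfactual and singleton classes are terminal by construction. In short: your proposal matches the paper's proof everywhere the paper actually argues, and the step you flagged as the heart of the matter is a genuine gap --- in the paper's proof as much as in your plan.
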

\begin{proof}
    Assuming that $\verdict$ provides $P$ with accountability for $\varphi$,
    we argue the conditions one by one, referring only back to
    conditions that have already been established.
    $\Exhaustiveness$ and $\Exclusiveness$ hold because $\verdict$ is
    a total function.
    As $r$ is 
    reflexive
    we can define equivalence classes
    with $t = t' \iff r(t,t') \land r(t',t)$.
    As $r$ is transitive, either all members of an equivalence class
    are related to all members of another equivalence class, or none
    is. 
    We can thus split the cases according to these equivalence
    classes and define $R_{i,j}$ iff
    $i$ and $j$ result from the same case and the members in the
    equivalence classes that split this case are related.
    W.l.o.g.\ we chose the case distinction so that every
    cases covers a trace, i.e., 
    for every $\omega_i$, there is
    $t\in\traces(P)$ such that $\omega_i(t)$.
    Then
    $\RelationLifting$ 
    and $\RelationReflexive$ hold.
    $\SufficiencySingleton$ follows directly from
    Definitions~\ref{def:a-posteriori-verdict}
    and~\ref{def:accountability}.
    $\SufficiencyR$: If $\B$ in $\apv_{P,\varphi,r}(t)$, then there is
    a minimal (w.r.t.\ $\corrupted(\cdot)$) $t'$ with $r(t,t')$, $\corrupted(t')=\B$, and
    $t'\vDash \neg \varphi$.
    By $\Exhaustiveness$, there is a $j$ such that $\omega_j(t')$,
    and from $r(t,t')$ and $\omega_i(t)$, we follow by
    $\RelationLifting$ that $R_{i,j}$.
    Therefore $V_j$ is singleton: if it were empty, $t'\vDash \neg\varphi$
    would not hold, if $|V_j|\ge 2$, then 
    there are two distinct $t''$ such that $r(t',t'')$, $\neg\varphi(t')$ and
    $\corrupted(t'')\subset\B$, hence at least one of them has
    $\corrupted(t'')\subsetneq\B$.
    By transitivity of $r$, this would contradict the minimality of
    $t'$.
    Thus there is a $j$ for every $\B\in V_i$ s.t.\ $V_j=\set{\B}$ with
    $R_{i,j}$.
    From Definition~\ref{def:a-posteriori-verdict},
    we can follow that
    $\apv_{P,\varphi,r}(t)=\emptyset \iff t \vDash \neg \varphi$.
    Thus $\Verifiability$ holds from the assumption that $\verdict$
    provides accountability.
    $\MinimalitySingleton$: By contradiction, if $\apv_{P,\varphi,r}(t)=\set{\B}$,
    but there was $t'$ such $\omega_j$, $R_{i,j}$ and
    $\corrupted(t')=\B'$ a strict subset of $\B$, 
    then $r(t,t')$ by $\RelationLifting$
    ($V_j\neq \emptyset$, as otherwise there could not be a strict
    subset $\B'$), 
    and thus by $\Verifiability$,
    $t''\vDash \neg \varphi$, contradicting minimality.
    $\MinimalityR:$ This would directly contradict minimality in
    Definition~\ref{def:a-posteriori-verdict}.
    $\UniquenessSingleton$: If there is $t$ with
    $\B  \not\subseteq  \corrupted(t)$, but
    $\apv_{P,\varphi,r}(t)=\set{\B}$,
    then by
    $\Verifiability$,
    $t \vDash \neg\varphi$.
    Let $t'$ the minimal trace (set inclusion order w.r.t.\
    $\corrupted(t')$), such that this holds. Then 
    $\corrupted(t')$ is either a strict subset of
    $\corrupted(t)$ (contradicting minimality),
    or an incomparable set (contradicting the assumption that
    $\apv_{P,\varphi,r}$ is singleton, and thus $V_i$ is singleton).
    %
    % Finally, $\UniquenessR$: 
    %     Proof by contradiction: Assume $t$ s.t.\
    %     $\apv_{\varphi}(t)=V_i$
    %     and
    %     $\cup_{\B\in V_i} \B  \not\subseteq  \corrupted(t)$.
    %     %
    %     This implies that there is some $\B\in V_i$ such that 
    %     $\B \not\subseteq  \corrupted(t)$. 
    %     Fix $\B$. 
    %     %
    %     By
    %     $\Verifiability$,
    %     $t \vDash \neg\varphi$.
    %     %
    %     By Definition~\ref{def:a-posteriori-verdict},
    %     from $\B\in V_i$, there must be $t'$ s.t.\
    %     $r(t,t')$, $\corrupted(t')=\B$ and $t' \vDash \neg
    %     \varphi$.
    %     %
    %     But $r(t,t')$ and 
    %     $\corrupted(t')=\B \not\subseteq  \corrupted(t)$
    %     is in contradiction to
    %     the condition that $r(t,t') \implies \corrupted(t) \supseteq
    %         \corrupted(t')$ from
    %         Def.~\ref{def:a-posteriori-verdict}.
    Finally, $\CompletenessR$:
    Let $t$ such that $\verdict(t)=\apv_{P,\varphi,r}(t)$
    and $\omega_i(t)$ with $|V_i|\ge 2$.
    Again, proof by contradiction: Assume $j$ with $R_{i,j}$, 
    $V_j=\set{\B}$
    but
    neither
    $\B\in V_i$ nor a superset of some $\B'\in V_i$.
    Due to 
    $\SufficiencySingleton[j]$
    and
    $\RelationLifting$, 
    there is $t'$ such that $r(t,t')$, $\neg\varphi(t')$ and
    $\corrupted(t')=\B$.

    We will now repeat part of the argument from the proof to
    Theorem~\ref{thm:soundness} to show that $\B$ is minimal and thus
    ought to be in $\apv_{P,\varphi,r}(t)$.
    From $\MinimalitySingleton[j]$ we get that, 
    for any $t^*$ such that $\corrupted(t^*)$ is a strict subset of
    $\B$ and
    $k$ such that $\omega_k(t^*)$ (by $\Exclusiveness$ and
    $\Exhaustiveness$, there is exactly one ),
    % $\neg R_{i,j}$. 
    $k\neq j$. By $\RelationReflexive$, 
    this means that $\neg
    R_{j,k}$.
    From $\MinimalityR$ and the fact that $\B$ is not a superset of any
    $\B'\in V_i$, we obtain $\neg R_{i,k}$.
    By $\RelationLifting$, this implies $\neg r(t,t^*)$ or $V_k
    = \emptyset$. From $\Verifiability[k]$, $V_k=\emptyset$
    implies $\varphi(t^*)$, hence 
    in both cases
    $\corrupted(t^*)$ is not in the a~posteriori verdict.
    Hence each such $t'$ is minimal, and thus $\B\in\apv_{P,\varphi,r}(t)$,
    but $\B\not\in\verdict(t)$.
\end{proof}

\subsection{Sequential self-composition}\label{sec:ctl-full}

We solve this problem by combining $t$ and $t'$ in a single trace,
which is their concatenation.
If for all occurrences of $\Control(p)$ in the first part,
and for all occurrences of $\Control(p')$ in the second part $p=p'$,
then the $t$ and $t'$ corresponding to these two parts are in the
relation.

Observe, however, that $P;P$ is not necessarily a syntactically correct
process. To define sequential self-composition, we modified SAPiC's
translation from processes to multiset rewriting rules and SAPiC's 
translation from (SAPiC) lemmas to (Tamarin) lemmas.
We describe the former in terms of the difference to the original
SAPiC translation, to avoid a review of the complete translation:
\begin{enumerate}
    \item SAPiC annotates the initial multiset rewrite rule (msr) 
        with a fact $\Init()$ and restricts traces to those
        containing at most one occurrence of this fact. We modified
        this fact to contain a fresh execution identifier $\eid$ and
        allow multiple instances of this rule with distinct identifiers.
        \label{it:init}
    \item We added exactly one additional msr, which permits 
        emitting a fact $\Stop(\eid)$, but only after the
        aforementioned initial rule was instantiated with the same
        $\eid$.\label{it:stop}
    \item Wherever any fact (event) is emitted, we also
        emit an $\Event(\eid)$-fact at the same time-point,
        where $\eid$ is the execution identifier chosen at
        initialisation.\label{it:trans-event}
\end{enumerate}
We can now talk about two executions (distinguished by $\eid$)
within one trace, but they might be interleaved.
Within Tamarin, we can `restrict' the set of traces
by axiomatic statements in the same logic used to define security
properties. Proving $\varphi$ with restriction $\alpha$ essentially
boils down to proving $\alpha \implies \varphi$.
We use this mechanism to impose the following restrictions in addition
    to those imposed by SAPiC's translation. (For brevity, all free
    variables are universally quantified.)

\begin{itemize}
    \item Every execution is terminated:
        $\Init(\eid)@i \implies \exists k. \Stop(\eid)@k \land
        i < k$.\footnote{The notation $F@i$ means
    the fact $F$ was emitted (by an event) at time-point
    $i$. See Appendix~\ref{app:logic} for details.}
\item $\Event$-facts (and hence all other facts) are enclosed by
        $\Init$ and $\Stop$:
 $\Init(\eid)@i \land \Stop(\eid)@j \land \Event(\eid)@k \implies i < k \land k < j$.
    \item $\Init$ and $\Stop$ define disjoint intervals:
     $\Init(\eid_1)@i \land \Stop(\eid_1)@j \land \Init(\eid_2)@k \land \Stop(\eid_2)@l \implies
        (j < k \land j < l) \lor (l < i \land l < j) \lor (i=k \land j=l)$.
\end{itemize}

Let $\AssSeq$ be the conjunction of these formulae. 
We rewrite a formula $\varphi$
to apply to a given
execution identified by $\eid$ by substituting any atom of form 
$F@i$ by the conjunction $F@i \land \Event(\eid)@i$. We denote this
rewritten formula by $\varphi^\eid$.
%
% Roughly speaking, if $\sem{P}$ is the modified translation from
% a process $P$ to a set of msrs, and $\alpha$ are the restrictions that
% ensure locking, store access etc.\ are working correctly, the
% correctness of SAPiC can be stated as follows: 
% $ P \vDash \varphi \iff \sem{P} \vDash (\alpha \implies \varphi)$, for
% logical formulae $\varphi$ that do not contain reserved keywords.
% %
% Here, $\alpha$ contains a condition $\alpha_\mathit{init}$
% restricting executions to have at most one $\Init$-event. Let
% $\alpha'$ be $\alpha$ without this condition. Then,
%
%
We can now
express one direction of the co-implication $\RelationLifting$, i.e.,
for $V_i,V_j\neq \emptyset$ and $R_{i,j}$,
as follows:
\begin{align*}
    \AssSeq & \land 
    \forall i,j,\eid_1,\eid_2 . \Init(\eid_1)@i \land \Init(\eid_2)@j \\
    &\land \omega_i^{\eid_1} \land \omega_j^{\eid_2} \\
     \implies & 
     \forall p_1,t_1,p_2,t_2 .\, \Control(p_1)@t_1 \land \Event(\eid_1)@t_1 \\
    & \land \Control(p_2)@t_2 \land \Event(\eid_2)@t_2 \\ 
    & \implies p_1 =_\ET p_2.
\end{align*}
Intuitively, if the two executions identified by $\eid_1$ and $\eid_2$
are properly aligned (w.r.t.\ 
$\AssSeq$) and fall into cases $\omega_i$ and $\omega_j$,
then the control-flow chosen coincides.
The reverse implication, i.e., for
$V_i,V_j\neq \emptyset$ and $\neg R_{i,j}$,
is expressed using the same formula, 
but negating the top-level consequent ($\forall p_1,t_1,p_2,t_2. \cdots$).

The soundness and completeness of this approach is not difficult to
show (see Appendix~\ref{sec:ctl-proof}),
but depends on how exactly SAPiC translates protocols into multiset
rewrite rules and is thus rather technical.
We only sketch the
argument here: For soundness, the restrictions in
$\AssSeq$ only weakens the lemma. The rewriting of $\omega_i$ to
$\omega_i^\eid$ is correct by Item~\ref{it:trans-event} of the
translation. The consequent encodes the definition of $r_c$.
For completeness, observe that processes are ground, hence messages
emitted or compared to in the second run cannot contain names from the
first run -- unless private function symbols are used, in which case
completeness does not hold.
Barring private function symbols, these messages thus do not improve
the deductive capabilities of the adversary. This can be shown by the
induction over the four message deduction rules.
Hence, at each point, the second execution of the process
can reduce iff they could reduce in the first run. 

%MAYBE Connection to observational equivalence. Is it possible we can solve
% observational equivalence this way? think about it...

\subsubsection{Application to other relations.}

In some cases, the same method can even be used to capture the
relation $r_v$, which relates actual trace and counterfactual trace
that contain the same \emph{instance} of the violation. 
Consider the following example: Let $\varphi\defeq \forall
S,i. \mathit{Action}(S,O,a)@i \implies \exists j.
\mathit{Request}(S,O,a)@j$, i.e., for any action $a$ performed by
a subject $S$ on an object $O$, there was a previous request. Note
that $O$ and $a$ are free, i.e., we consider the object and the action
important, but not the subject, so that
$r_v$ consider violations the same if they pertain 
to the same object and the same action.

As any violation is characterized by an $\mathit{Action}$-event
without matching request, we can use the above mechanism to verify
$\RelationLifting$ for $r_v$. We annotate each event
$\mathit{Action}(S,O,a)$ with an additional event $\Control(\langle
O,a\rangle)$. This method applies if we can identify a set of
event-constructs in the processes that characterize all violations,
e.g., if the last reduction of any length-minimal violating execution
is a reduction of such an event from this set.

\subsubsection{Limitations}

Note that for a process where a trusted party is under replication, it
is possible that two different $\Control()$-events are emitted in the
same execution, and thus the corresponding process is in no
equivalence class w.r.t.\ $r_c$.
This affects only one of our case studies (centralized monitor),
however, instead of considering a possibly replicating series of
violations, we chose to identify the party causing the \emph{first}
violation.

We nevertheless discuss how to generalize $r_c$ and caveats in doing
so. Consider $r_c'(t,t')$ iff for all $i$, $p$ and $p'$,
$\Control(i,p)\in t$ and $\Control(i,p')\in t'$
implies $p=_\ET p'$.
Here, $\Control()$ events can be guaranteed to be different via $i$,
which can, e.g., contain  
a counter, or an incoming message (if these are unique).
Philosophically, however, it is not clear whether the counterfactual
control-flow needs to have the same order as the actual control-flow,
whether the order is irrelevant and $t'$ should just describe the
set (or a subset of) $\set{ p | t \vDash \exists i. \Control(i,p)}$.
Practically, $r_c'$ also requires the verdict function to aggregate the
verdict over all violations that occurred.
Technically, $r_c'$ requires the implementation of
first-order cases that can treat
several instances of $\RelationLifting$ in a single
SAPiC/Tamarin-lemma, as there can be an infinite number of
equivalence classes w.r.t.\ $r_c'$. 
These options of generalising $r_c$ need to be compared and evaluated,
which we consider out of the scope of this work.
Conceptually, however, most our results transfer to more refined
versions of $r_c$:
Theorem~\ref{thm:soundness}~and~\ref{thm:completeness} apply no
matter which relation is chosen (their proofs also apply to a countably
infinite number of cases), and the method for sequential
self-composition presented here applies as long as $r_c'$ can be
described as a property over two traces.

\end{full}
\begin{full}
\subsection{Proofs for Section~\ref{sec:ctl-flow}}\label{sec:ctl-proof}

We use the notation and translation as defined in~\cite{KK-jcs16}. The
modified translation for the analysis for liveness
properties~\cite{BaDrKr-2016-liveness} uses this proof largely in a black-box
manner. The proof we present here is described in terms of
modifications to the proof in~\cite{KK-jcs16},
the black-box proof in~\cite{BaDrKr-2016-liveness}
applies in a straight-forward manner.

Additionally, we introduce the following notation:
For $t=(t_1,\ldots,t_n)$ and $e\in\FN$, let
$t^e$ denote
$(t_1,\Event(e),\allowbreak,\ldots,t_n,\Event(e))$.

We modify each axiom in $\Ass$ to pertain to only one execution, i.e.,
$\Ass'\defeq{} \forall e. \Ass^e$ is considered. We modify $\filter$
accordingly. The set of reserved names (and thus $\hide$) is extended
with $\Stop$ and $\Event$. (As demonstrated~\cite{BaDrKr-2016-liveness}, the
axiom $\AssIn$ can be disregarded for a large class of properties,
which includes all properties we have encountered so far. As this
improves performance, our implementation also ignores this axiom.) 

\begin{theorem}[soundness of sequential self-composition]
    Given a well-formed ground process $P$ and
    $t_1,t_2\in\hide(\allowbreak\filter(\tracesmsr(\sem{P})))$
    and fresh names $e_1,e_2\in\FN$,
    $t_1^{e_1}\cdot t_2^{e_2} \in \hide(\filter(\tracesmsr(\sem{P}_a)))$.
\end{theorem}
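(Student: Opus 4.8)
\emph{Approach.} The plan is to reduce the statement to exhibiting one concrete execution of $\sem{P}_a$ whose image under $\filter$ and $\hide$ is the concatenation $t_1^{e_1}\cdot t_2^{e_2}$, and to obtain that execution from a purely \emph{sequential} schedule: run to completion the execution witnessing $t_1$ under the identifier $e_1$, fire the $\Stop(e_1)$ rule, then run to completion the execution witnessing $t_2$ under the identifier $e_2$, and fire $\Stop(e_2)$; residual $e_1$-processes are simply never scheduled again. Because traces are defined only up to bijective renaming of fresh names, I first rename the fresh names of the second run to be disjoint from those of the first; this is harmless and is used below.

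\emph{Simulation lemma.} The first ingredient is that, for a fixed identifier $e$, every execution of $\sem{P}$ lifts rule-for-rule to an execution of the $e$-instance of $\sem{P}_a$. Indeed, $\sem{P}_a$ differs from $\sem{P}$ only in the three points of the construction: the initial rule additionally carries a fresh $\eid$ and may fire repeatedly with distinct identifiers (Item~\ref{it:init}), there is a single extra rule that may emit $\Stop(\eid)$ once the matching $\Init(\eid)$ has fired (Item~\ref{it:stop}), and every event-emitting rule simultaneously emits $\Event(\eid)$ (Item~\ref{it:trans-event}). Every rule application of $\sem{P}$ is thus available in $\sem{P}_a$ verbatim up to the extra $\Event(e)$ produced at each event position, so a run of $\sem{P}$ with (filtered, hidden) trace $\tau$ gives a run of $\sem{P}_a$ with trace $\tau^{e}$, to which the $\Stop$-rule can always be appended. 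Applying this to the two witnessing runs yields an $e_1$-run producing $t_1^{e_1}$ bracketed by $\Init(e_1),\Stop(e_1)$, and an $e_2$-run producing $t_2^{e_2}$ bracketed by $\Init(e_2),\Stop(e_2)$.

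\emph{Sequential independence (the crux).} It remains to show that replaying the $e_2$-run in the state left by the stopped $e_1$-run stays a valid execution, i.e.\ that no rule of the $e_2$-run becomes disabled by the residue of the $e_1$-run. The side conditions to re-examine are the deduction premises of inputs and outputs, the store lookups, and the locks. Deduction is monotone in the frame (\autoref{def:apip-ded}), so the messages published by the $e_1$-run only enlarge the adversary's knowledge and can never disable an $e_2$-input; the output premises are preserved for the same reason. For the store and the locks I use groundness of $P$ together with the disjointness of the $e_2$-run's fresh names: every non-constant key, looked-up cell, and locked term of the $e_2$-run mentions only $e_2$-fresh names, so the residual store entries and still-held locks of the $e_1$-run concern strictly different terms and can neither satisfy an else-lookup that ought to fail nor block a lock. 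This mirrors the induction over the deduction rules sketched for the completeness direction, and it is exactly where the absence of private function symbols is required (they could smuggle $e_1$-names into $e_2$-comparisons); I expect this independence lemma, rather than the bookkeeping, to be the main obstacle, and in particular the treatment of ground-constant store keys and locks is the point that needs care.

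\emph{Filters.} Finally I check that the constructed trace survives $\filter$ and $\hide$. For $\filter$ against $\Ass'=\forall e.\,\Ass^{e}$, the $e_k$-projection of the combined trace is by construction exactly the axiom-respecting trace of the $k$-th run, so each $\Ass^{e_k}$ holds. The three clauses of $\AssSeq$ hold by construction: each run is enclosed by its own $\Init/\Stop$, so every $\Event(e_k)$ lies between $\Init(e_k)$ and $\Stop(e_k)$; both runs terminate; and the two intervals are strictly ordered, hence disjoint, since the whole $e_1$-run precedes the whole $e_2$-run. Projecting away the reserved internal facts with $\hide$ then leaves precisely $t_1^{e_1}\cdot t_2^{e_2}$, as required.
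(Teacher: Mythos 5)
Your skeleton --- replay the $e_1$-run, fire $\Stop(e_1)$, replay the $e_2$-run, fire $\Stop(e_2)$, then check the restrictions --- is the same as the paper's, and your simulation lemma and your final filter check are sound. The genuine gap sits exactly where you flagged it: your ``sequential independence'' step treats the store and the locks as operational multiset-rewriting state that persists across the two runs, and tries to rule out interference by fresh-name disjointness. That argument fails: store keys and lock terms are routinely ground \emph{public} constants (e.g., the log keys $\langle \mathsf{A},\mathsf{Init}\rangle$, $\langle\mathsf{A},\mathsf{Log},\mathsf{S}\rangle$ in the accountable-algorithms model), so the $e_2$-run's lookups and locks collide syntactically with $e_1$-residue; an else-branch lookup that must fail in the second run, or a lock still held from the first run, would then genuinely disable the replay. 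Groundness of $P$ gives you no handle on this, and ``needs care'' cannot be discharged along these lines.

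The reason the theorem nonetheless holds --- and the paper's actual route --- is that in the SAPiC translation \textsf{insert}/\textsf{delete}/\textsf{lookup}/\textsf{lock} are \emph{not} represented as msr state at all: they merely emit actions, and their semantics is enforced by the restrictions in \Ass applied during \filter. The construction relativizes these restrictions per execution, $\Ass' \defeq \forall e.\,\Ass^e$, which is precisely what the ubiquitous $\Event(\eid)$ annotations of Item~\ref{it:trans-event} exist for: a second-run lookup need only be consistent with second-run inserts, so cross-run store/lock interference vanishes by construction. Your own filter paragraph verifies exactly this per-execution property, but you never connect it back to your crux, which it dissolves. The only genuine cross-run interference in the msr state is the adversary's persistent knowledge from the first run, and the paper absorbs it by re-running the soundness induction of the underlying translation (Lemma~10 of~\cite{KK-jcs16}) with the deduction invariant patched by Lemma~\ref{lem:strenghtening}: additional knowledge over disjoint nonces does not change what is deducible. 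Note finally that the direction of that lemma needed for soundness holds \emph{unconditionally}; you import the assumption $\Sign_\mathit{priv}=\emptyset$ into this proof, but that hypothesis is only required for the opposite direction, i.e., for the completeness theorem --- so even if your independence step were repaired, your proof would establish a strictly weaker statement than the one claimed.
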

\begin{proof}
    Observe that $t_1^{e_1}$, as the modification of the initial msr
    permits choosing $e_1$, which carries over using the $\state$
    variables. (This can be shown via induction, see
    \cite[Lemma~10]{KK-jcs16}.)
    As any fact (event) that is emitted is marked additionally with
    $\Event(v)$ for $v$ bound to the variable position of $e_1$ in the
    \state-fact, $\Event(v)$ follows any $t_1^i\in t_1$.

    The additional msr $[\Stop(e_1)]\msrewrite{\Stop(e_1)}[]$ permits
    emitting the action $\Stop(e_1)$ so that it appears at position
    $|t_1^{e_1}|+1$, as by definition of the translation, $\Init(e_1)$
    must have appeared before any other (protocol) action, and hence
    the fact $\Stop(e_1)$ is in the state after the transition
    producing the last element of $t_1^{e+1}$.
    This ensures the two first conditions of
    $\AssSeq$. With the same argument, the trace in $t_2^{e_2}$ can be
    reproduced, terminated with $\Stop(e_2)$.
    The argument is literally the same as for the soundness of the
    translation
    (\cite[Theorem~1]{KK-jcs16}, in particular \cite[Lemma~10]{KK-jcs16} for the
    core argument), however, the adversary has additional knowledge
    facts. 
    Hence we only need to modify the cases in \cite[ Lemma~10 ]{KK-jcs16}
    pertaining to message deduction.
    In these cases, we modify every reference to \cite[Lemma~8]{KK-jcs16} (the
    second part, more precisely) to 
    refer to 
    \cite[Lemma~7]{KK-jcs16} and
    Lemma~\ref{lem:strenghtening} below, showing that for
    $\sigma=\sigma_1\cap \sigma_2$, where $\sigma_1$ is the knowledge
    from the first run,
    and 
    $\tilde n = \tilde n_1 \cap \tilde n_2$,
    where
    $\tilde n_1$ are the nonces chosen in the first run,
    $\nu \tilde n , \sigma \vDash t$
    if
    $\nu \tilde n_2, \sigma_2 \vDash t$,
    as long as the nonces in $t$, in $\tilde n_1$
    and in $\sigma_1$ are disjunct.

    The resulting trace ensures the third condition, as well as (with
    the same argument as before) the other conditions of $\AssSeq$.
\end{proof}

For completeness, we need to introduce an additional caveat. The
adversary can use knowledge from previous protocol runs. As nonces are
fresh, this can only help if no private function symbols are used.
We therefore state the completeness result for the case where 
$\Sign_\mathit{priv}=\emptyset$.

Luckily, this additional knowledge is not useful, as the following
lemma shows for message deduction in SAPiC.
In combination with \cite[ Lemma~8 ]{KK-jcs16},
which relates message deduction in SAPiC to message deduction in
Tamarin, this can be used to argue about the modified translation.

\begin{lemma}\label{lem:strenghtening}
    Let $m$ be ground and
    $\sigma$ and $\sigma'$, as well as,
    $\mathfun{nonces}(\sigma)$ and $t$
    be
    disjunct. Then
    $\nu \tilde n. \sigma' \vdash m \implies
    \nu \tilde n. \sigma \cap \sigma' \vdash$. The opposite direction
    holds, if $t$ (or any $t' =_E t$) contains no private function symbols.
\end{lemma}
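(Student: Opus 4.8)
The plan is to read the combined adversary frame as $\sigma\cup\sigma'$ (the second run's knowledge together with the knowledge $\sigma$ inherited from the first run) and to prove the two implications separately. The forward implication $\nu\tilde n.\sigma'\vdash m \Rightarrow \nu\tilde n.(\sigma\cup\sigma')\vdash m$ is pure monotonicity: any derivation from $\sigma'$ is already a derivation from $\sigma\cup\sigma'$, since the only rule that inspects the frame, \textsc{DFrame}, merely becomes applicable for \emph{more} variables as the domain grows. No hypothesis is needed here. The substance is the reverse implication, which formalises that the inherited, nonce-disjoint knowledge $\sigma$ cannot help the adversary produce $m$.

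For the reverse implication I would induct on the derivation of $\nu\tilde n.(\sigma\cup\sigma')\vdash m$ (Figure~\ref{fig:deduc}), under the assumptions that $m$ (or some $m'=_E m$) contains no symbol from $\Sign_\mathit{priv}$ and that $\mathfun{nonces}(\sigma)\cap\mathfun{nonces}(m)=\emptyset$. The rule \textsc{Dname} is immediate, as it never touches the frame. For \textsc{DAppl}, $m=f(t_1,\dots,t_k)$ with $f\notin\Sign_\mathit{priv}$; each $t_i$ is a subterm of $m$, hence again private-function-free and nonce-disjoint from $\sigma$, so the induction hypothesis applies to every premise and re-applying \textsc{DAppl} over $\sigma'$ reconstructs $m$. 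The interesting base case is \textsc{DFrame} with $x\in\dom(\sigma)$: here $m=x\sigma\in\mathrm{ran}(\sigma)$, so $\mathfun{nonces}(m)\subseteq\mathfun{nonces}(\sigma)$, which together with disjointness forces $\mathfun{nonces}(m)=\emptyset$; combined with the absence of private symbols, $m$ is built solely from public names and non-private functions and is therefore deducible from the empty frame, and hence from $\sigma'$, by \textsc{Dname} and \textsc{DAppl} alone. The case $x\in\dom(\sigma')$ is trivial.

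The one genuinely delicate case, and the main obstacle, is \textsc{DEq}, where $m=_E m''$ and the premise derives $m''$: the intermediate term $m''$ may carry first-run nonces or private symbols even though $m$ does not (e.g.\ $m''=\fst(\langle m,n\rangle)$ with a first-run nonce $n$), so the hypothesis does not apply to $m''$ and a naive structural induction breaks. I would resolve this by normalising the derivation before arguing, i.e.\ working with the $E$-normal form of $m$ and invoking the locality property of deduction for the subterm-convergent theories SAPiC supports: in a normal derivation every occurring term is, modulo $E$, a subterm of the frame or of the goal. Under locality each intermediate term is either a subterm of $m$ (private-function-free and nonce-disjoint from $\sigma$, hence covered by the cases above) or a subterm of $\sigma$ that must be decomposed, and the key observation is that a subterm of $\sigma$ reaching the nonce-free, private-free goal can only contribute public material, which the adversary already deduces from scratch. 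This normalisation is precisely the part that is technical and dependent on the translation, so I expect it to carry the weight of the proof; the remaining bookkeeping then follows the three easy rules. Finally, I would stress that the private-function caveat is essential rather than an artefact: a first-run output such as a leaked secret key $\sk(A)$ lies in $\mathrm{ran}(\sigma)$, contains no nonces at all, and genuinely extends the adversary's reach in the second run, so the claim provably fails as soon as $m$ is permitted to mention private symbols.
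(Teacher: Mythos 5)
Your skeleton is largely the paper's: the proof there is a structural induction over the derivation with a case split on the four rules of Figure~\ref{fig:deduc}, where \textsc{Dname} and \textsc{DFrame} are trivial (your observation that nonce-disjointness forces a \textsc{DFrame} entry from $\dom(\sigma)$ to be nonce-free, and hence --- absent private symbols --- deducible from scratch, is exactly what ``trivial'' hides) and \textsc{DAppl} follows by the induction hypothesis. The divergence, and the gap, sits precisely at the case you yourself identify as carrying the weight: \textsc{DEq}. The paper settles it with a single unconditional observation: by definition, $=_\ET$ is closed under bijective renaming of names. Since the first-run nonces occur neither in the goal $m$ nor in $\sigma'$, they can be bijectively renamed throughout the offending intermediate term and its derivation, leaving $m$ and the second-run frame untouched and reducing the renamed $\sigma$-entries to the same nonce-free analysis as your \textsc{DFrame} case. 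No normal forms and no locality are required, and the argument applies to every equational theory the calculus admits.

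Your substitute --- normalising the derivation and invoking locality of deduction ``for the subterm-convergent theories SAPiC supports'' --- is where the proposal fails as a proof of the stated lemma. SAPiC/Tamarin do not restrict to subterm-convergent theories: the framework admits arbitrary equational theories in the sense of Section~\ref{sec:calculus}, and in practice Tamarin handles, e.g., Diffie--Hellman exponentiation, which has no subterm-convergent presentation. Locality (in the Abadi--Cortier style) is a nontrivial theorem available only for a fragment, so your proof establishes the lemma for strictly fewer theories than the completeness result it serves, whereas the renaming property costs nothing and holds for all of them. A smaller remark on your closing counterexample: it is correct that the private-symbol caveat is essential, but the sharper failure mode is not $m$ mentioning private symbols --- it is $\sigma$ containing them. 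If $\sigma$ holds a leaked $\sk(\text{`a'})$ (nonce-free, so all disjointness hypotheses are met) and $\sigma'$ holds a second-run ciphertext under the matching public key, then a private-symbol-free $m$ (the plaintext nonce) is deducible from the joint frame but not from $\sigma'$ alone; this shows the caveat as literally stated (on $t$ only) is too weak, and is exactly why the completeness theorem assumes $\Sign_\mathit{priv}=\emptyset$ outright rather than merely restricting $t$.
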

\begin{proof}
    Structural induction over the deviation of $m$. Case distinction
    over the four rules. \textsc{DName}, \textsc{DFrame} are
    trivial. \textsc{DAppl}: by induction hypothesis. \textsc{DEq}:
    $\ET$ is closed under renaming.
\end{proof}

\begin{theorem}[completeness of sequential self-composition]
    Let $\Sign_\mathit{priv}=\emptyset$.
    Given a well-formed ground process $P$,
    if $t\in\hide(\filter(\tracesmsr(\sem{P}_a)))$,
    then, 
    there exist
    $t_1,t_2\in\hide(\filter(\tracesmsr(\sem{P})))$,
    (with $t_2$ possibly empty)
    such that for some fresh names $e_1,e_2\in\FN$,
    $t_1^{e_1}\cdot t_2^{e_2} \cdot t' = t$.
\end{theorem}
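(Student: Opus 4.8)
The plan is to build on the completeness direction of the base SAPiC translation~\cite{KK-jcs16} (with its liveness extension~\cite{BaDrKr-2016-liveness} used as a black box), confining all genuinely new reasoning to the message-deduction cases, just as the soundness proof above did. Fix $t\in\hide(\filter(\tracesmsr(\sem{P}_a)))$. Passing $\filter$ means $t$ satisfies $\AssSeq$, so every execution identifier $\eid$ occurring in $t$ is opened by an $\Init(\eid)$ and closed by a $\Stop(\eid)$, every $\Event(\eid)$ (and hence every visible event, which the annotated translation tags with such a fact) lies strictly between them, and distinct $\Init/\Stop$ pairs delimit pairwise disjoint intervals. Since $\hide$ retains the $\Event(\eid)$ markers while removing $\Init$, $\Stop$ and the standard reserved facts, the visible events of $t$ are partitioned by identifier into blocks that, by disjointness, are totally ordered in time. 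I would take $e_1$ and $e_2$ to be the identifiers of the first two blocks (in order of their $\Init$ positions, with the $e_2$-block absent when only one execution occurs) and let $t'$ collect all later blocks, so that $t$ factors as the $e_1$-block, then the $e_2$-block, then $t'$.

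I would then recover $t_1$ by deleting the $\Event(e_1)$ markers from the $e_1$-block, and $t_2$ analogously; re-tagging gives $t_1^{e_1}$ and $t_2^{e_2}$ equal to the two blocks, so $t_1^{e_1}\cdot t_2^{e_2}\cdot t'=t$ is immediate once $t_1,t_2$ are shown to be original traces. To see $t_1\in\hide(\filter(\tracesmsr(\sem{P})))$ I would replay the $\sem{P}_a$-derivation witnessing $t$, keeping only the rule instances carrying identifier $e_1$. By the three modifications defining $\sem{P}_a$, these instances are exactly the rules of $\sem{P}$ with an inert $\eid$-component added to the $\Init$ rule, one additional $\Stop$ rule, and one extra $\Event(\eid)$ action per rule; stripping these annotations turns the restricted derivation into a derivation of $\sem{P}$. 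The state-fact invariants and the $\filter$/$\hide$ bookkeeping then transfer verbatim from~\cite{KK-jcs16} (in particular its Lemma~10), so $t_1$ is a genuine hidden trace of the original translation, and symmetrically for $t_2$.

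The one coupling that this transfer does not cover is the adversary's shared knowledge: the $\K$-facts deduced within the $e_2$-block (and within $t'$) are still present in the frame while the $e_1$-derivation runs, so the restricted derivation might invoke message deductions that the single-run translation cannot reproduce. This is the main obstacle, and it is precisely where the hypothesis $\Sign_\mathit{priv}=\emptyset$ enters. Because $P$ is ground and each bound name is instantiated freshly at reduction time, the nonces generated in the $e_2$-block are disjoint from all terms appearing in the $e_1$-block. Lemma~\ref{lem:strenghtening} then shows that, with no private function symbols, discarding the knowledge accumulated outside the $e_1$-block does not diminish what can be deduced about $e_1$-terms; combined with \cite[Lemma~8]{KK-jcs16}, which bridges SAPiC and Tamarin message deduction, this lets me replace, in every message-deduction case of the replayed derivation, the appeal to the full shared frame by an appeal to the $e_1$-local frame. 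Each such step therefore remains valid in $\sem{P}$, completing the proof that $t_1$ (and, symmetrically, $t_2$) lies in $\hide(\filter(\tracesmsr(\sem{P})))$ and yielding the claimed factorisation.
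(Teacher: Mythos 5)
Your proposal matches the paper's proof essentially step for step: the same decomposition of $t$ into per-execution segments enforced by the $\AssSeq$ restrictions, the same transfer of each segment back to a trace of $\sem{P}$ via the completeness machinery of \cite{KK-jcs16} (its Lemmas~10--12), and the same key move of invoking Lemma~\ref{lem:strenghtening} together with \cite[Lemmas~7,~8]{KK-jcs16} under the hypothesis $\Sign_\mathit{priv}=\emptyset$ to show the other run's accumulated knowledge and disjoint nonces cannot enable any deduction in the restricted derivation. The only cosmetic difference is that the paper splits $t$ positionally at the earliest $\Stop(e_1)$ and $\Stop(e_2)$ actions (after normalizing via \cite[Lemma~11]{KK-jcs16}) and treats the trace containing only deduction actions $K(\cdot)$ as an explicit degenerate case, which your identifier-based block partition and your remark that the $e_2$-block may be absent cover implicitly.
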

\begin{proof}
    We can split $t$ into three parts, the first two terminated with
    the earliest action $\Stop(e_1)$ for some name $e_1$, or
    $\Stop(e_2)$ respectively.
    By $\AssSeq$, either $t$ only contains deduction actions
    $K(\cdot)$, or at least one action $\Stop(e_1)$.
    We apply \cite[Lemma~11]{KK-jcs16} to obtain a normalized execution
    that reproduces $t$. 
    In the first case, we
    set both the second and third part to the empty sequence.
    Due to the fact that $\Stop$ is now considered a reserved event, 
    we can literally apply \cite[Lemma~12]{KK-jcs16}.
    As the translation was modified to contain $\Event(e_1)$ in every
    step with an action, we have that $t^{e_1}=t$. 

    In the second case, there is an action $\Stop(e_1)$.
    With the previous argument, 
    we can set $t_1^{e_1}$ to the prefix of $t$ up to, and including,
    $\Stop(e_1)$.
    By $\AssSeq$ and the modification to the translation
    (Item~\ref{it:trans-event}),
    for any action that follows suit, the execution
    identifier needs to be different, i.e., it is annotated with
    $\Event(e_2)$ and $e_1 \neq e_2$. 
    By modification of the translation (Item~\ref{it:init}),
    there is a corresponding $\Init$ event, and, again by 
    $\AssSeq$, it is followed by a $\Stop(e_2)$ event.
    We set $t_2^{e_2}$ to be the remaining suffix of $t$ up to, and
    including, this $\Stop(e_2)$ event.
    We can apply \cite[Lemma~12]{KK-jcs16},
    but need to modify the
    cases in \cite[ Lemma~10 ]{KK-jcs16}
    pertaining to message deduction.
    In these cases, we modify every reference to \cite[Lemma~8]{KK-jcs16} (the
    second part, more precisely) to 
    refer to 
    \cite[Lemma~7]{KK-jcs16} and
    Lemma~\ref{lem:strenghtening}, showing that for
    $\sigma=\sigma_1\cap \sigma_2$, where $\sigma_1$ is the knowledge
    from the first run,
    and 
    $\tilde n = \tilde n_1 \cap \tilde n_2$,
    where
    $\tilde n_1$ are the nonces chosen in the first run,
    $\nu \tilde n , \sigma \vDash t$
    iff
    $\nu \tilde n_2, \sigma_2 \vDash t$,
    as long as the nonces in $t$, in $\tilde n_1$
    and in $\sigma_1$ are disjunct and since $t$ cannot contain
    a private function symbol.
\end{proof}

% \subsubsection*{Examples from causation literature}
\subsection{Examples from causation
literature}\label{sec:ex-causation}

As our accountability definition is rooted in causation, we chose to
model three examples from the literature and verify if our
accountability judgement corresponds to the results that philosophers
consider intuitive.

The first two examples originate from Lewis' seminal work on
counterfactual reasoning~\cite{Lewis1973-LEWC}
and consider different kinds of preemption.
The first, late preemption as formulated by Hitchcock~\cite[p.
526]{Hitchcock2007-HITPPA},
considers a process that would cause an event (e.g., a violation),
but is interrupted by a second processes.
An agent $A$ poisons an agent $C$, but before the poison takes effect,
$B$ shoots $C$.
Na\"{i}ve counterfactual reasoning fails because had $B$ not shot
$C$, she would still have died (from poisoning).
\begin{full}
We represent the actions as inserts into the store. Depending on the
value in the store $C$, proceeds on one out of four
different control-flow paths. The verdict function that maps the
case where the poisoning takes effect, i.e., the `poisoning' message
is received, to $A$, and the cases where a `shoot' message is received
to $B$ provides the protocol with accountability for the property that
$C$ is neither shot, nor poisoned.
\end{full}
The second, early preemption, has Suzy and Billie throwing stones at
a bottle, with Suzy going first. Billie's throw, despite being
accurate, misses the bottle if Suzy's throw shatters the bottle first.
\begin{full}
The modelling is straight-forward and captures both possible orders.
\end{full}
Again, we can capture that the message which arrives (i.e., who threw
the first stone) determines the cause the bottle shattered (see
Figure~\ref{fig:early-preemption}).
\begin{conf}
The third example is a slight modification of McLaughlin's case
about a desert traveller $C$ with a bottle of water and two
enemies~\cite[p.155]{edsjsr.10.2307.132848419250101} and also
covers late preemption.
\end{conf}

\begin{full}
The third example is a slight modification of McLaughlin's case
about a desert traveller $C$ with a bottle of water and two enemies
~\cite[p.155]{edsjsr.10.2307.132848419250101}.
One of his enemies, $A$, contaminates the water with poison, which would
result in death by poisoning, but the other enemy $B$ shoots the
bottle, resulting in death by dehydration. 
%
% We again use non-deterministic choice to capture all give possible
% courses of action, and signed messages to represent $A$ and $B$'s action.
% (See
% Figure~\ref{fig:thirst-poisoning}).
%
If $A$ poisons the water first and $C$ drinks it, then no matter what
$B$ does, $C$ dies of poisoning, whereas if $B$ shoots the bottle
first, then it prevents other actions and $C$ dies of dehydration.
However, if $C$ feels thirsty and drinks water beforehand, then he
will be healthy. Again, we use the store to reflect the state of the
canteen (filled, contaminated or broken).
\end{full}

\begin{full}
\begin{figure}%{{{fig
\centering
\begin{lstlisting}
A$\defeq$ 0 //Poisoner -- for default, no action
B$\defeq$ 0 //Bottle shooter -- dito
C$\defeq$ //Desert traveller -- victim
(in(sig_p); //if the canteen is poisoned..
 (in(sig_s); // .. and then shoot, $C$ will die of thirst
  event CanteenShoot();event Control('0','3');event Verdict()))
 (event Poisoned();  // .. if it is not shoot, he is poisoned
     (event Control('0','1'); event Verdict()) // shot could arrive later
    +(in(sig_s); event CanteenShoot(); event Control('0','2');event Verdict())))
+ // if the canteen is not poisoned, it could be shoot, or not
(in(sig_s); event CanteenShoot(); event Control('0','4');event Verdict())
+
(event Control('0','5');event Healthy(); event Verdict()) 

new p; new s; (A || B || C ||
 !(in ('c',<'corrupt',x>); event Corrupted(x); 
     !((if x='A' then out(sign(sk('A'), p)))
     ||(if x='B' then out(sign(sk('B'), s))))))
\end{lstlisting}
\caption{Thirst or Poisoning.}
\label{fig:thirst-poisoning}
\end{figure} %}}}
\end{full}

\begin{figure}%{{{fig
\centering
\begin{lstlisting}
S$\defeq$ 0 //Suzy -- by default, no throw
B$\defeq$ 0 //Billy -- dito
C$\defeq$ //Bottle
    (in(a); event Shot_S(); event Shattered())
  + (in(b); event Shot_B(); event Shattered())

new a; new b; (S || B || C ||
 !(in ('c',<'corrupt',x>); event Corrupted(x); 
     !( (if x='S' then out(a)) || (if x='B' then out(b)))))
\end{lstlisting}
\caption{Early preemption.}
\label{fig:early-preemption}
\end{figure} %}}}

It may come as a surprise that these examples can be treated without
much trouble, despite some of them being discussed for decades, and
despite our relatively simple notion of causation.
The reason is that our calculus 
forces the modeller to describe the control-flow,
which can be argued to be an essential part of the correct modelling of
puzzling cases based on preemption.
The relation $r_c$  captures the intuition that 
counterfactuals can be ignored if they do not represent the actual
course of events, e.g., the control-flow.
We do not claim that our notion of causation solves the matter of
causation --- it is not clear what control-flow means outside of
programming languages, for example ---
but that it, expectedly, works well on examples that involve preemption.

\end{full}
\begin{full}
\subsection{Relation to Bruni et\,al.}\label{sec:giustolisi}

According to Bruni, Giustolisi and Sch{\"{u}}rmann~\cite{DBLP:conf/isw/BruniGS17}
% According to their definition,
a test is providing accountability, if
four conditions hold: verifiability-soundness,
verifiability-completeness, test-soundness and test-completeness.
These can be compared to our verification conditions for $r_w$ (other
relations cannot be expressed). In the simplified case where
verdicts can only contain singletons (joint accountability cannot be
expressed),
verifiability-soundness is equivalent to 
the $\impliedby$-direction of $\Verifiability$,
verifiability-completeness is equivalent to $\Minimality$ (the only strict subset of a singleton is the empty set),
test-completeness is equivalent to $\Uniqueness$,
and
test-soundness is implied by $\Verifiability$ but not vice
versa (not even the $\implies$-direction) .
We remark that there is no equivalent to $\SufficiencyB$. Indeed, 
a protocol that does not permit violation ($\varphi$ is always
true), could fulfill all four conditions and blame still blame
a party against any intuition, hence this condition is truly missing.
% (a) Protocol does not permit violations, i.e., goal = true.
% (b) vt(.,.) is always true
% (c) principle A never says "Hello" when honest
% (d) at_A is true iff A says Hello

% By (a) and (b), protocol is verifiable.
% By (a) and (b), any behaviour is goal-convergent, so with Definition 5,
% it works. But if I consider the definition of goal-convergent in Table1,
% then:
% By (c) and (d), acc-soundness holds (vt:false is never true)
% By (c) and (d)  acc-completeness holds (at:true can only be triggered
% using the indicted party).
%
Furthermore, test-soundness is too weak, i.e., it is possible to have
violations that go undetected (contrary to the intuition behind
verifiability).

\end{full}
% \end{full}
\end{document}
% vim:fdm=marker